\documentclass[11pt]{article}

\usepackage[T1]{fontenc}
\usepackage[utf8]{inputenc}
\usepackage{lmodern}
\usepackage{microtype}
\usepackage[margin=0.92in]{geometry}
\usepackage{amsmath,amssymb,amsthm,mathtools,bm}
\usepackage{booktabs,tabularx,longtable,array,multirow}
\usepackage{enumitem}
\usepackage{xcolor}
\usepackage{graphicx}
\usepackage{caption}
\usepackage{subcaption}
\usepackage{tikz}
\usetikzlibrary{arrows.meta,positioning,calc,fit,backgrounds,shapes.geometric,matrix}
\usepackage{float}
\usepackage{lscape}
\usepackage[numbers,sort&compress]{natbib}
\usepackage{hyperref}
\usepackage[nameinlink,capitalize,noabbrev]{cleveref}
\usepackage{fancyhdr}
\usepackage{etoolbox}
\usepackage{aliascnt}

\definecolor{stqpblue}{RGB}{27,92,141}
\definecolor{stqpgreen}{RGB}{43,122,91}
\definecolor{stqporange}{RGB}{202,112,39}
\definecolor{stqpgray}{RGB}{245,247,249}
\definecolor{stqpdark}{RGB}{43,48,54}
\definecolor{stqpred}{RGB}{157,52,52}

\hypersetup{
  colorlinks=true,
  linkcolor=stqpblue,
  citecolor=stqpgreen,
  urlcolor=stqporange,
  pdftitle={Standard Quadratic Formulations of Many NP Problems},
  pdfauthor={Mohammad-Ali Miri, Babak Emami, PoJen Wang}
}

\floatstyle{ruled}
\newfloat{algorithm}{htbp}{loa}
\floatname{algorithm}{Algorithm}
\crefname{algorithm}{Algorithm}{Algorithms}
\Crefname{algorithm}{Algorithm}{Algorithms}

\setlist[itemize]{topsep=3pt,itemsep=2pt,parsep=0pt}
\setlist[enumerate]{topsep=3pt,itemsep=2pt,parsep=0pt}
\renewcommand{\arraystretch}{1.14}
\newcolumntype{L}[1]{>{\raggedright\arraybackslash}p{#1}}

\newtheorem{theorem}{Theorem}[section]
\newcommand{\sharedthm}[2]{%
  \newaliascnt{#1}{theorem}\newtheorem{#1}[#1]{#2}\aliascntresetthe{#1}%
  \crefname{#1}{#2}{#2s}\Crefname{#1}{#2}{#2s}}
\sharedthm{proposition}{Proposition}
\sharedthm{lemma}{Lemma}
\sharedthm{corollary}{Corollary}
\theoremstyle{definition}
\sharedthm{definition}{Definition}
\sharedthm{example}{Example}
\sharedthm{construction}{Construction}
\theoremstyle{remark}
\sharedthm{remark}{Remark}
\sharedthm{caution}{Caution}

\newcommand{\R}{\mathbb{R}}

\newcommand{\one}{\bm{1}}

\newcommand{\StQP}{\mathrm{StQP}}

\newcommand{\norm}[1]{\left\lVert #1\right\rVert}
\newcommand{\abs}[1]{\left\lvert #1\right\rvert}
\newcommand{\set}[1]{\left\{#1\right\}}

\newcommand{\Qhat}{\widehat Q}

\newenvironment{thesisbox}{%
  \begin{center}\begin{minipage}{0.93\linewidth}
  \color{stqpdark}\hrule\vspace{0.65em}\noindent
}{%
  \par\vspace{0.65em}\hrule\end{minipage}\end{center}
}

\title{\bfseries Standard Quadratic Formulations of Many {NP} Problems:\\[3pt]
A Simplex-Based Compilation Framework for Combinatorial Optimization}
\author{Mohammad-Ali Miri$^{*}$, Babak Emami, PoJen Wang\\
\small Quantum Computing Inc (QCi), 5 Marine View Plaza, Hoboken, NJ 07030, USA\\
\small \texttt{amiri@quantumcomputinginc.com}}
\date{September 29, 2026}

\begin{document}
\maketitle

\begin{abstract}
The standard quadratic program (StQP) minimizes a quadratic form over nonnegative variables that sum to one. We compose classical graph reductions with regularized Motzkin--Straus clique formulations to express discrete optimization problems in this continuous domain. The graph matrix has diagonal entries $\tau$, zeros on edges, and ones on nonedges. For $0<\tau<1$, its minimum is $\tau/\omega(G)$, where $\omega(G)$ is the clique number. Its strict local minimizers are precisely the uniform distributions on maximal cliques, and its global minimizers encode maximum cliques. At $\tau=1/2$, integer scaling gives coefficients in $\{0,1,2\}$ and minimum $1/\omega(G)$, yielding an NP-complete StQP threshold problem with a restricted coefficient alphabet. We give explicit formulations for satisfiability, coloring, Hamiltonian cycles, independent set, vertex cover, set packing, three-dimensional matching, and graph isomorphism. A regularized weighted clique formulation combined with local-state compatibility graphs gives an exact compiler for finite-domain factor models specified by complete local tables, including QUBO, with at most four simplex coordinates per binary pair factor. The catalog covers Karp's 21 problems: twelve use direct graph formulations, and nine use factor-state formulations, including six obtained through binary-linear feasibility. For each route we record dimensions, coefficient structure, and recovery rules. We analyze interaction count, coefficient range, objective separation, perturbation tolerance, support recovery, and decoding overhead. The separation bounds quantify the effects of clique size, factor weights, and offsets. In the complete factor-state construction, every assignment, including each suboptimal assignment, is a strict local minimum.
\end{abstract}

\noindent\textbf{Keywords:} standard quadratic programming; simplex; NP-completeness; maximum clique; constraint satisfaction; QUBO; Ising model; combinatorial optimization; analog optimization; simplex quadratic annealer.

\clearpage
\begingroup
\small
\tableofcontents
\endgroup
\clearpage

\section{Introduction}
\label{sec:intro}

The Ising Hamiltonian and its equivalent quadratic unconstrained binary optimization (QUBO) form serve as common compilation targets for optimization algorithms and hardware. A large body of work maps logical, graph, packing, covering, routing, assignment, and scheduling problems to pairwise binary interactions, and those mappings have helped define the interface of quantum annealers, coherent Ising machines, simulated bifurcation systems, and related platforms \citep{Barahona1982,KochenbergerEtAl2014,Lucas2014,GloverKochenbergerDu2019,MarandiEtAl2014,McMahonEtAl2016,goto-simulated-bifurcation-2019,mohseni-mcmahon-byrnes-ising-machines-2022,tasseff-emerging-potential-qa-2024}. Lucas's catalog of Ising formulations for many NP problems, including Karp's 21 classical NP-complete problems, is an especially influential expression of this viewpoint \citep{Karp1972,Lucas2014}.

That interface is binary by construction. For a device whose native state is a normalized, nonnegative vector, such as an intensity or population profile, a simplex formulation may provide a more direct interface. Converting to a binary representation can introduce discretization, one-hot constraints, or consistency penalties, depending on the encoding; these costs are not inherent to every Ising model. This paper develops a formulation catalog for the simplex and compares its costs with those of binary alternatives.

The continuous model considered here is
\begin{equation}
  \nu(Q):=\min_{x\in\Delta_N}x^{\mathsf T}Qx,
  \qquad
  \Delta_N:=\set{x\in\R^N:x\ge 0,\ \one^{\mathsf T}x=1},
  \label{eq:stqp}
\end{equation}
where $Q\in\mathbb S^N$. This is the standard quadratic program (StQP) \citep{Bomze1998}. The state is continuous but compact, nonnegative, and normalized. A coordinate may be interpreted as mass, probability, population, intensity, or allocation, and $Q_{uv}$ measures a pairwise interaction between coordinates $u$ and $v$.

At first sight, replacing binary variables by simplex variables seems poorly suited to exact combinatorial encoding. Normalization is what makes it work: it turns positive cross terms into incompatibility penalties. If two coordinates represent mutually inconsistent local choices, assigning mass to both incurs a quadratic cost. If compatible choices interact with coefficient zero, global minimization seeks a mutually compatible support. A diagonal term controls the concentration of mass on that support. For a carefully chosen three-level matrix, the optimal support is a maximum clique and the optimal vector is its barycenter.

In the formulations studied here, discrete choices are recovered from the support of a normalized continuous state. The matrix specifies incompatible pairs, while diagonal coefficients determine the optimum on each compatible support. We compare the dimension, connectivity, precision, and decoding costs of this representation with those of problem-specific binary formulations.

\subsection{Scope and contributions}

The regularized unweighted and weighted clique formulations provide the mathematical foundations \citep{Bomze1997,HungerfordRinaldi2019,BomzePelilloStix2000}. We state and prove them in a common minimization normalization so that the matrix entries, objective values, and decoding rules can be used throughout the catalog. The paper develops four aspects of this formulation framework.

\begin{enumerate}[label=\textbf{C\arabic*.},leftmargin=2.3em]
\item \textbf{Explicit formulations and a complete catalog.} We compose classical compatibility-graph constructions with regularized clique StQP formulations. For Karp's 21 problems, twelve routes use direct unweighted or weighted graphs and nine use factor-state formulations, six of them through binary-linear feasibility. Each entry specifies its route, dimension or encoding-bit count, decision or recovery rule, and principal representation cost.

\item \textbf{A common factor-state formulation and its landscape.} We combine positive local-state weights, consistency graphs, and a weighted clique matrix with binary off-diagonal entries. For complete rational factor tables the construction is polynomial in their explicit encoding length; a binary model with $n_u$ unary factors and $m$ pair factors uses at most $2n_u+4m$ simplex coordinates. We give an exact recovery map and show that every complete assignment becomes a strict local minimum (\Cref{cor:allassignments}).

\item \textbf{Objective separation and decoding guarantees.} We record the NP-complete threshold problem associated with the coefficient alphabet $\{0,1,2\}$ and derive graph decision-value separations, perturbation tolerances, and factor-model separation bounds. We analyze how offsets and attainable energy spacings enter these bounds, and connect objective accuracy to a specified decoding procedure.

\item \textbf{Representation-cost comparison.} We compare simplex dimension, interaction count, coefficient range, objective resolution, support recovery, and decoding overhead with strong problem-specific binary formulations. These quantities define the costs that a computational evaluation must measure.
\end{enumerate}

\subsection{Relation to prior work}

The reduction of NP decision problems to StQP follows by composing the classical reduction to CLIQUE \citep{Cook1971,Karp1972} with the Motzkin--Straus formulation and its regularizations \citep{MotzkinStraus1965,Bomze1997,GibbonsEtAl1997,HungerfordRinaldi2019}. StQP itself is a well-established class in global optimization \citep{Bomze1998,MurtyKabadi1987,BomzeDeKlerk2002,BomzeSchachingerUllrich2018}. The direct SAT, coloring, Hamiltonian, and packing routes use classical reductions and compatibility constructions; composing them with the clique StQP supplies explicit matrices and thresholds rather than new representability results.

Compatibility and microstructure graphs are classical in constraint processing \citep{Jegou1993,Dechter2003,RossiVanBeekWalsh2006}. Association graphs underlie matching and subgraph isomorphism \citep{barrow-burstall-1976,levi-mcs-1973}, and dual microstructures yield the allowed-tuple construction for arbitrary-arity constraints \citep{elmouelhi-microstructures-arity-2013}. Reductions of finite-domain maximum a posteriori estimation to maximum-weight independent set also predate this work \citep{SanghaviShahWillsky2007}. The factor-state compiler combines this type of consistency construction with a weighted clique StQP. Weighted and regularized clique formulations have substantial precedents \citep{BomzePelilloStix2000,Busygin2006,pardalos-prokopyev-busygin-2006,hager-hungerford-graphqp-2015,makovenko-butenko-pardalos-regularized-2025,scozzari-tardella-clique-algorithm-2008,tardella-ftlp-2004}; the normalization comparison after \Cref{thm:weightedlandscape} specifies the relation to \citet{BomzePelilloStix2000}.

Two formulation catalogs provide useful points of comparison. \citet{Lucas2014} uses binary Ising interactions, whereas \citet{cai-oscillator-np-2025} develops first-order and multiphase Potts formulations and a three-state oscillator optimizer. The present catalog targets a quadratic objective over one globally normalized, nonnegative state. Its emphasis is the explicit matrix and decoding description, together with dimension, coefficient, separation, and landscape costs; it does not introduce an oscillator architecture or establish solver performance.

Relaxation labeling provides another relevant comparison. Its per-object label distributions have separate normalization constraints, with quadratic compatibility objectives and well-studied dynamics \citep{Pelillo1997Relaxation}. Such products of simplexes differ from the single simplex used here. \Cref{subsec:alternative-lifts} compares these feasible sets and explains why their local-minimum structures need not agree.

\subsection{Computational scope}

An exact formulation does not imply an efficient solution method. Unless $\mathrm P=\mathrm{NP}$, no polynomial-time algorithm can solve every encoded instance. Nor does a compact coefficient alphabet guarantee a large physical energy gap, robust analog behavior, or quantum advantage. The local-minimum correspondences characterize the optimization landscape; they do not determine the convergence rate or success probability of a solver.

We use \emph{StQP} for the mathematical optimization model. When discussing a hypothetical hardware platform, we use \emph{simplex quadratic optimizer/annealer} and abbreviate it as \emph{StQA}. Nothing in the formulations assumes that annealing is the only or best physical solution mechanism.

\subsection{Organization}

\Cref{sec:prelim} defines exactness and representation metrics. \Cref{sec:graphcompiler} develops the graph compiler and its complexity and precision consequences. \Cref{sec:csp} gives direct compatibility formulations. \Cref{sec:weightedcompiler} develops the weighted factor-state and QUBO compilers. \Cref{sec:packing} treats graph selection, packing, and exact cover. \Cref{sec:bilp} gives a generic binary-linear route. \Cref{sec:karp} catalogs Karp's 21 problems. \Cref{sec:cost} compares representation costs with QUBO/Ising. \Cref{sec:hardware} states a hardware-neutral StQA specification and evaluation metrics. \Cref{sec:limitations} collects the mathematical, design, and empirical limitations, and \Cref{sec:conclusion} concludes. The appendices give compilation procedures (\Cref{app:pseudocode}), decoding and purification (\Cref{app:decoding}), explicit formulations for selected catalog entries (\Cref{app:formulations}), the copositive connection (\Cref{app:copositive}), and notation (\Cref{app:notation}).

\section{Preliminaries and formulation standards}
\label{sec:prelim}

\subsection{The standard quadratic program}

For $Q\in\mathbb S^N$, define
\begin{equation}
  \StQP(Q):\qquad \nu(Q)=\min_{x\in\Delta_N}x^{\mathsf T}Qx.
  \label{eq:stqpdef}
\end{equation}
The simplex $\Delta_N$ has $N$ vertices and affine dimension $N-1$. We use the convention that the superscript or subscript records the number of coordinates, not the intrinsic dimension. We write $\mathbb S^N$ for the space of real symmetric
$N\times N$ matrices, $\mathbf 1$ for the all-ones vector, $I_N$ for the identity matrix, and $e_i$ for the $i$th coordinate vector. We use $[n]=\{1,\ldots,n\}$. For $x\in\Delta_N$, its support is $\operatorname{supp}(x):=\{i:x_i>0\}$. A nonempty graph has at least one vertex; it need not have an edge. Empty instances, empty domains, and trivial or out-of-range decision thresholds are handled before compilation so that the simplex and the displayed reciprocal thresholds are well defined.

The model is invariant under several elementary transformations. Replacing $Q$ by $(Q+Q^{\mathsf T})/2$ does not alter the objective. Adding $\beta\one\one^{\mathsf T}$ shifts every feasible objective by $\beta$ and leaves the minimizers unchanged. Multiplication by a positive scalar preserves minimizers and rescales all objective gaps. These operations are important when selecting a hardware-friendly coefficient alphabet.

\begin{definition}[Encoding]
Let $\mathcal I$ be an instance of a discrete decision or optimization problem. An \emph{StQP encoding} consists of a polynomial-time map $\mathcal I\mapsto(Q_{\mathcal I},\theta_{\mathcal I})$, together with a polynomial-time decoder, such that the desired answer can be recovered from $\nu(Q_{\mathcal I})$, from a global minimizer, or from both. Here $\theta_{\mathcal I}$ denotes any threshold or affine recovery data.
\end{definition}

\begin{definition}[Levels of exactness]
\label{def:exactness}
We distinguish the following properties.
\begin{enumerate}[label=(\roman*)]
\item \emph{decision exactness}: a yes/no answer is equivalent to an objective threshold;
\item \emph{value exactness}: the original optimum is recovered by an explicit function of $\nu(Q)$;
\item \emph{solution exactness}: a global minimizer can be decoded to an original optimum;
\item \emph{landscape exactness}: relevant local optima or barriers correspond under the mapping;
\item \emph{approximation preservation}: a stated objective error in StQP implies a controlled error in the original problem.
\end{enumerate}
\end{definition}

Most reductions in complexity theory require only decision exactness. Hardware evaluation needs more. A formulation may be perfectly exact at the global optimum while creating exponentially many local minima, compressing decision values into a narrow interval, or requiring delicate support thresholding.

\subsection{Representation metrics}
\label{subsec:metrics}

For an encoded instance, we report:
\begin{align*}
  N_\Delta &:= \text{number of simplex coordinates},\\
  M_Q &:= \#\set{(u,v):u\le v,\ Q_{uv}\ne 0},\\
  d_Q &:= \max_u \#\set{v\ne u:Q_{uv}\ne 0},\\
  \mathcal A_Q &:= \set{Q_{uv}:1\le u\le v\le N_\Delta},\\
  \kappa_{\mathrm{coef}} &:= \frac{\max\set{|q|:q\in\mathcal A_Q}}{\min\set{|q|:q\in\mathcal A_Q\setminus\{0\}}}.
\end{align*}
Additional quantities are the target support size, objective separation between adjacent decision classes, entrywise perturbation tolerance, decoder complexity, and number or structure of local minima.

The raw nonzero count should be interpreted carefully. Since $\one^{\mathsf T}x=1$, adding a constant all-to-all matrix $\beta\one\one^{\mathsf T}$ is physically and mathematically redundant up to an objective shift. A platform may implement such a baseline globally rather than through $O(N^2)$ individual couplers. We therefore distinguish the literal matrix $Q$ from its \emph{relative interaction matrix}, obtained after removing a convenient multiple of $\one\one^{\mathsf T}$.

\subsection{Decision thresholds and certificates}

For graph-coded instances in \Cref{sec:graphcompiler}, an optimal support itself gives a polynomial certificate. For general rational StQP, complexity membership requires more care because minimizers are real. Quadratic programming with rational data admits polynomially verifiable certificates under standard formulations \citep{Vavasis1990,Vavasis1991}, but our NP-completeness statement will avoid unnecessary subtleties by restricting to an explicit graph matrix class that is exactly equivalent to CLIQUE.

\subsection{Compilation pipeline}

\Cref{fig:pipeline} summarizes the routes developed below. Direct routes preserve combinatorial structure and usually give the cleanest coefficient matrices. Generic routes trade compactness for universality.

\begin{figure}[t]
\centering
\begin{tikzpicture}[
  node distance=9mm and 8mm,
  every node/.style={font=\small},
  lbl/.style={font=\scriptsize,midway},
  box/.style={draw,rounded corners,align=center,minimum height=9mm,text width=26mm,fill=stqpgray},
  direct/.style={draw=stqpgreen,very thick,rounded corners,align=center,minimum height=9mm,text width=26mm,fill=green!5},
  generic/.style={draw=stqporange,very thick,rounded corners,align=center,minimum height=9mm,text width=26mm,fill=orange!6},
  arr/.style={-{Latex[length=2.3mm]},thick}
]
\node[box,align=center] (problem) {Discrete\\problem};
\node[direct,right=of problem] (compat) {Compatibility or\\selection graph};
\node[direct,right=of compat] (clique) {Clique or\\weighted clique};
\node[box,right=of clique] (stqp) {$\displaystyle\min_{x\in\Delta}x^{\mathsf T}Qx$};
\node[generic,below=14mm of compat] (bilp) {Binary linear\\feasibility model};
\node[generic,right=of bilp] (qubo) {Factor model\\of any arity};
\draw[arr] (problem)--(compat);
\draw[arr] (compat)--(clique);
\draw[arr] (clique)--(stqp);
\draw[arr] (problem.south)|-(bilp.west);
\draw[arr] (bilp)--(qubo);
\draw[arr] (qubo)--node[lbl,right]{factor states}(clique);
\end{tikzpicture}
\caption{Compilation routes. Direct compatibility and selection graphs (green) preserve the problem's structure and yield the simplest StQP matrices: three-level for clique, $\{0,1\}$ off-diagonal for weighted clique. The generic routes (orange) reach StQP through the same weighted-clique step. The binary-linear route is polynomial in input bit length; the factor-model route is polynomial in the total explicit local-table encoding length.}
\label{fig:pipeline}
\end{figure}

\section{A three-level graph-to-StQP compiler}
\label{sec:graphcompiler}

Let $G=(V,E)$ be a simple undirected graph with $V=[n]$, adjacency matrix $A_G$, and clique number $\omega(G)$. For a nonempty clique $C\subseteq V$, define its barycenter
\begin{equation}
  x(C):=\frac{1}{|C|}\sum_{i\in C}e_i\in\Delta_n.
  \label{eq:cliquebary}
\end{equation}
The classical Motzkin--Straus theorem expresses $\omega(G)$ through quadratic optimization over $\Delta_n$ \citep{MotzkinStraus1965}. Regularized formulations sharpen the relationship between local optima and maximal cliques \citep{Bomze1997,GibbonsEtAl1997,HungerfordRinaldi2019}. We state the result in a minimization form designed for direct StQP compilation. \Cref{alg:graphcompiler} in \Cref{app:pseudocode} gives the compilation steps, and \Cref{app:formulations} works out several catalog entries explicitly.

\subsection{The graph matrix}

For $0<\tau<1$, define
\begin{equation}
  Q_\tau(G):=\one\one^{\mathsf T}-A_G-(1-\tau)I.
  \label{eq:Qtau}
\end{equation}
Its entries are
\begin{equation}
  (Q_\tau)_{ij}=
  \begin{cases}
    \tau, & i=j,\\
    0, & i\ne j\text{ and }\{i,j\}\in E,\\
    1, & i\ne j\text{ and }\{i,j\}\notin E.
  \end{cases}
  \label{eq:threelevels}
\end{equation}
Thus graph edges represent compatible co-activation and nonedges represent repulsive pair interactions.

A clique is a set of pairwise adjacent vertices. It is maximal if no additional vertex can be added while preserving the clique property, and maximum if its cardinality is largest among all cliques of the graph.

\begin{theorem}[Three-level graph compiler]
\label{thm:graphcompiler}
Let $G$ be a nonempty graph and $0<\tau<1$. Then
\begin{equation}
  \nu(Q_\tau(G))=\frac{\tau}{\omega(G)}.
  \label{eq:graphvalue}
\end{equation}
Moreover:
\begin{enumerate}[label=(\roman*)]
\item $x\in\Delta_n$ is a global minimizer if and only if $x=x(C)$ for a maximum clique $C$;
\item $x\in\Delta_n$ is a local minimizer if and only if $x=x(C)$ for a maximal clique $C$;
\item every local minimizer is strict and isolated.
\end{enumerate}
\end{theorem}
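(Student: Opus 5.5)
Write $f(x)=x^{\mathsf T}Q_\tau x$. On $\Delta_n$ we have $x^{\mathsf T}\one\one^{\mathsf T}x=1$, so
\[
f(x)=1-x^{\mathsf T}A_Gx-(1-\tau)\norm{x}^2 .
\]
Equivalently, $f(x)=\tau\norm{x}^2+\sum_{i\ne j,\ \{i,j\}\notin E}x_ix_j$. This second form makes clear that $f$ is a regularized Motzkin--Straus objective.

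\textbf{Global value.} I would first prove the value formula directly. For any $x$, let $S=\operatorname{supp}(x)$.
\begin{itemize}
\item If $S$ is a clique, the nonedge sum vanishes. By Cauchy--Schwarz, $f(x)=\tau\norm{x}^2\ge \tau/|S|\ge\tau/\omega(G)$, with equality iff $x=x(S)$ and $|S|=\omega(G)$.
\item If $S$ is not a clique, I reduce to the clique case by a mass-shifting argument. Pick a nonedge $\{i,j\}\subseteq S$. Along $x+t(e_i-e_j)$ the objective is a quadratic in $t$ with leading coefficient $(Q_\tau)_{ii}+(Q_\tau)_{jj}-2(Q_\tau)_{ij}=2\tau-2<0$. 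So $f$ is strictly concave on that segment.
\item Moving to one of the two endpoints (where $x_i$ or $x_j$ hits zero) therefore strictly decreases $f$, or at least does not increase it. It also shrinks the support.
\item Iterating reaches a clique support with no larger value, which gives $f(x)\ge\tau/\omega(G)$.
\end{itemize}
The strict concavity shows that no global minimizer can have a non-clique support. Combined with the equality case above, this gives (i).

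\textbf{Local minimizers.} The same concavity kills any local minimizer whose support contains a nonedge. A strictly concave one-dimensional restriction through an interior point of the segment cannot have a local minimum there. Hence every local minimizer has clique support $C$.

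On the face of $\Delta_n$ spanned by $C$, $f=\tau\norm{x}^2$ is strictly convex with unique minimizer $x(C)$. So a local minimizer must equal $x(C)$. It remains to show that $x(C)$ is a local minimizer iff $C$ is maximal, and that it is then strict. For this I would use the KKT/first-order condition and a second-order check. The gradient is $2Q_\tau x(C)$, with components
\[
(Q_\tau x(C))_k=\frac{1}{|C|}\Bigl(\tau\,[k\in C]+\#\{c\in C:\{c,k\}\notin E,\ c\ne k\}\Bigr).
\]
\begin{itemize}
\item For $k\in C$ this equals $\tau/|C|$.
\item For $k\notin C$ it equals $m_k/|C|$, where $m_k$ is the number of non-neighbours of $k$ in $C$.
\end{itemize}

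\emph{Maximal $C$.} If $C$ is maximal, then $m_k\ge1>\tau$ for all $k\notin C$, so strict complementarity holds. For any feasible direction $d$ (with $\one^{\mathsf T}d=0$ and $d_k\ge0$ off $C$), the first-order term is
\[
2\sum_{k\notin C}d_k\bigl(m_k-\tau\bigr)/|C| .
\]
This is positive unless $d$ is supported in $C$, in which case the strictly convex face argument applies. The standard result that strict complementarity plus positive definiteness on the face gives a strict local minimum then yields strictness. Alternatively, I would bound $f(x)-f(x(C))$ directly by a linear term of order $\sum_{k\notin C}x_k$ minus an $O(\norm{x-x(C)}^2)$ error.

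\emph{Non-maximal $C$.} If $C$ is not maximal, pick $k$ adjacent to all of $C$. Then $C\cup\{k\}$ is a clique, and moving along the segment toward $x(C\cup\{k\})$ strictly lowers $f$, since $\tau\norm{x}^2$ is convex and $\tau/(|C|+1)<\tau/|C|$. So $x(C)$ is not a local minimizer, which completes (ii).

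\textbf{Strictness and isolation.} Part (iii) follows because local minimizers are exactly the finitely many barycenters $x(C)$, each strict by the argument above, hence isolated.

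\textbf{Main obstacle.} The step needing care is the strictness at maximal cliques, because feasible directions mix the face and the normal cone. I would handle it with the explicit expansion
\[
f(x(C)+d)-f(x(C))=2\,d^{\mathsf T}Q_\tau x(C)+d^{\mathsf T}Q_\tau d .
\]
The linear term dominates whenever $d$ puts mass off $C$, and $d^{\mathsf T}Q_\tau d=\tau\norm{d}^2>0$ when $d$ is supported on $C$. The mixed case is controlled by bounding $\abs{d^{\mathsf T}Q_\tau d}\le c\norm{d}_1^2$, which is small relative to the linear term for small $d$.
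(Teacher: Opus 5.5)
Your proof is correct and follows the paper's argument essentially step for step. The ingredients match: concavity along $e_i-e_j$ for a nonedge purifies supports, and Cauchy--Schwarz on a clique face gives the value and part (i). The gradient constants $\tau/|C|$ and $m_k/|C|$ at the barycenter, together with strict complementarity and a positive-definite reduced Hessian, give strictness at maximal cliques, and finiteness gives isolation. For non-maximal $C$ you move toward $x(C\cup\{k\})$ rather than toward $e_k$; either works. The second-order sufficient condition you cite first is exactly what the paper relies on, so the proof is complete on that route.

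The elementary alternative you single out under ``Main obstacle'' does not work as written, however. You compare a linear term of order $s:=\sum_{k\notin C}d_k$ against an error bounded only by $c\norm{d}_1^2$. Consider directions that mostly redistribute mass inside $C$ and put only a tiny amount off $C$, for example with $s=\norm{d}_1^3$. For these, the crude bound permits a negative quadratic term of size about $\norm{d}_1^2$, which swamps the linear term, so the claimed dominance does not follow.

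What rescues the estimate is the sign structure of $Q_\tau$. Write $d=u+v$ with $u$ supported on $C$ and $v\ge0$ supported off $C$, so that $\norm{v}_1=s$ and $\norm{u}_1\le\norm{d}_1$.
\begin{itemize}
\item On the face, $u^{\mathsf T}Q_\tau u=\tau\norm{u}^2\ge0$.
\item Since $Q_\tau$ is entrywise nonnegative and $v\ge0$, also $v^{\mathsf T}Q_\tau v\ge0$.
\item Only the cross term can be negative, and $\abs{2u^{\mathsf T}Q_\tau v}\le 2\norm{u}_1 s$ because the entries are at most one.
\end{itemize}
Combining this with $2d^{\mathsf T}Q_\tau x(C)\ge 2(1-\tau)s/|C|$ gives $f(x)-f(x(C))\ge 2s\bigl((1-\tau)/|C|-\norm{d}_1\bigr)$. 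This is positive whenever $s>0$ and $\norm{d}_1<(1-\tau)/|C|$. When $s=0$, the difference is $\tau\norm{d}^2>0$. This is precisely the estimate the paper carries out, with general weights, in the proof of \Cref{thm:weightedlandscape}.
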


\begin{remark}[Relation to regularized Motzkin--Straus]
\label{rem:msattribution}
\Cref{thm:graphcompiler} is a minimization-form restatement of the regularized Motzkin--Straus theorem rather than a new result. Since $\one^{\mathsf T}x=1$ on $\Delta_n$,
\begin{equation}
  x^{\mathsf T}Q_\tau(G)\,x
  =1-x^{\mathsf T}\!\left(A_G+(1-\tau)I\right)x,
  \label{eq:msidentity}
\end{equation}
so minimizing $Q_\tau(G)$ over $\Delta_n$ is equivalent to maximizing the regularized Motzkin--Straus quadratic form \citep{Bomze1997}, whose $\tau=1/2$ case gives $\max_{\Delta_n}x^{\mathsf T}(A_G+\tfrac12 I)x=1-1/(2\omega(G))$ together with the bijection between strict local maximizers and maximal cliques. The $\tau$-parametrized family is the regularization studied in \citep{HungerfordRinaldi2019}, and $\Qhat(G)=2Q_{1/2}(G)$ is a positive rescaling of the same object. We give the proof because the minimization normalization, the nonnegative three-level alphabet, and the explicit gradient constants $2\tau/k$ and $2r_j(C)/k$ are used repeatedly in \Cref{sec:csp,sec:weightedcompiler,sec:packing}.
\end{remark}

\begin{proof}
We divide the argument into support purification, minimization on a clique, and local optimality.

Suppose $x\in\Delta_n$ has positive coordinates $x_i,x_j>0$ for a nonedge $\{i,j\}\notin E$. Along
\[
  x(t)=x+t(e_i-e_j),\qquad -x_i\le t\le x_j,
\]
the second derivative of $f(t)=x(t)^{\mathsf T}Q_\tau x(t)$ is
\[
  f''(t)=2(Q_{ii}+Q_{jj}-2Q_{ij})=4(\tau-1)<0.
\]
Hence $f$ is strictly concave on this segment. An interior point cannot minimize it locally, and at least one endpoint has strictly smaller objective. Repeating this elimination shows that every global minimizer has clique support, and any feasible point can be transformed without increasing the objective into a point supported on a clique.

If $x$ is supported on a clique $C$, all off-diagonal terms within $C$ vanish, so
\[
  x^{\mathsf T}Q_\tau x=\tau\sum_{i\in C}x_i^2
  \ge \frac{\tau}{|C|},
\]
with equality exactly when $x_i=1/|C|$ for $i\in C$. Since $|C|\le\omega(G)$, the global lower bound is $\tau/\omega(G)$, attained exactly at barycenters of maximum cliques.

It remains to characterize local minima. The strict-concavity argument shows that local minima must have clique support. On the face generated by a clique $C$, the unique face minimizer is $x(C)$ and the reduced Hessian is positive definite. Let $k=|C|$. At $x(C)$, the gradient components are
\[
  \nabla_i f=\frac{2\tau}{k}\quad(i\in C),
  \qquad
  \nabla_j f=\frac{2r_j(C)}{k}\quad(j\notin C),
\]
where $r_j(C)$ is the number of vertices in $C$ not adjacent to $j$. If $C$ is maximal, then $r_j(C)\ge1$ for every $j\notin C$, and since $\tau<1$,
\[
  \nabla_j f>\frac{2\tau}{k}.
\]
Thus the Karush–Kuhn–Tucker (KKT) inequalities hold with strict complementarity, and $x(C)$ is a strict local minimizer. Conversely, if $C$ is not maximal, some $j\notin C$ is adjacent to every vertex of $C$, so $r_j(C)=0$ and moving a small amount of mass toward $e_j$ decreases the objective. Hence $x(C)$ is not locally minimizing. Finiteness of the maximal-clique family gives isolation.
\end{proof}

\begin{remark}[Why regularization matters]
At $\tau=1$, the matrix is $\one\one^{\mathsf T}-A_G$, and the global value reduces to the classical Motzkin--Straus formula. The strict support-purification and the exact local-minimum correspondence can fail in this unregularized case. The range $0<\tau<1$ removes nonclique local minima and makes every maximal clique a strict local minimum.
\end{remark}

\subsection{An integer coefficient alphabet}

Set $\tau=1/2$ and scale by two:
\begin{equation}
  \Qhat(G):=2Q_{1/2}(G)=2\one\one^{\mathsf T}-2A_G-I.
  \label{eq:Qhat}
\end{equation}
Then
\begin{equation}
  \widehat q_{ij}=
  \begin{cases}
    1, & i=j,\\
    0, & \{i,j\}\in E,\\
    2, & i\ne j\text{ and }\{i,j\}\notin E,
  \end{cases}
  \qquad
  \nu(\Qhat(G))=\frac{1}{\omega(G)}.
  \label{eq:integergraph}
\end{equation}
This formulation uses no negative coefficients and only the alphabet $\{0,1,2\}$. The discrete solution is read from the support; the positive coordinates all equal $1/\omega(G)$.

\begin{corollary}[Restricted StQP threshold is NP-complete]
\label{cor:npc}
Consider the following problem.

\smallskip
\noindent\textnormal{\textsc{Three-Level StQP Threshold}}: Given a symmetric matrix $Q\in\{0,1,2\}^{n\times n}$ satisfying $q_{ii}=1$ and $q_{ij}\in\{0,2\}$ for $i\ne j$, and an integer $k\ge1$, decide whether
\[
  \min_{x\in\Delta_n}x^{\mathsf T}Qx\le\frac1k.
\]
This problem is NP-complete.
\end{corollary}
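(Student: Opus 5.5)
The plan is to prove NP-hardness and membership in NP separately. Both directions rest on the observation that every instance of \textsc{Three-Level StQP Threshold} is literally of the form $\Qhat(G)$ for some graph $G$.

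\emph{The correspondence.} Given a symmetric $Q\in\{0,1,2\}^{n\times n}$ with $q_{ii}=1$ and $q_{ij}\in\{0,2\}$ for $i\ne j$, I will define $G_Q=([n],E_Q)$ by $\{i,j\}\in E_Q$ iff $q_{ij}=0$. Comparing with \eqref{eq:integergraph} shows that $Q=\Qhat(G_Q)$ exactly. Conversely, every graph $G$ yields an admissible matrix $\Qhat(G)$, and $G\mapsto\Qhat(G)$ is computable in $O(n^2)$ time. By \Cref{thm:graphcompiler} at $\tau=1/2$, scaled by two, we have $\nu(\Qhat(G))=1/\omega(G)$. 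Hence, for an integer $k\ge1$,
\[
\min_{x\in\Delta_n}x^{\mathsf T}Qx\le \frac1k \iff \frac1{\omega(G_Q)}\le\frac1k \iff \omega(G_Q)\ge k .
\]
So the restricted threshold problem is exactly CLIQUE in disguise, with the instance encoded through the matrix instead of the adjacency list.

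\emph{NP-hardness.} I will reduce from CLIQUE, which is NP-complete by Karp \citep{Karp1972}. An instance $(G,k)$ with $k\ge1$ maps to $(\Qhat(G),k)$, and by the display above this preserves yes/no answers. There is one degenerate case: if $k>n$ the answer is ``no'' on both sides automatically, so no special handling is needed. The graph is nonempty since $n\ge1$ is implicit in $\Delta_n$.

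\emph{Membership in NP.} This is the step that needs care, since minimizers of a general StQP are real vectors. The structure above avoids the issue. As a certificate I will take a set $C\subseteq[n]$ with $|C|\ge k$ and $q_{ij}=0$ for all distinct $i,j\in C$, which is checkable in polynomial time. If such a $C$ exists, its barycenter gives
\[
x(C)^{\mathsf T}Qx(C)=\frac1{|C|}\le\frac1k .
\]
Conversely, if the minimum is at most $1/k$, then \Cref{thm:graphcompiler}(i) says the minimum $1/\omega(G_Q)$ is attained at the barycenter of a maximum clique of size $\omega(G_Q)\ge k$, and that clique is a valid certificate. Equivalently, one may take $x(C)$ itself, which has rational entries of polynomial size, and verify the inequality exactly in rational arithmetic.

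I expect the only real obstacle to be this membership argument. Its resolution is that the threshold is a reciprocal integer and that the minimizer supports are combinatorial, which is precisely what the global part of \Cref{thm:graphcompiler} provides. With that in hand, the rest of the proof is bookkeeping.
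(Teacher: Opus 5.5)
Your proposal is correct and follows the paper's proof essentially step for step. Hardness comes from the map $(G,k)\mapsto(\Qhat(G),k)$ from CLIQUE using $\nu(\Qhat(G))=1/\omega(G)$, and membership uses as certificate a set of at least $k$ indices whose pairwise off-diagonal entries vanish; your explicit identity $Q=\Qhat(G_Q)$ just spells out what the paper uses implicitly when it refers to ``the graph defined by the zero off-diagonal entries.''
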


\begin{proof}
Given a graph $G$, construct $\Qhat(G)$ in polynomial time. By \Cref{thm:graphcompiler},
\[
  \nu(\Qhat(G))\le\frac1k
  \quad\Longleftrightarrow\quad
  \omega(G)\ge k.
\]
This proves NP-hardness by CLIQUE. Membership in NP follows directly from the matrix pattern: a certificate is a set $C$ of $k$ indices with $q_{ij}=0$ for all distinct $i,j\in C$. Such a set is a $k$-clique in the graph defined by the zero off-diagonal entries and can be checked in polynomial time.
\end{proof}

\begin{corollary}[Universality for NP]
\label{cor:universalNP}
Every decision language in NP has a polynomial-time many-one reduction to \textnormal{\textsc{Three-Level StQP Threshold}}.
\end{corollary}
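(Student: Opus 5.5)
The plan is to compose two polynomial-time many-one reductions and use the fact that such reductions are closed under composition. First, fix a language $L\in\mathrm{NP}$. By the Cook--Levin theorem together with Karp's reduction from SAT to CLIQUE \citep{Cook1971,Karp1972}, there is a polynomial-time map $w\mapsto(G_w,k_w)$ with $k_w\ge1$ such that $w\in L$ if and only if $\omega(G_w)\ge k_w$. Second, send $(G_w,k_w)$ to $(\Qhat(G_w),k_w)$. By \eqref{eq:integergraph}, $\Qhat(G_w)$ lies in the admissible matrix class: it is symmetric, has unit diagonal, and has off-diagonal entries in $\{0,2\}$. It can be written down in $O(n^2)$ time.

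Correctness follows from \Cref{thm:graphcompiler} at $\tau=1/2$, after scaling by two. This gives $\nu(\Qhat(G_w))=1/\omega(G_w)$. Hence $\nu\le 1/k_w$ if and only if $\omega(G_w)\ge k_w$, and by the first step this holds if and only if $w\in L$. This is exactly the equivalence already used in the hardness half of \Cref{cor:npc}. Equivalently, the statement follows at once from \Cref{cor:npc}: the target problem is NP-hard, and NP-hardness under many-one reductions means precisely that every NP language reduces to it.

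The only point needing care is degenerate output of the CLIQUE reduction, namely an empty graph or an out-of-range $k$. The paper requires a nonempty graph and $k\ge1$. I would handle this in preprocessing by mapping such instances to fixed yes or no instances of the target problem. For a yes instance, take the $1\times1$ matrix $(1)$ with $k=1$, whose value is $1\le 1$. For a no instance, take the $1\times1$ matrix $(1)$ with $k=2$, since $1>1/2$. These are the main, though minor, obstacles. Everything else is routine bookkeeping of polynomial running times.
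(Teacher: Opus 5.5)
Your proposal is correct and matches the paper's proof, which composes the Cook--Levin/Karp reduction to CLIQUE with \Cref{cor:npc}. Your explicit treatment of degenerate CLIQUE outputs (empty graph, $k\le 0$) by fixed $1\times1$ yes/no instances is sound. The paper leaves this detail to its standing convention that trivial instances are handled before compilation.
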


\begin{proof}
By the Cook--Levin theorem and Karp reductions, every language in NP reduces in polynomial time to CLIQUE \citep{Cook1971,Karp1972}. Apply \Cref{cor:npc}.
\end{proof}

\begin{caution}
\Cref{cor:universalNP} is a representability statement, not an algorithmic speedup. The encoded matrices form a restricted NP-complete family, so a general exact polynomial-time StQP solver would imply $\mathrm P=\mathrm{NP}$.
\end{caution}

\subsection{Objective separation and perturbation tolerance}
\label{subsec:precision}

Compact coefficients do not imply a constant decision gap. If the relevant distinction is between clique number $k$ and $k-1$, with $k\ge2$, then for $Q_\tau$ the adjacent optimal values differ by
\begin{equation}
  g_k(\tau)
  :=\frac{\tau}{k-1}-\frac{\tau}{k}
  =\frac{\tau}{k(k-1)}.
  \label{eq:cliquegap}
\end{equation}
For the integer-scaled matrix $\Qhat$, the separation is $1/[k(k-1)]$.

\begin{proposition}[Entrywise perturbation bound]
\label{prop:perturb}
Let $Q,E\in\mathbb S^N$ and $\norm{E}_{\max}:=\max_{u,v}|E_{uv}|\le\varepsilon$. Then
\begin{equation}
  |\nu(Q+E)-\nu(Q)|\le\varepsilon.
  \label{eq:perturb}
\end{equation}
\end{proposition}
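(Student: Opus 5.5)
The plan is to bound the perturbation of the objective uniformly over the simplex and then pass the bound to the minima. For any $x\in\Delta_N$ we have
\[
  \abs{x^{\mathsf T}Ex}\le\sum_{u,v}\abs{E_{uv}}\,x_ux_v\le\norm{E}_{\max}\Bigl(\sum_u x_u\Bigr)^2=\norm{E}_{\max}\le\varepsilon .
\]
This uses only $x\ge0$ and $\one^{\mathsf T}x=1$. Hence the two objectives $f_Q(x)=x^{\mathsf T}Qx$ and $f_{Q+E}(x)=f_Q(x)+x^{\mathsf T}Ex$ differ by at most $\varepsilon$ at every feasible point.

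Next we transfer this pointwise bound to the minima, using compactness of $\Delta_N$ so that both minima are attained. Let $x^*$ minimize $f_Q$. Then
\[
  \nu(Q+E)\le f_{Q+E}(x^*)\le f_Q(x^*)+\varepsilon=\nu(Q)+\varepsilon .
\]
Symmetrically, let $y^*$ minimize $f_{Q+E}$. Then
\[
  \nu(Q)\le f_Q(y^*)\le f_{Q+E}(y^*)+\varepsilon=\nu(Q+E)+\varepsilon .
\]
Combining the two inequalities gives $\abs{\nu(Q+E)-\nu(Q)}\le\varepsilon$. Alternatively, one can simply note that the minimum of a function over a set is $1$-Lipschitz in the sup norm of the function.

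There is no real obstacle here. The only point needing care is the first step: the entrywise max norm suffices precisely because simplex points have nonnegative coordinates with unit $\ell_1$ norm. The bound would not hold with the same constant on a general domain. No symmetry of $E$ beyond membership in $\mathbb S^N$ is needed.
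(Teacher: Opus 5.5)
Your proof is correct and follows the same route as the paper: bound $\abs{x^{\mathsf T}Ex}\le\varepsilon$ on the simplex using $x\ge0$ and $\one^{\mathsf T}x=1$, then evaluate at minimizers of $Q$ and of $Q+E$. The paper only says to evaluate at those minimizers; your two displayed inequalities supply exactly those steps.
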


\begin{proof}
For every $x\in\Delta_N$,
\[
  |x^{\mathsf T}Ex|
  \le \sum_{u,v}|E_{uv}|x_ux_v
  \le\varepsilon\left(\sum_u x_u\right)^2
  =\varepsilon.
\]
Apply this inequality at minimizers of $Q$ and $Q+E$.
\end{proof}

Thus a midpoint threshold between the $k$ and $k-1$ value levels remains valid under arbitrary entrywise perturbations smaller than $g_k(\tau)/2$. The required objective discrimination decays as $O(k^{-2})$. Support recovery has a related scale: the ideal nonzero coordinate of a $k$-clique barycenter is $1/k$, so coordinate noise substantially below $1/(2k)$ permits simple threshold decoding.

\subsection{Landscape interpretation}
\label{subsec:landscape}

The graph compiler converts combinatorial maximality into continuous local optimality. If $\mathcal M(G)$ denotes the family of maximal cliques, then
\begin{equation}
  \#\{\text{strict local minima of }\StQP(Q_\tau(G))\}
  =|\mathcal M(G)|.
  \label{eq:numberlocal}
\end{equation}
Graphs can have exponentially many maximal cliques. Hence a simplex-native local relaxer can face an exponentially multimodal landscape even though the coefficient alphabet is fixed and nonnegative.

This observation cuts both ways. The landscape is undesirable for exact optimization but useful as a transparent diagnostic. A returned local minimum is not arbitrary: its support is a maximal clique, and its objective immediately reports that clique's size. Global exploration mechanisms can therefore be studied in a controlled setting where all local solutions have clear combinatorial meaning. Maximum-clique algorithms based on continuous dynamics exploit precisely this relationship \citep{Bomze1997,BomzePelilloStix2000,WuHao2015}.

\subsection{Relative couplings}
\label{subsec:relative}

The literal expression $Q_\tau=\one\one^{\mathsf T}-A_G-(1-\tau)I$ contains a dense all-ones term. On the simplex, however, $x^{\mathsf T}\one\one^{\mathsf T}x=1$ is constant. Therefore the same minimizers are obtained from
\begin{equation}
  \widetilde Q_\tau(G):=-A_G-(1-\tau)I,
  \label{eq:relativeQ}
\end{equation}
with all objective values shifted by $-1$. A physical platform that supplies a global normalization constraint need only realize the relative edge and diagonal interactions. Conversely, the nonnegative form \eqref{eq:threelevels} may be preferable when repulsive incompatibility couplings are natural. The two forms are mathematically equivalent but can have very different implementation costs.

\section{Compatibility graphs for constraint satisfaction and assignment}
\label{sec:csp}

Many discrete problems ask for one local choice from each of several groups, subject to pairwise consistency. Such problems have a natural compatibility graph: vertices represent local choices, and edges join choices that can coexist. The graph-to-StQP theorem then converts consistency into a simplex quadratic.

Choices belonging to the same group are represented by mutually nonadjacent vertices. A clique can therefore contain at most one choice from each group. When there are $n$ groups, a clique of size $n$ selects exactly one choice from every group, while adjacency between groups enforces consistency. Smaller cliques represent partial selections.

\subsection{Finite-domain constraint satisfaction}

A binary constraint satisfaction problem (CSP) consists of variables $z_1,\ldots,z_n$ with finite domains $D_1,\ldots,D_n$ and binary constraints $R_{ij}\subseteq D_i\times D_j$. Here binary refers to constraints involving two variables; the domains need not be Boolean. Unary restrictions may be absorbed into the domains. The microstructure graph is classical in constraint processing \citep{Jegou1993,Dechter2003,RossiVanBeekWalsh2006}.

\begin{construction}[Variable--value compatibility graph]
\label{con:cspgraph}
Create one vertex $(i,a)$ for every $i\in[n]$ and $a\in D_i$. Two distinct vertices $(i,a)$ and $(j,b)$ are adjacent if and only if $i\ne j$ and either no constraint links $i$ and $j$, or $(a,b)\in R_{ij}$. Vertices belonging to the same variable are nonadjacent.
\end{construction}

\begin{proposition}[Binary CSP as StQP]
\label{prop:csp}
Let $H$ be the graph from \Cref{con:cspgraph}. Then the CSP is satisfiable if and only if $H$ contains a clique of size $n$. Equivalently,
\begin{equation}
  \min_{x\in\Delta_N}x^{\mathsf T}\Qhat(H)x=\frac1n,
  \qquad N=\sum_{i=1}^n|D_i|.
  \label{eq:cspstqp}
\end{equation}
When satisfiable, the support of a global minimizer contains one variable--value vertex for each $i$ and decodes directly to a satisfying assignment.
\end{proposition}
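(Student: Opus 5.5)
The plan is to reduce the statement to \Cref{thm:graphcompiler} applied to $H$, via the equivalence of global minimizers and maximum cliques. The only problem-specific input is a combinatorial bound: $\omega(H)\le n$, with equality exactly when the CSP is satisfiable.

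First I will show $\omega(H)\le n$. In \Cref{con:cspgraph}, vertices $(i,a)$ and $(i,a')$ sharing the same variable are never adjacent. Hence any clique of $H$ contains at most one vertex per variable, so $|C|\le n$ for every clique $C$.

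Next I will prove the two directions of the clique characterization.
\begin{itemize}
\item \emph{Satisfiable implies a clique of size $n$.} Given a satisfying assignment $z_i=a_i$, the set $\{(i,a_i):i\in[n]\}$ is a clique. For each pair $i\ne j$, either no constraint links $i$ and $j$, or $(a_i,a_j)\in R_{ij}$; in both cases the two vertices are adjacent.
\item \emph{A clique of size $n$ implies satisfiable.} By the pigeonhole bound, an $n$-clique contains exactly one vertex $(i,a_i)$ for each $i$. Adjacency of $(i,a_i)$ and $(j,a_j)$ for every constrained pair gives $(a_i,a_j)\in R_{ij}$. Unary restrictions are already absorbed into the domains, so $z_i=a_i$ satisfies the CSP.
\end{itemize}

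For the value statement, I apply \eqref{eq:integergraph}, which gives $\nu(\Qhat(H))=1/\omega(H)$. Here $H$ is nonempty provided the CSP has at least one variable with a nonempty domain; empty domains are handled before compilation, as stated in the preliminaries. Since $\omega(H)\le n$, the CSP is satisfiable if and only if $\omega(H)=n$, which holds if and only if $\nu(\Qhat(H))=1/n$. Otherwise $\nu(\Qhat(H))\ge 1/(n-1)>1/n$. The claimed dimension is immediate: $N=\sum_i|D_i|$, since there is one vertex per variable--value pair.

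For decoding, \Cref{thm:graphcompiler}(i) states that every global minimizer is the barycenter $x(C)$ of a maximum clique $C$. When the CSP is satisfiable, $|C|=n$, so $\operatorname{supp}(x)=C$ contains exactly one vertex per variable. The previous step then shows that the induced assignment is satisfying.

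I expect no step to be a genuine obstacle. The one point that needs care is the precise reading of "equivalently": the displayed equality holds exactly in the satisfiable case, and otherwise the minimum is strictly larger than $1/n$. The degenerate case $n=1$, where there are no pairs, also needs care; it is covered because any single vertex is then a clique of size $1$.
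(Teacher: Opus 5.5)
Your proposal is correct and follows the paper's own argument: nonadjacency within each variable group gives $\omega(H)\le n$, the two directions identify $n$-cliques with satisfying assignments, and \Cref{thm:graphcompiler} converts this into the value and decoding statements. Your explicit reading of ``equivalently'' is a welcome clarification that the paper leaves implicit: $\nu(\Qhat(H))=1/n$ exactly in the satisfiable case and $\nu(\Qhat(H))\ge 1/(n-1)$ otherwise.
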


\begin{proof}
A clique contains at most one vertex from each variable group. A clique of size $n$ therefore selects exactly one value per variable, and adjacency enforces all binary constraints. Conversely, a satisfying assignment selects $n$ pairwise adjacent vertices. Since no clique can exceed $n$, \Cref{thm:graphcompiler} gives \eqref{eq:cspstqp}.
\end{proof}

The selected variable--value pairs are identified by the support of the StQP solution. The coordinates themselves are continuous masses, with one global normalization; the size-$n$ clique condition supplies the one-choice-per-variable requirement.

\subsection{CNF satisfiability}

Continuous formulations of satisfiability have a long history, from global-optimization models of SAT \citep{gu-sat-global-optimization-1994} to recent Fourier and polynomial relaxations \citep{kyrillidis-fouriersat-2020,fouriercsp-2025}; the construction below is exact and keeps the three-level alphabet. Let a conjunctive normal form (CNF) formula $\Phi=C_1\wedge\cdots\wedge C_m$ contain $N_{\mathrm{lit}}$ literal occurrences. A particularly compact construction uses one vertex per occurrence, rather than two vertices per Boolean variable.

\begin{construction}[Clause--literal graph]
\label{con:satgraph}
Create a vertex $(r,\ell)$ for every literal occurrence $\ell\in C_r$. Join $(r,\ell)$ and $(s,\ell')$ when $r\ne s$ and the two literals are not complementary. Vertices from the same clause are nonadjacent.
\end{construction}

\begin{proposition}[CNF-SAT]
\label{prop:sat}
The formula $\Phi$ is satisfiable if and only if the graph in \Cref{con:satgraph} has a clique of size $m$. Consequently,
\begin{equation}
  \Phi\text{ satisfiable}
  \quad\Longleftrightarrow\quad
  \nu(\Qhat(H_\Phi))=\frac1m.
  \label{eq:satstqp}
\end{equation}
The formulation has $N_\Delta=N_{\mathrm{lit}}$ coordinates; for 3-SAT, $N_{\mathrm{lit}}\le3m$.
\end{proposition}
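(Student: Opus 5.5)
The plan is to follow the template of \Cref{prop:csp}: establish the combinatorial equivalence between satisfying assignments and $m$-cliques of $H_\Phi$, then convert it to the objective value with \Cref{thm:graphcompiler} applied to $\Qhat(H_\Phi)=2Q_{1/2}(H_\Phi)$.

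\textbf{Clique size.} Since vertices from the same clause are pairwise nonadjacent, every clique meets each clause group in at most one vertex. Hence $\omega(H_\Phi)\le m$.

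\textbf{Satisfiable implies an $m$-clique.} Fix a satisfying assignment $\sigma$. In each clause $C_r$ choose one occurrence $\ell_r$ with $\sigma(\ell_r)=\text{true}$. For $r\ne s$, the literals $\ell_r$ and $\ell_s$ are both true under $\sigma$, so they cannot be complementary. Therefore $(r,\ell_r)$ and $(s,\ell_s)$ are adjacent, and the $m$ chosen vertices form a clique.

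\textbf{An $m$-clique implies satisfiability.} Let $K$ be an $m$-clique. By the clique-size bound, $K$ contains exactly one occurrence $\ell_r$ from each clause $C_r$. The literals $\{\ell_1,\dots,\ell_m\}$ contain no complementary pair, because every two of them lie in distinct clauses and are adjacent. Repeated literals cause no difficulty. This set is therefore consistent: set each variable appearing in it so that the literal is true, and set the remaining variables arbitrarily. Each clause then contains a true literal. The only point needing care is that consistency must hold globally, not just pairwise. This is immediate here because a conflict between literals is always a conflict between two of them, so pairwise noncomplementarity suffices.

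\textbf{Converting to the objective value.} By \Cref{thm:graphcompiler} and \eqref{eq:integergraph}, $\nu(\Qhat(H_\Phi))=1/\omega(H_\Phi)$. This equals $1/m$ exactly when $\omega(H_\Phi)=m$, which by the previous steps holds exactly when $\Phi$ is satisfiable. The value is well defined because we assume $m\ge1$ and every clause nonempty; an empty clause is handled before compilation. The coordinate count $N_\Delta=N_{\mathrm{lit}}$ holds by construction, since there is one vertex per literal occurrence. For 3-SAT each clause has at most three occurrences, giving $N_{\mathrm{lit}}\le 3m$.

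Finally, the support of a global minimizer is a maximum clique by \Cref{thm:graphcompiler}(i). When $\Phi$ is satisfiable this clique has size $m$, and the second step decodes it to a satisfying assignment.
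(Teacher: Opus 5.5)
Your proof is correct and follows the paper's argument. Both directions work the same way: choose one true occurrence per clause to get an $m$-clique, and conversely read off a consistent assignment from the pairwise noncomplementary literals of an $m$-clique, with \Cref{thm:graphcompiler} turning $\omega(H_\Phi)=m$ into $\nu(\Qhat(H_\Phi))=1/m$; you spell out the bound $\omega(H_\Phi)\le m$ and this value conversion, which the paper's proof leaves implicit.
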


\begin{proof}
A satisfying assignment makes at least one literal true in every clause. Selecting one true occurrence per clause gives pairwise noncomplementary literals and hence an $m$-clique. Conversely, an $m$-clique contains one occurrence from every clause and contains no complementary pair. Assigning all selected positive literals true and all selected negative literals false is consistent and can be extended arbitrarily to unselected variables, satisfying every clause.
\end{proof}

A local minimum of the associated StQP corresponds to a maximal set of mutually consistent clause literals. Such a support may omit clauses and therefore need not certify satisfiability. This gives a concrete example of why local minimization alone is insufficient.

\begin{example}[A 3-SAT decoding rule]
Suppose a global minimizer of $\Qhat(H_\Phi)$ has support
\[
  \set{(1,x_2),(2,\neg x_5),(3,x_2),\ldots,(m,x_7)}.
\]
All nonzero coordinates equal $1/m$. The support cannot contain both $x_i$ and $\neg x_i$. Set each selected positive literal to true and each selected negative literal to false; assign all remaining variables arbitrarily. The result satisfies $\Phi$.
\end{example}

\subsection{Graph coloring and clique cover}

For a graph $G=(V,E)$ and integer $k$, create vertices $(v,c)$ with $v\in V$ and $c\in[k]$. Two assignments $(v,c)$ and $(u,d)$ are compatible when $v\ne u$ and either $\{u,v\}\notin E$ or $c\ne d$.

\begin{proposition}[$k$-colorability]
\label{prop:coloring}
Let $H_{G,k}$ be the compatibility graph above. Then
\begin{equation}
  \chi(G)\le k
  \quad\Longleftrightarrow\quad
  \omega(H_{G,k})=|V|
  \quad\Longleftrightarrow\quad
  \nu(\Qhat(H_{G,k}))=\frac1{|V|}.
  \label{eq:colorstqp}
\end{equation}
The StQP uses $N_\Delta=k|V|$ coordinates, and a global support selects one color for every vertex.
\end{proposition}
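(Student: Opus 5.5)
The plan is to mirror the proof of \Cref{prop:csp}, since $H_{G,k}$ is exactly the variable--value compatibility graph of \Cref{con:cspgraph} for a binary CSP. That CSP has one variable per vertex $v\in V$, each with domain $[k]$. For every edge $\{u,v\}\in E$ it imposes the constraint $R_{uv}=\{(c,d):c\neq d\}$, and pairs of nonadjacent vertices of $G$ are left unconstrained.

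First I will check that the adjacency rule stated for $H_{G,k}$ coincides with \Cref{con:cspgraph}. In both graphs the vertices $(v,c)$ and $(v,d)$ are never adjacent. Two vertices with $u\neq v$ are adjacent either when no constraint links $u$ and $v$, which is exactly the case $\{u,v\}\notin E$, or when $(c,d)\in R_{uv}$, which means $c\neq d$. Under this identification, satisfying assignments of the CSP are exactly the proper $k$-colorings of $G$.

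Next comes the combinatorial equivalence. Because the $k$ copies of each vertex $v$ are pairwise nonadjacent, every clique of $H_{G,k}$ contains at most one vertex from each group, so $\omega(H_{G,k})\le |V|$. For the forward direction, a proper coloring $\phi$ gives the set $\{(v,\phi(v))\}$, which is a clique of size $|V|$. For the converse, a clique of size $|V|$ picks exactly one color $\phi(v)$ for every $v$, and adjacency inside the clique forces $\phi(u)\ne\phi(v)$ whenever $\{u,v\}\in E$, so $\phi$ is proper. Together these give the first equivalence $\chi(G)\le k\iff\omega(H_{G,k})=|V|$.

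The second equivalence follows from \Cref{thm:graphcompiler} applied to the integer-scaled matrix, as in \eqref{eq:integergraph}: $\nu(\Qhat(H_{G,k}))=1/\omega(H_{G,k})$. Since $\omega\le|V|$ and the map $t\mapsto 1/t$ is injective, $\nu=1/|V|$ holds exactly when $\omega=|V|$. Two degenerate cases need a remark. We need $k\ge1$ and $V\neq\emptyset$ so that $H_{G,k}$ is a nonempty graph; the paper's convention handles trivial instances before compilation.

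The coordinate count is $N_\Delta=k|V|$, one coordinate per pair $(v,c)$. By part (i) of \Cref{thm:graphcompiler}, every global minimizer is the barycenter of a maximum clique. In the colorable case that clique has size $|V|$, so its support selects exactly one color per vertex, and this selection is a proper coloring. I do not expect a real obstacle here. The only point needing care is the bound $\omega(H_{G,k})\le|V|$, which comes from the group structure and is what lets the equality $\nu=1/|V|$ serve as an exact threshold.
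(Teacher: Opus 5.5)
Your proposal is correct and follows the paper's intended argument. The paper gives no separate proof here: it treats $H_{G,k}$ as the variable--value compatibility graph of \Cref{con:cspgraph}, with domain $[k]$ and the constraint $c\neq d$ on each edge. The group bound $\omega(H_{G,k})\le|V|$, together with \Cref{prop:csp} and \Cref{thm:graphcompiler}, then gives exactly your chain of equivalences, and your caveats are the right ones: the global-support claim applies only in the colorable case, and the graph must be nonempty ($k\ge1$, $V\neq\emptyset$).
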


Covering $G$ by at most $k$ cliques is equivalent to coloring the complement graph $\overline G$ with at most $k$ colors. Hence the same construction yields a clique-cover formulation with $k|V|$ coordinates.

\subsection{Hamiltonian cycles}

Let $G=(V,E)$ have $n\ge2$ vertices in the directed case and $n\ge3$ in the simple undirected case; smaller instances are handled separately. Use vertices $(p,v)$ meaning that graph vertex $v$ occupies tour position $p\in\{0,\ldots,n-1\}$. Two such assignments are compatible if they use different positions and different graph vertices, and if assignments to consecutive positions follow an edge of $G$.

For a directed graph, more explicitly, $(p,u)$ and $(q,v)$ are adjacent when:
\begin{enumerate}[label=(\roman*)]
\item $p\ne q$ and $u\ne v$;
\item if $q\equiv p+1\pmod n$, then $(u,v)\in E$;
\item if $p\equiv q+1\pmod n$, then $(v,u)\in E$.
\end{enumerate}
For nonconsecutive positions, no graph-edge condition is imposed.

\begin{proposition}[Hamiltonian cycle]
\label{prop:hamiltonian}
The directed graph $G$ contains a Hamiltonian cycle if and only if the compatibility graph contains a clique of size $n$. Therefore
\begin{equation}
  G\text{ Hamiltonian}
  \quad\Longleftrightarrow\quad
  \nu(\Qhat(H_G^{\mathrm{Ham}}))=\frac1n.
  \label{eq:hamstqp}
\end{equation}
The formulation has $N_\Delta=n^2$ coordinates. The undirected case uses the same construction with undirected adjacency.
\end{proposition}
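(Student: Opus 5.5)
The argument follows the template of \Cref{prop:csp}: view the position--vertex compatibility graph $H_G^{\mathrm{Ham}}$ as a microstructure graph and invoke \Cref{thm:graphcompiler} for the value statement.

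\textbf{Upper bound on the clique size.} Assignments sharing a position are nonadjacent by condition (i). Hence the $n$ position groups $\{(p,v):v\in V\}$ are independent sets, and every clique of $H_G^{\mathrm{Ham}}$ has size at most $n$. The same conclusion holds using the vertex groups.

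\textbf{From a Hamiltonian cycle to a clique.} Let $v_0\to v_1\to\cdots\to v_{n-1}\to v_0$ be a Hamiltonian cycle, and take the set $\{(p,v_p):p=0,\ldots,n-1\}$. Any two members have distinct positions and distinct graph vertices. If the positions are consecutive modulo $n$, then $q\equiv p+1$ means $(v_p,v_q)$ is a cycle edge, including the wraparound $q=0$, $p=n-1$. Condition (iii) is the same check with the roles exchanged. So this set is an $n$-clique.

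\textbf{From a clique to a Hamiltonian cycle.} Let $K$ be an $n$-clique. It contains exactly one element $(p,\pi(p))$ for each position $p$. Condition (i) makes $\pi$ injective, and therefore a bijection $\{0,\ldots,n-1\}\to V$. Applying condition (ii) to the pair at positions $p$ and $p+1 \bmod n$ gives $(\pi(p),\pi(p+1))\in E$. Hence $\pi(0)\pi(1)\cdots\pi(n-1)\pi(0)$ is a directed closed walk visiting every vertex exactly once, that is, a Hamiltonian cycle.

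\textbf{The main subtlety: small $n$.} When $n=2$, positions $0$ and $1$ are consecutive in both directions. The pair then requires both $(u,v)$ and $(v,u)$ in $E$, which is exactly a directed 2-cycle, so the construction remains correct. In the undirected case with $n\ge3$, both conditions reduce to $\{u,v\}\in E$. The same two directions of the argument then produce an undirected Hamiltonian cycle, which is where the hypothesis $n\ge3$ ensures the cycle is simple.

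\textbf{Conclusion.} Since $\omega(H_G^{\mathrm{Ham}})\le n$, equality $\omega=n$ is equivalent to Hamiltonicity. By \eqref{eq:integergraph}, $\nu(\Qhat(H))=1/\omega(H)$, which equals $1/n$ exactly when $\omega=n$; this gives \eqref{eq:hamstqp}. The coordinate count is $|\{0,\ldots,n-1\}\times V|=n^2$.
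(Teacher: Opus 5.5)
Your proof is correct and follows the paper's route. An $n$-clique has one assignment per position and per graph vertex, so it defines a permutation, and compatibility at consecutive positions makes that permutation a Hamiltonian cycle; the converse selects the cycle's position--vertex pairs, and $\omega(H_G^{\mathrm{Ham}})\le n$ together with \eqref{eq:integergraph} gives the value identity, with your treatment of the $n=2$ and undirected $n\ge3$ cases spelling out details the paper leaves implicit.
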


\begin{proof}
An $n$-clique contains at most one vertex for each position and at most one occurrence of each graph vertex, so it defines a permutation of $V$. Compatibility at consecutive positions makes that permutation a Hamiltonian cycle. The converse follows by selecting the $n$ position--vertex assignments from any Hamiltonian cycle.
\end{proof}

The construction is closely related to one-hot QUBO encodings of tours, but its constraint logic is expressed solely through the compatibility graph. There is no separate penalty-scale choice: all conflicts receive the same coefficient in $\Qhat$.

\subsection{Graph isomorphism and association graphs}

Graph isomorphism is not known to be NP-complete, but it is a useful illustration of the association-graph principle and has a well-established continuous clique interpretation \citep{Pelillo1999}; association graphs underlie continuous methods for matching, common subgraphs, and shape correspondence \citep{pelillo-siddiqi-zucker-assoc-1999,leordeanu-hebert-spectral-2005}, and clique and constraint models are established approaches to maximum common subgraph \citep{mccreesh-clique-constraint-mcs-2016}. Let $G=(V_G,E_G)$ and $H=(V_H,E_H)$ have $n$ vertices. Create one vertex $(u,v)$ for each possible mapping $u\in V_G$ to $v\in V_H$. Two mapping vertices $(u,v)$ and $(u',v')$ are adjacent when $u\ne u'$, $v\ne v'$, and
\begin{equation}
  \{u,u'\}\in E_G
  \quad\Longleftrightarrow\quad
  \{v,v'\}\in E_H.
  \label{eq:isocompat}
\end{equation}

\begin{proposition}[Graph isomorphism]
The graphs $G$ and $H$ are isomorphic if and only if the association graph has a clique of size $n$, equivalently if the associated three-level StQP has optimum $1/n$. The formulation uses $n^2$ coordinates.
\end{proposition}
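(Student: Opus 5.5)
The argument follows the pattern of \Cref{prop:hamiltonian}: first establish the combinatorial equivalence between isomorphisms and $n$-cliques in the association graph, call it $H_{G,H}^{\mathrm{assoc}}$, and then transfer it to the StQP value through \Cref{thm:graphcompiler} applied to $\Qhat(H_{G,H}^{\mathrm{assoc}})$. The first step is the upper bound $\omega(H_{G,H}^{\mathrm{assoc}})\le n$. Mapping vertices that share a first coordinate $u$ are pairwise nonadjacent, because adjacency requires $u\ne u'$. The $n^2$ vertices therefore split into $n$ independent groups of size $n$, and any clique contains at most one vertex from each group. In the same way, two vertices sharing a second coordinate are nonadjacent.

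For the forward direction, let $\pi:V_G\to V_H$ be an isomorphism and take $C=\{(u,\pi(u)):u\in V_G\}$. Distinct elements $(u,\pi(u))$ and $(u',\pi(u'))$ satisfy $u\ne u'$, and injectivity of $\pi$ gives $\pi(u)\ne\pi(u')$. The isomorphism property is exactly \eqref{eq:isocompat}, so $C$ is an $n$-clique. For the converse, let $C$ be an $n$-clique. Since $C$ meets each first-coordinate group at most once and there are $n$ groups, it meets each exactly once, which defines a map $\pi$ with $(u,\pi(u))\in C$. Pairwise adjacency forces the second coordinates to be distinct, so $\pi$ is injective and hence bijective because $|V_G|=|V_H|=n$. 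Applying \eqref{eq:isocompat} to every pair shows that $\pi$ preserves both edges and nonedges, so $\pi$ is an isomorphism.

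By \eqref{eq:integergraph}, $\nu(\Qhat(H_{G,H}^{\mathrm{assoc}}))=1/\omega(H_{G,H}^{\mathrm{assoc}})$. Since $\omega\le n$ always holds, $\omega=n$ is equivalent to $\nu=1/n$, and $\omega<n$ is equivalent to $\nu>1/n$. This gives the stated equivalence. By part (i) of \Cref{thm:graphcompiler}, the support of any global minimizer is an $n$-clique, from which an isomorphism can be read off. The coordinate count is $N_\Delta=|V_G|\cdot|V_H|=n^2$. The only steps that need care are the bound $\omega\le n$ and the bijectivity argument; no further obstacle is expected, and nonemptiness of the association graph holds whenever $n\ge1$.
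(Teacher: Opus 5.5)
Your proof is correct and follows the argument the paper leaves implicit here; the statement is given without a separate proof, by analogy with \Cref{prop:hamiltonian}. At most one mapping vertex per first and per second coordinate forces an $n$-clique to be a bijection, \eqref{eq:isocompat} gives preservation of edges and nonedges, and \Cref{thm:graphcompiler} turns $\omega=n$ into $\nu(\Qhat)=1/n$. One small imprecision: every global minimizer has $n$-clique support only when $G$ and $H$ are isomorphic; otherwise the support is a maximum clique of size less than $n$.
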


For subgraph isomorphism from $G$ into $H$, allow different graph orders, replace the equivalence in \eqref{eq:isocompat} by the implication $\{u,u'\}\in E_G\Rightarrow\{v,v'\}\in E_H$, and use $|V_G|$ as the target clique size. Keeping the equivalence while allowing partial injective supports instead gives the maximum common \emph{induced} subgraph formulation. Weighted vertex scores can be included through the weighted clique construction below.

\subsection{Arbitrary-arity hard constraints}
\label{subsec:arity}

For constraints involving more than two variables, use one vertex for every allowed tuple of every constraint. Vertices from different constraints are adjacent when the tuples agree on all shared variables; vertices from the same constraint are nonadjacent. A clique containing one tuple from every constraint is equivalent to a globally consistent satisfying assignment. If there are $F$ constraints and $A_\alpha$ allowed tuples for constraint $\alpha$, then
\begin{equation}
  N_\Delta=\sum_{\alpha=1}^F|A_\alpha|.
  \label{eq:aritysize}
\end{equation}
This construction is polynomial in the explicit allowed-tuple list, but expanding succinct high-arity constraints can require exponentially many tuples. The tuple graph is the feasibility counterpart of the weighted factor-state compiler developed next. The two are complementary rather than nested: pruning to allowed tuples is exact here, but falls outside the hypotheses of the weighted compiler (\Cref{rem:nopruning}).

\section{Weighted clique and a generic factor-state compiler}
\label{sec:weightedcompiler}

The unweighted graph compiler captures feasibility and cardinality. Optimization objectives require weights. We first prove a weighted clique StQP formula, then use it to compile arbitrary finite-domain factor objectives.

\subsection{A weighted clique StQP theorem}

Let $G=(V,E)$ have positive vertex weights $w_i>0$. For a clique $C$, write
\[
  w(C):=\sum_{i\in C}w_i,
  \qquad
  \omega_w(G):=\max\set{w(C):C\text{ is a clique}}.
\]
Let $w_{\min}:=\min_i w_i$, $w_{\max}:=\max_i w_i$, and choose $0<\tau<1$. Define $Q_{\tau,w}(G)$ entrywise by
\begin{equation}
  q_{ij}=
  \begin{cases}
    \displaystyle \tau\frac{w_{\min}}{w_i}, & i=j,\\[5pt]
    0, & i\ne j\text{ and }\{i,j\}\in E,\\
    1, & i\ne j\text{ and }\{i,j\}\notin E.
  \end{cases}
  \label{eq:weightedQ}
\end{equation}
This is a regularized version of weighted Motzkin--Straus formulations \citep{BomzePelilloStix2000,Busygin2006}.

\begin{theorem}[Weighted graph compiler]
\label{thm:weightedclique}
For the matrix in \eqref{eq:weightedQ},
\begin{equation}
  \nu(Q_{\tau,w}(G))
  =\frac{\tau w_{\min}}{\omega_w(G)}.
  \label{eq:weightedvalue}
\end{equation}
There exists a global minimizer supported on a maximum-weight clique $C^\star$, with
\begin{equation}
  x_i^\star=
  \begin{cases}
    \displaystyle\frac{w_i}{w(C^\star)}, & i\in C^\star,\\[5pt]
    0, & i\notin C^\star.
  \end{cases}
  \label{eq:weightedbary}
\end{equation}
Every local minimizer, and therefore every global minimizer, has clique support. An arbitrary feasible point can be purified to clique support without increasing the objective.
\end{theorem}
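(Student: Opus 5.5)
I will mirror the three-step proof of \Cref{thm:graphcompiler}: purify supports, minimize on a fixed clique, and compare across cliques. Here purification has to cope with unequal diagonal entries.

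\textbf{Purification.} Let $x\in\Delta_n$ and suppose $x_i,x_j>0$ for a nonedge $\{i,j\}\notin E$. Along $x(t)=x+t(e_i-e_j)$ the objective satisfies
\[
  f''(t)=2\bigl(q_{ii}+q_{jj}-2\bigr).
\]
Since $w_{\min}/w_i\le1$ and $\tau<1$, each diagonal entry satisfies $q_{ii}\le\tau<1$. Hence $f''(t)\le 2(2\tau-2)<0$, so $f$ is strictly concave on the segment $-x_i\le t\le x_j$. Two consequences follow, as in the unweighted case.
\begin{enumerate}[label=(\alph*)]
\item An interior point of the segment cannot be a local minimizer. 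Therefore every local minimizer, and hence every global minimizer, has clique support.
\item Moving to the better endpoint zeroes one coordinate of the offending pair without increasing the objective. Each such move strictly shrinks the support, so after at most $n$ steps we reach a point with clique support.
\end{enumerate}

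\textbf{Minimization on a clique.} If $\operatorname{supp}(x)\subseteq C$ for a clique $C$, all off-diagonal terms vanish and
\[
  x^{\mathsf T}Qx=\tau w_{\min}\sum_{i\in C}\frac{x_i^2}{w_i}.
\]
By Cauchy--Schwarz,
\[
  1=\Bigl(\sum_{i\in C}x_i\Bigr)^2\le\Bigl(\sum_{i\in C}\frac{x_i^2}{w_i}\Bigr)\Bigl(\sum_{i\in C}w_i\Bigr),
\]
so $\sum_{i\in C}x_i^2/w_i\ge 1/w(C)$. Equality holds exactly when $x_i\propto w_i$ on $C$, that is, $x_i=w_i/w(C)$. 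So the minimum over the face of $C$ is $\tau w_{\min}/w(C)$, attained at the weighted barycenter~\eqref{eq:weightedbary}.

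\textbf{Conclusion.} Since purification never increases the objective, $\nu$ equals the minimum over cliques of $\tau w_{\min}/w(C)$. This expression is decreasing in $w(C)$, so the minimum is $\tau w_{\min}/\omega_w(G)$. It is attained at the weighted barycenter of any maximum-weight clique $C^\star$, which proves~\eqref{eq:weightedvalue} and the existence statement.

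\textbf{Expected obstacle.} The only point needing care is the concavity bound: it requires every diagonal entry to be strictly less than $1$. This holds precisely because of the normalization by $w_{\min}$ together with $\tau<1$.

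The statement does not assert that global minimizers are unique or supported only on maximum-weight cliques. So I will not attempt a full local-minimum characterization. That would need KKT gradient comparisons like $\nabla_j f=2r_j(C)/w(C)$ versus $2\tau w_{\min}/w(C)$, and these can fail in the weighted setting.
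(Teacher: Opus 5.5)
Your proof is correct and follows essentially the same route as the paper: strict concavity along $e_i-e_j$ on a nonedge ($f''=2(q_{ii}+q_{jj}-2)\le 4(\tau-1)<0$) gives both the clique support of local minimizers and the objective-nonincreasing purification, and Cauchy--Schwarz on each clique face gives the value $\tau w_{\min}/w(C)$ at the weighted barycenter. One correction to your closing aside, which does not affect this proof: at $x(C,w)$ the off-clique gradient is $2r^w_j(C)/w(C)$, where $r^w_j(C)$ is the total \emph{weight} of the non-neighbours of $j$ in $C$, so for maximal $C$ one has $r^w_j(C)\ge w_{\min}>\tau w_{\min}$; the KKT comparison therefore does not fail, and the full maximal-clique correspondence holds, as the paper shows in \Cref{thm:weightedlandscape}.
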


\begin{proof}
If positive coordinates $x_i,x_j$ correspond to a nonedge, then along $x+t(e_i-e_j)$ the second derivative is
\[
  2\left(\tau\frac{w_{\min}}{w_i}
  +\tau\frac{w_{\min}}{w_j}-2\right)
  \le4(\tau-1)<0.
\]
Thus a local minimizer cannot contain a nonedge in its support, and an arbitrary feasible point can be purified by endpoint moves until its support is a clique.

For a fixed clique $C$, off-diagonal terms vanish and
\[
  x^{\mathsf T}Q_{\tau,w}x
  =\tau w_{\min}\sum_{i\in C}\frac{x_i^2}{w_i}.
\]
Cauchy--Schwarz gives
\[
  1=\left(\sum_{i\in C}x_i\right)^2
  \le\left(\sum_{i\in C}\frac{x_i^2}{w_i}\right)
      \left(\sum_{i\in C}w_i\right),
\]
so the face minimum is $\tau w_{\min}/w(C)$, attained exactly at \eqref{eq:weightedbary}. Minimizing over cliques selects a maximum-weight clique and proves \eqref{eq:weightedvalue}.
\end{proof}

\begin{remark}
\label{rem:weightscaling}
The diagonal values encode the vertex weights through reciprocals, while incompatibility remains binary. Scaling all $w_i$ by a common positive constant changes neither the matrix nor the solution. The coefficient dynamic range on the diagonal is $w_{\max}/w_{\min}$.
\end{remark}

\Cref{thm:weightedclique} locates the global minimum. The next theorem describes every local minimizer, extending \Cref{thm:graphcompiler} to positive weights.

\begin{theorem}[Weighted landscape]
\label{thm:weightedlandscape}
Let $G$ be a nonempty graph with positive vertex weights $w$, let $0<\tau<1$, and let $Q_{\tau,w}(G)$ be as in \eqref{eq:weightedQ}. For a clique $C$ write $x(C,w):=\sum_{i\in C}(w_i/w(C))\,e_i\in\Delta_n$ for its weighted barycenter; \eqref{eq:weightedbary} is $x(C^\star,w)$. Then:
\begin{enumerate}[label=(\roman*)]
\item $x\in\Delta_n$ is a local minimizer if and only if $x=x(C,w)$ for a maximal clique $C$;
\item every local minimizer is strict and isolated;
\item $x$ is a global minimizer if and only if $x=x(C,w)$ for a maximum-weight clique.
\end{enumerate}
In particular the number of strict local minimizers equals $|\mathcal M(G)|$, the number of maximal cliques, independently of the weights.
\end{theorem}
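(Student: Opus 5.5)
We follow the three-step template of the proof of \Cref{thm:graphcompiler}: support purification, minimization on a face, and KKT analysis at the face minimizer. \Cref{thm:weightedclique} already gives the first step. Along $x+t(e_i-e_j)$ for a nonedge $\{i,j\}$ in the support, the objective is strictly concave, so every local minimizer has clique support. Now fix a clique $C$ and restrict to the face $\{x\in\Delta_n:\operatorname{supp}(x)\subseteq C\}$. There the objective is $\tau w_{\min}\sum_{i\in C}x_i^2/w_i$, which is strictly convex. The Cauchy--Schwarz equality case identifies its unique minimizer as $x(C,w)$, with value $\tau w_{\min}/w(C)$.

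\textbf{Step 1: reduce to barycenters.} A local minimizer $x$ has clique support $S$, so it is also a local minimizer on the face of $S$. Strict convexity on that face then forces $x=x(S,w)$. Hence every local minimizer has the form $x(C,w)$ for some clique $C$, and it remains to decide which cliques $C$ qualify.

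\textbf{Step 2: compute the gradient at $x(C,w)$.} Let $x=x(C,w)$ and $W=w(C)$. For $i\in C$ we get $\nabla_i f=2\tau w_{\min}x_i/w_i=2\tau w_{\min}/W$, so the gradient is constant on the support. For $j\notin C$ we get $\nabla_j f=2\sum_{i\in C,\,\{i,j\}\notin E}w_i/W$. If $C$ is maximal, every $j\notin C$ has a non-neighbour $i\in C$. Then $\nabla_j f\ge 2w_i/W\ge 2w_{\min}/W>2\tau w_{\min}/W$, using $\tau<1$. So the KKT conditions hold with strict complementarity.

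\textbf{Step 3: conclude strictness and the converse.} The reduced Hessian on the face is $2\tau w_{\min}\operatorname{diag}(1/w_i)$, which is positive definite. Strict complementarity together with a positive definite reduced Hessian yields a strict local minimizer; this is the standard second-order sufficient condition for polyhedral constraints. To see it directly, any feasible direction $d$ with $\sum d=0$ either has $d_j>0$ for some $j\notin C$, which gives a first-order increase, or lies in the face, which gives a second-order increase. Conversely, suppose $C$ is not maximal, so some $j\notin C$ is adjacent to all of $C$. Then $\nabla_j f=0<2\tau w_{\min}/W$, and shifting mass from $C$ toward $e_j$ strictly decreases $f$. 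So $x(C,w)$ is not a local minimizer. Since there are finitely many maximal cliques, the local minimizers are finitely many points, hence isolated. This proves (i) and (ii), and it also gives the count $|\mathcal M(G)|$: distinct maximal cliques have distinct supports, so the map from maximal cliques to minimizers is injective.

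\textbf{Step 4: global minimizers.} Every global minimizer is local, so by (i) it equals $x(C,w)$ for a maximal clique $C$, with value $\tau w_{\min}/w(C)$. Comparing with \eqref{eq:weightedvalue}, this forces $w(C)=\omega_w(G)$. Conversely, any maximum-weight clique is maximal because the weights are positive, and its barycenter attains the optimal value. This proves (iii).

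The main obstacle is the strict-minimality argument in Step 3. There we must verify carefully that the feasible directions at a boundary point split cleanly into those leaving the face and those inside it. We also need the first-order margin $\nabla_j f-2\tau w_{\min}/W$ to dominate the quadratic remainder along mixed directions. I would handle this with an explicit expansion: write $f(x+d)-f(x)=\nabla f^{\mathsf T}d+d^{\mathsf T}Qd$, bound $d^{\mathsf T}Qd\ge -c\|d\|^2$, and use the positive margin on $\sum_{j\notin C}d_j$ to absorb that negative term for small $d$.
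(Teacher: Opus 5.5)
Your proof follows the paper's route: purification to clique support, strict convexity on the face to force $x=x(C,w)$, the comparison $(Q_{\tau,w}x)_i=\tau w_{\min}/w(C)$ on $C$ versus $r^w_j(C)/w(C)\ge w_{\min}/w(C)$ off $C$, descent toward $e_j$ when $C$ is not maximal, finiteness for isolation, and (iii) read off from (i). It is correct provided strictness rests on the standard second-order sufficient condition. The constraints are linear, KKT multipliers exist, and strict complementarity makes the critical cone exactly $\{d:\one^{\mathsf T}d=0,\ d_j=0\ (j\notin C)\}$. On that cone, $d^{\mathsf T}Q_{\tau,w}d=\tau w_{\min}\sum_{i\in C}d_i^2/w_i>0$.

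The hands-on argument you sketch would not close as written. Your dichotomy (``first-order increase'' off the face, ``second-order increase'' in the face) only gives a different small radius along each ray, not a uniform neighbourhood. The crude bound $d^{\mathsf T}Q_{\tau,w}d\ge -c\norm{d}^2$ also cannot be absorbed by the margin $\mu s$, where $s=\sum_{j\notin C}d_j$ and $\mu=2(1-\tau)w_{\min}/w(C)$. For mixed directions with $s$ much smaller than $\norm{d}^2$, the needed inequality $\mu s>c\norm{d}^2$ is false. An example, when $\abs{C}\ge2$, is a face component of size $\epsilon$ with off-face mass $\epsilon^3$.

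The paper avoids this by splitting $d=u+v$, with $u$ supported on $C$ and $v\ge0$ off $C$, and by keeping the structure of the matrix:
\begin{itemize}
\item $u^{\mathsf T}Q_{\tau,w}u\ge0$, since the matrix is diagonal and positive on $C$;
\item $v^{\mathsf T}Q_{\tau,w}v\ge0$, by entrywise nonnegativity;
\item $\abs{u^{\mathsf T}Q_{\tau,w}v}\le\norm{u}_1 s$, since all entries are at most one.
\end{itemize}
The only negative contribution is then linear in $s$, so $f(x)-f(x^\star)\ge 2s\bigl((1-\tau)w_{\min}/w(C)-\norm{d}_1\bigr)$. The case $s=0$ is the face case, handled by strict convexity. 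This gives an explicit $\ell_1$ radius $(1-\tau)w_{\min}/w(C)$ for strictness, which the SOSC citation does not supply. If you prefer generic norms, retain $u^{\mathsf T}Q_{\tau,w}u\ge\lambda\norm{u}^2$ and bound only the terms involving $v$, which are $O(\norm{d}\,s)$.
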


\begin{proof}
By the purification argument in the proof of \Cref{thm:weightedclique}, every local minimizer has clique support. If $x$ is a local minimizer with support exactly the clique $C$, then $x$ lies in the relative interior of the face generated by $C$ and minimizes $f=x^{\mathsf T}Q_{\tau,w}x$ locally there; on that face $f=\tau w_{\min}\sum_{i\in C}x_i^2/w_i$ is strictly convex, so $x=x(C,w)$ by \Cref{thm:weightedclique}. Local minimizers are therefore among the weighted barycenters of cliques.

Fix a clique $C$ and put $x^\star=x(C,w)$. For $j\notin C$ let
\begin{equation}
  r^w_j(C):=\sum_{\substack{i\in C\\ \{i,j\}\notin E}} w_i
  \label{eq:weighteddeficiency}
\end{equation}
be the total weight of vertices of $C$ not adjacent to $j$. Since off-diagonal entries vanish inside $C$,
\begin{equation}
  (Q_{\tau,w}x^\star)_i=\tau\frac{w_{\min}}{w_i}\cdot\frac{w_i}{w(C)}
  =\frac{\tau w_{\min}}{w(C)}
  \qquad(i\in C),
  \label{eq:wgradin}
\end{equation}
which is constant on $C$, while for $j\notin C$ only nonedges to $C$ contribute, each with coefficient one, so
\begin{equation}
  (Q_{\tau,w}x^\star)_j=\frac{r^w_j(C)}{w(C)}.
  \label{eq:wgradout}
\end{equation}

Suppose $C$ is maximal. Every $j\notin C$ then has a non-neighbour in $C$, so $r^w_j(C)\ge w_{\min}>\tau w_{\min}$ because $\tau<1$; comparing \eqref{eq:wgradin} and \eqref{eq:wgradout}, the KKT conditions hold with strict complementarity. For strictness, let $x\in\Delta_n$ with $d:=x-x^\star\ne0$. Then $\one^{\mathsf T}d=0$, and $d_j=x_j\ge0$ for $j\notin C$ because $x^\star_j=0$ there; put $s:=\sum_{j\notin C}d_j\ge0$. Using $\sum_{i\in C}d_i=-s$,
\begin{equation}
  d^{\mathsf T}Q_{\tau,w}x^\star
  =\frac{1}{w(C)}\sum_{j\notin C}d_j\bigl(r^w_j(C)-\tau w_{\min}\bigr)
  \ \ge\ \frac{(1-\tau)w_{\min}}{w(C)}\,s
  \ \ge\ 0 .
  \label{eq:wfirstorder}
\end{equation}
Split $d=u+v$ with $u$ supported on $C$ and $v\ge0$ supported off $C$, so $\norm{v}_1=s$ and $\norm{u}_1\le\norm{d}_1$. The matrix $Q_{\tau,w}$ is entrywise nonnegative with entries at most one and is diagonal on $C$, so $u^{\mathsf T}Q_{\tau,w}u\ge0$, $v^{\mathsf T}Q_{\tau,w}v\ge0$ and $\abs{u^{\mathsf T}Q_{\tau,w}v}\le\norm{u}_1 s$. Hence
\[
  f(x)-f(x^\star)=2d^{\mathsf T}Q_{\tau,w}x^\star+d^{\mathsf T}Q_{\tau,w}d
  \ \ge\ 2s\left(\frac{(1-\tau)w_{\min}}{w(C)}-\norm{d}_1\right).
\]
If $s>0$ and $\norm{d}_1<(1-\tau)w_{\min}/w(C)$, the right-hand side is positive. If $s=0$, then $d=u\ne0$, the linear term vanishes, and $f(x)-f(x^\star)=\tau w_{\min}\sum_{i\in C}u_i^2/w_i>0$. Thus $f(x)>f(x^\star)$ for every $x\in\Delta_n$ with $0<\norm{x-x^\star}_1<(1-\tau)w_{\min}/w(C)$: the minimizer is strict.

If $C$ is not maximal, some $j\notin C$ is adjacent to every vertex of $C$, so $r^w_j(C)=0$ and $(e_j-x^\star)^{\mathsf T}Q_{\tau,w}x^\star=-\tau w_{\min}/w(C)<0$; moving mass toward $e_j$ decreases the objective and $x(C,w)$ is not locally minimizing. This proves (i) and strictness. Since the graph has finitely many maximal cliques, (i) also gives isolation, completing (ii). For (iii), a global minimizer exists by compactness and is a local minimizer, so by (i) it equals $x(C,w)$ for a maximal clique $C$, with value $\tau w_{\min}/w(C)$. This value is minimal exactly when $w(C)=\omega_w(G)$, and every maximum-weight clique is maximal because the weights are positive.
\end{proof}

\paragraph{Relation to the weighted regularization of Bomze--Pelillo--Stix.}
The local/global clique correspondence and strictness are already established for a class of weighted regularizations in \citet{BomzePelilloStix2000}. In their normalization, a matrix $C$ has $C_{ii}=1/(2w_i)$, zero edge entries, and $C_{ij}\ge C_{ii}+C_{jj}$ on nonedges. The rescaling $C=Q_{\tau,w}/(2\tau w_{\min})$ satisfies this condition when
\[
  \tau w_{\min}\left(\frac1{w_i}+\frac1{w_j}\right)\le1
  \qquad\text{for every nonedge }\{i,j\},
\]
in particular for $0<\tau\le1/2$. The proof above verifies the binary off-diagonal normalization for the full range $0<\tau<1$, without requiring that stronger matrix-class condition. We use its explicit constants in the factor-state and precision analyses; the weighted clique correspondence itself is not a new principle.

\begin{remark}[The landscape is weight-independent]
\label{rem:weightindep}
Weights move the \emph{values} of the local minima but not their \emph{number} or their combinatorial meaning: the strict local minimizers of $\StQP(Q_{\tau,w}(G))$ stand in bijection with $\mathcal M(G)$ for every positive weight vector, exactly as in the unweighted case \eqref{eq:numberlocal}. The consequences for complete factor-state models are given in \Cref{cor:allassignments}.
\end{remark}

\subsection{Finite-domain factor models}

Let $z=(z_1,\ldots,z_n)$, where $z_i$ belongs to a finite domain $D_i$. Consider
\begin{equation}
  E(z)=\sum_{\alpha\in\mathcal F}\phi_\alpha(z_{S_\alpha}),
  \label{eq:factorobjective}
\end{equation}
where $S_\alpha\subseteq[n]$ is the scope of factor $\alpha$, and $z_{S_\alpha}$ denotes the local assignment on that scope. No sign assumption is imposed on $\phi_\alpha$. The identities below allow real table values. For polynomial-time compilation, the factors are supplied as explicit complete rational tables, with rational offsets of polynomial encoding length; a fixed rational $\tau$ may be used. Complexity is measured in the total table encoding length, not in a succinct description of arbitrary-arity factors. A single factor on $r$ binary variables has $2^r$ local states.

\begin{construction}[Factor-state compatibility graph]
\label{con:factorgraph}
For each factor $\alpha$ and each local state
\[
  a\in\mathcal A_\alpha:=\prod_{i\in S_\alpha}D_i,
\]
create a vertex $v=(\alpha,a)$. Vertices belonging to the same factor are nonadjacent. Vertices $(\alpha,a)$ and $(\beta,b)$ with $\alpha\ne\beta$ are adjacent when $a$ and $b$ agree on every variable in $S_\alpha\cap S_\beta$.

Choose constants
\begin{equation}
  B_\alpha>\max_{a\in\mathcal A_\alpha}\phi_\alpha(a)
  \label{eq:Balpha}
\end{equation}
and give vertex $(\alpha,a)$ the positive weight
\begin{equation}
  w_{\alpha,a}:=B_\alpha-\phi_\alpha(a).
  \label{eq:factorweight}
\end{equation}
Write $B:=\sum_\alpha B_\alpha$.
\end{construction}

\begin{theorem}[Exact factor-state compiler]
\label{thm:factorcompiler}
Assume every variable that matters to the objective appears in at least one factor, and that $H_E$ carries a vertex for \emph{every} local state $a\in\mathcal A_\alpha=\prod_{i\in S_\alpha}D_i$ of every factor, with no states pruned. Let $H_E$ be the weighted compatibility graph in \Cref{con:factorgraph}. Then
\begin{equation}
  \omega_w(H_E)=B-E^\star,
  \qquad
  E^\star:=\min_z E(z).
  \label{eq:factorcliquevalue}
\end{equation}
Consequently, for the weighted StQP matrix $Q_E=Q_{\tau,w}(H_E)$,
\begin{equation}
  \nu(Q_E)=\frac{\tau w_{\min}}{B-E^\star},
  \qquad
  E^\star=B-\frac{\tau w_{\min}}{\nu(Q_E)}.
  \label{eq:factorrecover}
\end{equation}
Every global minimizer decodes to a globally optimal assignment. The simplex dimension is
\begin{equation}
  N_\Delta=\sum_{\alpha\in\mathcal F}|\mathcal A_\alpha|.
  \label{eq:factorsize}
\end{equation}
\end{theorem}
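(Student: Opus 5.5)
The plan is to prove $\omega_w(H_E)=B-E^\star$ by a two-sided comparison between cliques of $H_E$ and assignments $z$. The remaining claims then follow from \Cref{thm:weightedclique} and \Cref{thm:weightedlandscape}. The key structural fact is that, since vertices of the same factor are nonadjacent, a clique contains at most one local state per factor. Positivity of the weights $w_{\alpha,a}=B_\alpha-\phi_\alpha(a)>0$, guaranteed by \eqref{eq:Balpha}, lets us restrict attention to maximal, and indeed maximum, cliques.

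\emph{Lower bound.} Fix any assignment $z$ and take $C(z):=\{(\alpha,z_{S_\alpha}):\alpha\in\mathcal F\}$. Any two of its vertices come from different factors and are restrictions of the same global $z$, so they agree on shared variables and $C(z)$ is a clique. Its weight is
\[
  w(C(z))=\sum_\alpha\bigl(B_\alpha-\phi_\alpha(z_{S_\alpha})\bigr)=B-E(z).
\]
Taking $z$ optimal gives $\omega_w(H_E)\ge B-E^\star$.

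\emph{Upper bound.} This is the main step. Let $C$ be a maximum-weight clique. First I show that $C$ meets every factor. If some factor $\beta$ is missed, the vertices of $C$ agree pairwise on overlaps, so they define a consistent partial assignment on $\bigcup_{\alpha\in C}S_\alpha$. Extend it arbitrarily to $S_\beta$; this is where the no-pruning hypothesis is used, since every local state of $\beta$ has a vertex. The extended state is adjacent to all of $C$, and adding it strictly increases the weight by positivity, a contradiction. So $C$ has exactly one state per factor, pairwise consistent. These states glue to an assignment $z$ on all variables in any scope; variables in no scope do not affect $E$ and are set arbitrarily. Then $C=C(z)$ and $w(C)=B-E(z)\le B-E^\star$. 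Combining both bounds gives \eqref{eq:factorcliquevalue}.

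\emph{Consequences.} Substituting \eqref{eq:factorcliquevalue} into \eqref{eq:weightedvalue} gives the first identity of \eqref{eq:factorrecover}; solving for $E^\star$ gives the second. By \Cref{thm:weightedlandscape}(iii), every global minimizer is $x(C,w)$ for a maximum-weight clique $C$. Its support is exactly $C$, because the weights are positive. By the argument above, $C=C(z)$ with $E(z)=E^\star$, so reading off the one support vertex per factor and gluing decodes an optimal assignment. Finally, \eqref{eq:factorsize} is just the vertex count of \Cref{con:factorgraph}.

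I expect the only delicate point to be the extension argument in the upper bound. It needs completeness of the tables and $B_\alpha$ strictly above the table maximum; with pruning or zero weights the argument breaks.
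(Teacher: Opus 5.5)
Your proposal is correct and follows the paper's proof essentially step for step. The clique $C(z)$ gives $\omega_w(H_E)\ge B-E^\star$, and the no-pruning extension argument combined with positive weights shows that every maximum-weight clique has one state per factor and hence equals some $C(z)$. Your appeal to \Cref{thm:weightedlandscape}(iii), showing that \emph{every} global minimizer (not merely one) is supported on a maximum-weight clique, makes explicit the decoding step that the paper's proof covers in a single sentence.
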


\begin{proof}
Any global assignment $z$ chooses one local state $(\alpha,z_{S_\alpha})$ from each factor. These vertices agree on overlaps and form a clique $C_z$ with weight
\[
  w(C_z)=\sum_\alpha\left(B_\alpha-\phi_\alpha(z_{S_\alpha})\right)
  =B-E(z).
\]
Thus $\omega_w(H_E)\ge B-E^\star$.

Conversely, vertices in a clique are pairwise consistent, so their local states combine into one partial global assignment. If a clique omits factor $\beta$, choose a local state for $\beta$ that agrees with this partial assignment on already assigned variables and assign arbitrary values to the remaining variables in $S_\beta$. Because all local states are present, this vertex exists and is adjacent to the clique. Its weight is positive, so no maximum-weight clique can omit a factor. Hence every maximum-weight clique contains exactly one state from every factor and corresponds to a global assignment $z$, with weight $B-E(z)$. This proves \eqref{eq:factorcliquevalue}. Equation \eqref{eq:factorrecover} follows from \Cref{thm:weightedclique}. The support states combine to the optimal assignment.
\end{proof}

\begin{remark}[Hard constraints must not prune states]
\label{rem:nopruning}
The completeness hypothesis on $\mathcal A_\alpha$ is essential to the stated theorem. The proof extends a clique that omits factor $\beta$ by exhibiting a state of $\beta$ consistent with the partial assignment already fixed; that state is guaranteed to exist precisely because $\mathcal A_\beta$ is the full product of domains. If disallowed states are deleted instead of penalized, the extension can fail: a maximum-weight clique may then legitimately omit a factor, and \eqref{eq:factorcliquevalue} is false. A minimal example has one variable $z_1\in\{0,1\}$ and two factors with scope $\{z_1\}$: $\phi_\alpha(0)=10$, $\phi_\alpha(1)=0$, $B_\alpha=11$, and $\beta$ enforces $z_1=0$ by keeping only the state $0$, with $\phi_\beta(0)=0$, $B_\beta=1$. The states $(\alpha,1)$ and $(\beta,0)$ disagree, so the cliques are $\{(\alpha,0),(\beta,0)\}$ of weight $2$ and $\{(\alpha,1)\}$ of weight $11$. Thus $\omega_w=11$, whereas $B-E^\star=2$, and the decoded assignment $z_1=1$ violates the constraint.

The proof uses less than completeness: it suffices that, for every factor $\beta$, every assignment to the variables that $S_\beta$ shares with other factors extends to a retained state of $\beta$. States that differ only on variables private to $S_\beta$ may therefore be pruned, and \eqref{eq:factorcliquevalue} then holds with $E^\star$ the minimum over retained assignments.

To retain the complete-state guarantee without a separate extendibility argument, encode hard constraints through the values $\phi_\alpha$ rather than deleting states. Write $\phi_\alpha=\psi_\alpha+P\,\mathbf 1[\text{constraint }\alpha\text{ violated}]$ with soft parts $\psi_\alpha$. If some assignment satisfies every constraint and
\[
  P>S_{\mathrm{soft}}:=\sum_{\gamma}\Bigl(\max_a\psi_\gamma(a)-\min_a\psi_\gamma(a)\Bigr),
\]
then every minimizer of $E$ is feasible, since a feasible $z$ has $E(z)\le\sum_\gamma\max\psi_\gamma$ and a violating one has $E(z)\ge P+\sum_\gamma\min\psi_\gamma$. \Cref{rem:offsets} discusses how larger penalties can affect coefficient ratios and conservative separation bounds. In particular the unweighted tuple graph of \Cref{subsec:arity}, which carries one vertex per \emph{allowed} tuple, is not a special case of \Cref{con:factorgraph}; it is the complementary construction, exact for pure feasibility but outside the scope of \Cref{thm:factorcompiler}.
\end{remark}

\begin{corollary}[Every assignment is a strict local minimizer]
\label{cor:allassignments}
Under the hypotheses of \Cref{thm:factorcompiler}, suppose every variable lies in the scope of some factor. Then the maximal cliques of $H_E$ are exactly the sets
\[
  C_z:=\set{(\alpha,z_{S_\alpha}):\alpha\in\mathcal F},
  \qquad z\in\textstyle\prod_i D_i .
\]
Consequently $\StQP(Q_E)$ has exactly $\prod_i|D_i|$ strict local minimizers, one for each assignment $z$, with value $\tau w_{\min}/(B-E(z))$. In particular the encoding in \Cref{cor:qubo} of an $n$-variable QUBO in which every variable appears in some term has $2^n$ strict local minimizers.
\end{corollary}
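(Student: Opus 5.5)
The plan is to characterize the maximal cliques of $H_E$ combinatorially and then read off the local minimizers from \Cref{thm:weightedlandscape}.

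\textbf{Step 1: every $C_z$ is a clique.} Fix an assignment $z\in\prod_i D_i$. The vertices $(\alpha,z_{S_\alpha})$ come from distinct factors and are all restrictions of the single assignment $z$, so they agree on every overlap $S_\alpha\cap S_\beta$. Hence $C_z$ is a clique.

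\textbf{Step 2: every $C_z$ is maximal.} Any vertex outside $C_z$ has the form $(\alpha,a)$ with $a\ne z_{S_\alpha}$. It belongs to factor $\alpha$, and $C_z$ already contains $(\alpha,z_{S_\alpha})$. Vertices of the same factor are nonadjacent, so $(\alpha,a)$ cannot be added. Thus $C_z$ is maximal.

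\textbf{Step 3: every maximal clique is some $C_z$.} This repeats the extension argument in the proof of \Cref{thm:factorcompiler}.
\begin{itemize}
\item A clique $K$ contains at most one state per factor, and its states are pairwise consistent. They therefore combine into a partial assignment on $\bigcup_{(\alpha,a)\in K}S_\alpha$.
\item Suppose $K$ omits some factor $\beta$. Take a state of $\beta$ that agrees with the partial assignment on the variables already fixed and is arbitrary on the rest. This state exists because $\mathcal A_\beta$ is the full product of domains. It is adjacent to every vertex of $K$, so $K$ was not maximal.
\item Hence a maximal clique $K$ hits every factor exactly once. Because every variable lies in some scope, the combined partial assignment is a full assignment $z$, and $K=C_z$.
\end{itemize}

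\textbf{Step 4: the map $z\mapsto C_z$ is injective.} If $z\ne z'$, they differ at some variable $i$. That variable lies in some scope $S_\alpha$, so $z_{S_\alpha}\ne z'_{S_\alpha}$ and therefore $C_z\ne C_{z'}$. Consequently $|\mathcal M(H_E)|=\prod_i|D_i|$.

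This injectivity is the only place where the extra hypothesis "every variable lies in some factor" is needed, as opposed to the weaker "every variable that matters to the objective." The counting claim $\prod_i|D_i|$ is the step most at risk if that hypothesis is dropped, so it must be invoked explicitly here.

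\textbf{Step 5: apply the weighted landscape theorem.} By \Cref{thm:weightedlandscape}(i)--(ii), the strict local minimizers of $\StQP(Q_E)$ are exactly the weighted barycenters $x(C_z,w)$, one for each maximal clique. Their value is $\tau w_{\min}/w(C_z)$. The proof of \Cref{thm:factorcompiler} gives $w(C_z)=B-E(z)$, so the value is $\tau w_{\min}/(B-E(z))$.

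\textbf{Step 6: the QUBO case.} For the QUBO encoding of \Cref{cor:qubo}, the domains are $\{0,1\}$. If every variable appears in some term, the hypotheses above hold, and the count is $2^n$ strict local minimizers.
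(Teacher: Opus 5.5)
Your proposal is correct and follows the paper's own proof essentially step for step. Like the paper, you show that $C_z$ is maximal because same-factor states are nonadjacent, use the complete-state extension step from \Cref{thm:factorcompiler} to show that every maximal clique is $C_z$ for a unique $z$ (with full variable coverage entering exactly at uniqueness), and conclude via \Cref{thm:weightedlandscape}(i)--(ii) and $w(C_z)=B-E(z)$, spelling out the injectivity and the QUBO specialization more explicitly than the paper does.
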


\begin{proof}
$C_z$ is a clique by the proof of \Cref{thm:factorcompiler}, and it is maximal because any other state $(\alpha,a)$ with $a\ne z_{S_\alpha}$ is nonadjacent to $(\alpha,z_{S_\alpha})$. Conversely, the extension step in the same proof shows that a clique omitting a factor is not maximal, and a clique with one state per factor is $C_z$ for a unique $z$ because every variable is covered. Apply \Cref{thm:weightedlandscape}(i)--(ii).
\end{proof}

\begin{remark}[What the landscape does not filter]
\label{rem:landscapefilter}
The continuous landscape of a compiled factor model eliminates no configuration: every assignment, feasible or not, optimal or not, is a strict local minimum, and only the ordering of the local values $\tau w_{\min}/(B-E(z))$ reflects the objective. This is a weaker filter than a QUBO's, whose local minima under single-bit flips are only the $1$-flip-stable assignments; near $x(C_z,w)$ there is no descent direction that changes even one variable. On the binary-linear route of \Cref{sec:bilp}, every infeasible configuration is likewise a strict local minimizer. Any advantage of a continuous method on these encodings must therefore come from its trajectory and basins of attraction from the interior of the simplex, not from an absence of spurious local minima.
\end{remark}

\begin{corollary}[Decision threshold]
For any $T<B$,
\begin{equation}
  E^\star\le T
  \quad\Longleftrightarrow\quad
  \nu(Q_E)\le\frac{\tau w_{\min}}{B-T}.
  \label{eq:factorthreshold}
\end{equation}
\end{corollary}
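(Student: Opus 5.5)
The plan is to derive the threshold equivalence directly from the value formula \eqref{eq:factorrecover} of \Cref{thm:factorcompiler}, using monotonicity of the reciprocal map. By that theorem, $\nu(Q_E)=\tau w_{\min}/(B-E^\star)$, so it suffices to compare $\tau w_{\min}/(B-E^\star)$ with $\tau w_{\min}/(B-T)$.

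First I will confirm that every denominator is positive. By \eqref{eq:Balpha}, each $B_\alpha>\phi_\alpha(a)$ for every local state $a$. Summing over factors at any assignment $z$ gives $B-E(z)>0$, and in particular $B-E^\star>0$. This is also immediate from \eqref{eq:factorcliquevalue}, since $\omega_w(H_E)$ is a sum of positive weights. The hypothesis $T<B$ gives $B-T>0$. In addition, $\tau w_{\min}>0$ because $\tau\in(0,1)$ and all weights are positive.

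Next, the function $s\mapsto \tau w_{\min}/s$ is strictly decreasing on $(0,\infty)$. Hence
\[
  \frac{\tau w_{\min}}{B-E^\star}\le\frac{\tau w_{\min}}{B-T}
  \iff B-E^\star\ge B-T
  \iff E^\star\le T .
\]
Substituting $\nu(Q_E)$ for the left-hand side completes the argument.

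There is no real obstacle here. The only points needing care are the positivity checks that justify inverting the reciprocals: $B-E^\star>0$ follows from \eqref{eq:Balpha}, and $B-T>0$ is exactly why the statement restricts to $T<B$. Thresholds with $T\ge B$ are trivially true and are handled before compilation, as stated in the preliminaries.
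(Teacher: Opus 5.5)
Your proposal is correct and matches the paper, which states this corollary without a separate proof as an immediate consequence of the reciprocal value formula \eqref{eq:factorrecover}. Your positivity checks on $B-E^\star$ and $B-T$ are exactly what justifies comparing values through the strictly decreasing map $s\mapsto\tau w_{\min}/s$.
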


\subsection{Sparsity of the incompatibility matrix}

Although the compatibility graph can be dense, the off-diagonal nonzeros of $Q_E$ correspond to \emph{incompatibilities}. Factors with disjoint scopes are fully compatible and produce zero cross blocks. Nonzero interactions occur only:
\begin{enumerate}[label=(\roman*)]
\item within each factor block, to prevent selecting two states of the same factor;
\item between factor blocks that share a variable, and only for state pairs that disagree on that variable.
\end{enumerate}
A crude upper bound is
\begin{equation}
  M_Q=O\left(
  \sum_\alpha |\mathcal A_\alpha|^2
  +\sum_{i=1}^n
  \left(\sum_{\alpha:i\in S_\alpha}|\mathcal A_\alpha|\right)^2
  \right).
  \label{eq:factorcouplings}
\end{equation}
For uniformly bounded domain sizes, factor arity, and variable incidence, $M_Q=O(N_\Delta)$ and $d_Q=O(1)$. Without a bound on local-state counts, even a unary factor can produce a dense incompatibility block.

\subsection{QUBO as a special case}

Consider a binary quadratic objective
\begin{equation}
  E(z)=c+\sum_{i=1}^n h_i z_i+
  \sum_{\{i,j\}\in\mathcal E}J_{ij}z_iz_j,
  \qquad z_i\in\{0,1\}.
  \label{eq:qubo}
\end{equation}
Use one unary factor for every retained $h_i z_i$ and one pair factor for every nonzero $J_{ij}z_iz_j$. Constants are handled outside the StQP recovery map.

\begin{corollary}[QUBO-to-StQP]
\label{cor:qubo}
A QUBO with $n_u$ unary factors and $m$ nonzero pair factors admits an exact weighted StQP encoding with
\begin{equation}
  N_\Delta\le2n_u+4m.
  \label{eq:qubosize}
\end{equation}
For a dense $n$-variable QUBO this is $O(n^2)$; for a sparse interaction graph it is $O(n+m)$. The off-diagonal coefficient alphabet remains $\{0,1\}$, while objective values appear in the positive diagonal entries.
\end{corollary}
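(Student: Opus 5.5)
The plan is to realize the QUBO \eqref{eq:qubo} as a finite-domain factor model in the sense of \eqref{eq:factorobjective} and then invoke \Cref{thm:factorcompiler}. Every $z_i$ has domain $D_i=\{0,1\}$. For each retained unary term we introduce a factor $\alpha_i$ with scope $\{i\}$ and table $\phi_{\alpha_i}(a)=h_i a$. For each nonzero coupling $\{i,j\}\in\mathcal E$ we introduce a factor $\beta_{ij}$ with scope $\{i,j\}$ and table $\phi_{\beta_{ij}}(a,b)=J_{ij}ab$ on all four states. The constant $c$ is not placed in any factor; it is added back after recovery, so $E(z)-c=\sum_\alpha\phi_\alpha(z_{S_\alpha})$.

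The only hypothesis of \Cref{thm:factorcompiler} needing care is that every variable that matters to the objective lies in some factor's scope. A variable with $h_i=0$ and no nonzero coupling does not affect $E$. It can be fixed arbitrarily in decoding, or given a zero unary factor (still counted within $n_u$ if one chooses to retain it). All tables are complete, with $2$ states per unary factor and $4$ per pair factor, so no state is pruned and \Cref{rem:nopruning} does not apply.

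Next choose rational offsets, for instance $B_{\alpha_i}=\max(h_i,0)+1$ and $B_{\beta_{ij}}=\max(J_{ij},0)+1$. These make every weight \eqref{eq:factorweight} positive and keep them of polynomial encoding length. \Cref{thm:factorcompiler} then gives the exact recovery $E^\star=c+B-\tau w_{\min}/\nu(Q_E)$, and every global minimizer decodes to an optimal assignment. The count \eqref{eq:factorsize} gives $N_\Delta=2n_u+4m$ exactly, which yields \eqref{eq:qubosize}. For a dense QUBO, $n_u\le n$ and $m\le\binom n2$, so $N_\Delta=O(n^2)$; in general it is $O(n+m)$. The construction is clearly polynomial time.

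For the coefficient claim, read off \eqref{eq:weightedQ}. Off-diagonal entries lie in $\{0,1\}$ by construction. The diagonal entries are $\tau w_{\min}/w_{\alpha,a}$ with $w_{\alpha,a}=B_\alpha-\phi_\alpha(a)$, so the objective data $h_i,J_{ij}$ enter only through these positive diagonal values.

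The main point needing attention is not a calculation but bookkeeping. We must make sure the constant and any variables omitted from all factors are handled outside the StQP, consistent with the "every variable that matters" hypothesis. Beyond that, everything reduces directly to \Cref{thm:factorcompiler}.
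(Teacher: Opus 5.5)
Your proposal is correct and follows the paper's argument. You use one complete two-state factor per retained unary term and one complete four-state factor per nonzero coupling, restore the constant $c$ outside the recovery map, and then \Cref{thm:factorcompiler} together with the count \eqref{eq:factorsize} gives $N_\Delta=2n_u+4m$, a $\{0,1\}$ off-diagonal pattern, and objective data only on the diagonal. Your explicit offsets $B_\alpha=\max_a\phi_\alpha(a)+1$ and your handling of variables that lie outside every scope are details the paper leaves implicit, and both are consistent with the theorem's hypotheses.
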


The compiler expands a pairwise binary objective into local configurations. Consistency among repeated copies of each variable determines the zero/one off-diagonal pattern, while factor costs determine diagonal reciprocals. The resulting dimension can exceed that of the original QUBO substantially.

\subsection{Worked example: weighted Max-Cut}

Let $G=(V,E)$ have nonnegative edge weights $c_e$. A cut is represented by binary labels $z_i\in\{0,1\}$ and has weight
\[
  \operatorname{cut}_c(z)=\sum_{e=\{i,j\}\in E}c_e\,\mathbf1[z_i\ne z_j].
\]
For each edge $e=\{i,j\}$, create four local states $(e,a,b)$ with $(a,b)\in\{0,1\}^2$. Two edge states are compatible if they agree on every shared endpoint. Give state $(e,a,b)$ weight
\begin{equation}
  w_{e,a,b}=L+c_e\mathbf1[a\ne b],
  \qquad L>0.
  \label{eq:maxcutweights}
\end{equation}
A full consistent clique contains one state per edge and has weight
\[
  L|E|+\operatorname{cut}_c(z).
\]
Therefore
\begin{equation}
  \omega_w(H)=L|E|+\operatorname{MaxCut}_c(G).
  \label{eq:maxcutclique}
\end{equation}
With $w_{\min}=L$, \Cref{thm:weightedclique} gives
\begin{equation}
  \nu(Q)=\frac{\tau L}{L|E|+\operatorname{MaxCut}_c(G)}.
  \label{eq:maxcutstqp}
\end{equation}
For an unweighted graph and $L=1$,
\begin{equation}
  \operatorname{MaxCut}(G)=\frac{\tau}{\nu(Q)}-|E|.
  \label{eq:maxcutrecover}
\end{equation}
The formulation uses exactly $4|E|$ simplex coordinates before optional preprocessing.

\subsection{Approximation distortion}
\label{subsec:approximation-distortion}

Let $Q_E$ be the factor-state matrix of \Cref{thm:factorcompiler},
let $\nu=\nu(Q_E)>0$, and retain the recovery constants $B$, $\tau$, and
$w_{\min}$ from \eqref{eq:factorrecover}. For a positive scalar estimate
$\widehat\nu=\nu+\delta$, define the energy proxy
\[
  \widehat E:=B-\frac{\tau w_{\min}}{\widehat\nu}.
\]
The reciprocal recovery formula gives the exact identity
\begin{equation}
  \bigl|\widehat E-E^\star\bigr|
  =\frac{\tau w_{\min}|\delta|}{\nu\,\widehat\nu}.
  \label{eq:reciprocalerror}
\end{equation}
For a scalar estimate alone, $\widehat E$ need not be the energy of any
discrete assignment, and $\delta$ can have either sign. A discrete-solution
guarantee requires both a feasible simplex point and a specified decoder.

Suppose instead that $x\in\Delta_{N_\Delta}$ is returned and that
$\widehat\nu=x^{\mathsf T}Q_E x$ is evaluated using the intended matrix.
Then $\delta\ge0$. Repeatedly eliminate a nonedge from the positive support
by moving all its mass to whichever endpoint gives the smaller objective,
as in Appendix~\ref{app:decoding}. The weighted purification argument in
Theorem~\ref{thm:weightedclique} makes each such step objective-nonincreasing.
Minimize on the resulting clique face, extend the clique to a maximal
clique, and minimize on the enlarged face. Positive vertex weights ensure
that extension cannot increase the face-minimum value.

Under the complete-state hypothesis of \Cref{thm:factorcompiler},
every maximal clique contains one consistent state from every factor and
therefore determines an assignment $z$. Its weighted barycenter $y$ satisfies
\[
  y^{\mathsf T}Q_Ey
  =\frac{\tau w_{\min}}{B-E(z)}
  \le x^{\mathsf T}Q_Ex=\widehat\nu.
\]
Since $t\mapsto B-\tau w_{\min}/t$ is increasing for $t>0$, this decoder gives
\[
  E^\star\le E(z)\le\widehat E,
  \qquad
  0\le E(z)-E^\star
  \le\frac{\tau w_{\min}\delta}{\nu(\nu+\delta)}.
\]
The bound does not automatically apply to a decoder based on voting,
coordinate ranking, or another repair rule that can increase the StQP
objective.

Writing $W=B-E^\star=\tau w_{\min}/\nu$, a multiplicative guarantee
$x^{\mathsf T}Q_Ex\le\rho\nu$, with $\rho\ge1$, similarly yields
\[
  E(z)-E^\star\le
  \left(1-\frac{1}{\rho}\right)W.
\]
Thus the factor-state route preserves a ratio for the positive transformed
objective $B-E$, but generally not the same ratio for $E$ itself, which
may vanish or be negative. Both additive and multiplicative guarantees for
the original energy depend on the offsets through $W$. If a constant $c$
was omitted before compiling, the original objective is $c+E(z)$; restoring
$c$ changes neither the assignment nor its additive suboptimality. These
bounds concern the intended objective evaluated on a feasible simplex
point; coefficient and evaluation errors must be included separately.

\subsection{Objective separation}
\label{subsec:objective-separation}

For integer-valued factor tables, the reciprocal recovery map gives a
separation between distinct assignment values. This is a gap between
local-minimum values, not a positive separation between the global minimum
and every other feasible simplex point.

\begin{proposition}[Factor-model objective separation]
\label{prop:weightedgap}
Assume the hypotheses of \Cref{thm:factorcompiler}, with
integer-valued $\phi_\alpha$ and integer offsets $B_\alpha$ satisfying
\eqref{eq:Balpha}. Let
$W:=\omega_w(H_E)=B-E^\star$ and assume $W\ge2$.
Every nonglobal local minimum of $\mathrm{StQP}(Q_E)$ has value at least
$\nu(Q_E)+g_W$, where
\begin{equation}
  g_W:=\frac{\tau w_{\min}}{W-1}
       -\frac{\tau w_{\min}}{W}
       =\frac{\tau w_{\min}}{W(W-1)}.
  \label{eq:factorgap}
\end{equation}
The corresponding relative separation is at least
$g_W/\nu(Q_E)=1/(W-1)$. Equality is attained when a maximal clique has
weight $W-1$; otherwise the separation between distinct local-minimum
values is larger. If all local minima are global, the statement about
nonglobal minima is vacuous.
\end{proposition}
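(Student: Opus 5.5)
The plan is to reduce the statement to an integrality fact about maximal-clique weights, using the landscape description already established. By \Cref{thm:weightedlandscape}(i)--(ii), every local minimizer of $\StQP(Q_E)$ is a weighted barycenter $x(C,w)$ of a maximal clique $C$ of $H_E$. The computation in the proof of \Cref{thm:weightedclique} gives its value as $\tau w_{\min}/w(C)$. Under the complete-state hypothesis, \Cref{cor:allassignments} identifies the maximal cliques. When some variable lies in no factor, the extension step in the proof of \Cref{thm:factorcompiler} is enough: every maximal clique is $C_z$ for some assignment $z$. Hence $w(C)=B-E(z)$, so local-minimum values are exactly the numbers $\tau w_{\min}/(B-E(z))$.

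The key step is integrality. Since every $\phi_\alpha$ and every $B_\alpha$ is an integer, each vertex weight $w_{\alpha,a}=B_\alpha-\phi_\alpha(a)$ is a positive integer. So $w(C)$ is a positive integer for every clique, and $W=\omega_w(H_E)=B-E^\star$ is an integer. A local minimizer $x(C,w)$ is nonglobal exactly when it is not a global minimizer. By \Cref{thm:weightedlandscape}(iii), this happens exactly when $w(C)<W$, and integrality then gives $w(C)\le W-1$. The hypothesis $W\ge2$ guarantees $W-1\ge1>0$, and every clique weight is positive anyway. Since $t\mapsto\tau w_{\min}/t$ is decreasing on $t>0$, such a local minimum has value at least $\tau w_{\min}/(W-1)$. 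By \Cref{thm:weightedclique} this equals $\nu(Q_E)+g_W$, with $\nu(Q_E)=\tau w_{\min}/W$ and $g_W$ as in \eqref{eq:factorgap}.

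The remaining claims are short algebra and monotonicity. The relative separation is $g_W/\nu(Q_E)=\bigl(\tau w_{\min}/(W(W-1))\bigr)\cdot\bigl(W/(\tau w_{\min})\bigr)=1/(W-1)$. Equality holds precisely when some maximal clique has weight $W-1$, since the barycenter of that clique attains the bound. Otherwise every nonglobal maximal clique has weight at most $W-2$, so strict monotonicity gives a strictly larger gap. If no maximal clique has weight below $W$, there are no nonglobal minima and the claim is vacuous.

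I expect the main obstacle to be bookkeeping rather than depth. First, one must make sure the ``nonglobal local minimum'' values are indexed by maximal cliques, so that \Cref{thm:weightedlandscape} applies and no non-clique-supported local minima are missed. Second, one must keep the phrase ``value at least $\nu+g_W$'' separate from any claim about arbitrary feasible points, as the preceding text warns. Integrality of the weights is what converts ``$w(C)<W$'' into ``$w(C)\le W-1$''. I would state explicitly that this is the only place the integer hypothesis on $\phi_\alpha$ and $B_\alpha$ is used.
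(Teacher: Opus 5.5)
Your proposal is correct and follows the paper's proof. By \Cref{thm:weightedlandscape}, local minimizers are weighted barycenters of maximal cliques with value $\tau w_{\min}/w(C)$; integer vertex weights force $w(C)\le W-1$ for a nonglobal one, and monotonicity of $t\mapsto\tau w_{\min}/t$ finishes the argument. Your detour identifying maximal cliques with assignments $C_z$ is valid but unnecessary, since integrality of $w(C)$ already follows from each vertex weight $B_\alpha-\phi_\alpha(a)$ being a positive integer.
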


\begin{proof}
Each vertex weight $w_{\alpha,a}=B_\alpha-\phi_\alpha(a)$ is a positive
integer. The weighted landscape theorem gives a local-minimum value
$\tau w_{\min}/w(C)$ for each maximal clique $C$. A nonglobal clique has
integer weight $w(C)\le W-1$, so its value is at least
$\tau w_{\min}/(W-1)$. Subtracting
$\nu(Q_E)=\tau w_{\min}/W$ proves the claim.
\end{proof}

Under the complete-state hypothesis, $W=1$ is degenerate: positive integer
weights imply that there is only one factor and all its states have weight
one, so every assignment is optimal. There is then no nonglobal
assignment value to separate.

More generally, suppose every nonglobal assignment has energy at least
$E^\star+\Delta_E$, where $0<\Delta_E<W$. The same argument gives
\begin{equation}
  g_{W,\Delta_E}:=
  \frac{\tau w_{\min}\Delta_E}{W(W-\Delta_E)}.
  \label{eq:spacinggap}
\end{equation}
This is exact when the level $E^\star+\Delta_E$ is attained. For integer
factor tables, a positive common divisor of all within-table energy
differences provides a computable spacing bound. Unlike the unit-step estimate,
\eqref{eq:spacinggap} is invariant when all weights and the known energy
spacing are multiplied by a common positive constant, as required by
\Cref{rem:weightscaling}.

\paragraph{Interpretation of separation bounds.}
\label{par:gap-interpretation}
The quantities $g_W$ and $g_{W,\Delta_E}$ are guaranteed lower bounds on the difference between the optimum and a nonglobal local-minimum value. They are exact only when the relevant next assignment level is attained. They do not separate the optimum from every other feasible simplex point. Accuracy tolerances derived from these bounds are sufficient guarantees for the specified decision or decoding procedure, rather than necessary resolution requirements for every instance. A smaller bound alone therefore establishes neither a smaller actual gap nor a change in stationary points or basins. Throughout the offset, penalty, and hardware comparisons, we use ``separation bound'' in this sense.

For a decision threshold $E^\star\le T$, with integer $T$ and
$K:=B-T\ge2$, the possible optimum values satisfy
\[
  E^\star\le T
  \ \Longrightarrow\ \nu(Q_E)\le\frac{\tau w_{\min}}{K},
  \qquad
  E^\star>T
  \ \Longrightarrow\ \nu(Q_E)\ge\frac{\tau w_{\min}}{K-1}.
\]
Thus $g_K=\tau w_{\min}/[K(K-1)]$ is a lower bound on the separation of
these decision classes. Proposition~\ref{prop:perturb} implies that a
midpoint test on the optimal value remains valid when the sum of the
entrywise coefficient-error bound and the two-sided optimal-value
estimation error is less than $g_K/2$. A feasible objective value without
a suboptimality bound does not by itself provide such a test.

The same uniform perturbation bound gives a useful decoding condition.
Suppose $\widetilde Q_E=Q_E+\Delta Q$,
$\|\Delta Q\|_{\max}\le\epsilon$, and a feasible point satisfies
\[
  x^{\mathsf T}\widetilde Q_Ex
  \le\nu(\widetilde Q_E)+r,
  \qquad r\ge0.
\]
Then $x^{\mathsf T}Q_Ex\le\nu(Q_E)+r+2\epsilon$.
Consequently, $r+2\epsilon<g_W$ is sufficient for the
objective-nonincreasing decoder of \Cref{subsec:approximation-distortion} to recover an optimal
assignment. A valid $g_{W,\Delta_E}$ can replace $g_W$ in this condition.

\begin{remark}[Choice of offsets and penalties]
\label{rem:offsets}
Let $F:=|\mathcal F|\ge1$, and choose a common positive integer slack
$\eta$ in the offsets:
\[
  B_\alpha=\max_{a\in\mathcal A_\alpha}\phi_\alpha(a)+\eta,
  \qquad
  \Phi:=\sum_{\alpha\in\mathcal F}
                \max_{a\in\mathcal A_\alpha}\phi_\alpha(a),
  \qquad D:=\Phi-E^\star\ge0.
\]
Then $w_{\min}=\eta$ and $W=D+F\eta$. For $W\ge2$, the unit-step bound in
\eqref{eq:factorgap} is exactly
\begin{equation}
  g(\eta)
  =\frac{\tau\eta}{(D+F\eta)(D+F\eta-1)}.
  \label{eq:offsetgap}
\end{equation}
If $D=0$, every assignment attains the same energy and there is no
nonoptimal assignment level to separate; $\eta=1$ is admissible.
For $D\ge1$, differentiation with respect to a real $\eta>0$ gives
\[
  g'(\eta)=
  \frac{\tau\bigl[D(D-1)-F^2\eta^2\bigr]}
       {(D+F\eta)^2(D+F\eta-1)^2}.
\]
The integer maximizing this unit-step bound can be selected by evaluating
it at the two neighboring integers of the constrained continuous maximizer:
\begin{equation}
  \begin{gathered}
    \eta_c:=\max\left\{1,\frac{\sqrt{D(D-1)}}{F}\right\},
    \qquad
    \mathcal J:=\{\lfloor\eta_c\rfloor,\lceil\eta_c\rceil\},\\[3pt]
    \eta^\star\in\operatorname*{arg\,max}_{j\in\mathcal J}
      \frac{\tau j}{(D+Fj)(D+Fj-1)}.
  \end{gathered}
  \label{eq:offsetrule}
\end{equation}
Nearest-integer rounding of $D/F$ is not an exact selection rule.
For example, with $D=19$ and $F=13$, $g(1)=\tau/992$ but
$g(2)=\tau/990$.

For fixed $D\ge1$ and fixed $\tau$, sufficiently large $F$ gives
$\eta^\star=1$ and $g(1)=\Theta(F^{-2})$.
When $D=\Theta(F)$, the optimized bound likewise scales as
$\Theta(F^{-2})$.
In the regime $D/F\to\infty$, the interior approximation gives
$\eta^\star\sim D/F$ and $g(\eta^\star)\sim\tau/(4DF)$.
For fixed $D,F$ and excessive slack, $g(\eta)\sim\tau/(F^2\eta)$ as
$\eta\to\infty$.

The exact $D$ is not generally available before solving. A certified
lower bound $E_{\rm lb}\le E^\star$ gives the integer upper bound
$\overline D:=\lceil\Phi-E_{\rm lb}\rceil\ge D$.
Since \eqref{eq:offsetgap} decreases as $D$ increases, substituting
$\overline D$ for $D$ yields a conservative gap bound, and
\eqref{eq:offsetrule} can be used to maximize that bound.
The factorwise bound
$E_{\rm lb}=\sum_\alpha\min_a\phi_\alpha(a)$ is always available;
a relaxation may improve it. A feasible incumbent supplies an upper
bound on $E^\star$, not the lower bound required here.

Let
$R_{\max}:=\max_\alpha(\max_a\phi_\alpha(a)-\min_a\phi_\alpha(a))$.
The diagonal coefficient range is
$w_{\max}/w_{\min}=1+R_{\max}/\eta$.
When the matrix contains a nonedge entry equal to one, its full nonzero
coefficient range is
$\kappa_{\rm coef}=(\eta+R_{\max})/(\tau\eta)$;
otherwise this expression is an upper bound.
Increasing $\eta$ therefore improves coefficient range while eventually
reducing the gap bound.

For hard constraints encoded by a common penalty $P$, retain the
sufficient condition from the complete-state discussion:
$P>S_{\rm soft}$, provided a feasible assignment exists. For integer
soft-factor data, $P=S_{\rm soft}+1$ satisfies this condition. Once feasibility is
ensured, increasing $P$ leaves the optimum over feasible assignments
unchanged but can increase $\Phi$, $D$, and the coefficient range.
Its effect on the separation bound follows from \eqref{eq:offsetgap},
with the interpretation given above; \Cref{subsec:bilpprecision} illustrates
the dependence on energy spacings.
\end{remark}

\section{Selection, packing, covering, and matching formulations}
\label{sec:packing}

The graph compiler is particularly direct when a feasible solution is a family of pairwise compatible objects. This section collects the most economical formulations of that kind.

\subsection{Clique, independent set, and vertex cover}
\label{subsec:cliqueis}

Maximum clique is represented directly by \Cref{thm:graphcompiler}. The maximum independent-set number $\alpha(G)$ is the clique number of the complement graph $\overline G$, so
\begin{equation}
  \alpha(G)=\frac{1}{\nu(\Qhat(\overline G))}.
  \label{eq:independent}
\end{equation}
A vertex cover $C$ has complement $V\setminus C$ independent. Thus $G$ has a vertex cover of size at most $k$ if and only if
\begin{equation}
  \nu(\Qhat(\overline G))\le\frac{1}{n-k}.
  \label{eq:vertexcover}
\end{equation}
These formulations use $n$ coordinates and no auxiliary variables.

\subsection{Set packing}

Given a family $\mathcal S=\{S_1,\ldots,S_m\}$, create one graph vertex per set and join $i$ and $j$ when $S_i\cap S_j=\varnothing$. Then cliques are pairwise disjoint subfamilies.

\begin{proposition}[Set packing]
The maximum packing cardinality is
\begin{equation}
  p^\star=\frac{1}{\nu(\Qhat(H_{\mathcal S}))}.
  \label{eq:setpacking}
\end{equation}
In particular, a packing of at least $k$ sets exists if and only if the StQP optimum is at most $1/k$. The formulation uses $m$ simplex coordinates.
\end{proposition}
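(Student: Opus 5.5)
The statement follows from \Cref{thm:graphcompiler} once cliques of $H_{\mathcal S}$ are identified with packings. I would argue in three short steps.

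\textbf{Step 1: packings are cliques.} A subfamily $\{S_i:i\in C\}$ is a packing (pairwise disjoint) exactly when every pair $i\ne j$ in $C$ satisfies $S_i\cap S_j=\varnothing$. By construction of $H_{\mathcal S}$, this is exactly the condition that $C$ is a clique. Singletons and the empty family are trivially both, so $\omega(H_{\mathcal S})=p^\star$. This bijection is indexed by set indices, so repeated identical sets are counted correctly as distinct vertices. The only subtlety concerns empty sets. An empty $S_i$ is disjoint from everything, including another copy of itself, and it is consistently adjacent to all vertices. Packing cardinality then counts it, in agreement with the clique count.

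\textbf{Step 2: apply the graph compiler.} Assume $m\ge1$, so the graph is nonempty; the case $m=0$ is handled before compilation, per the paper's convention. \Cref{thm:graphcompiler} with $\tau=1/2$, together with the scaling \eqref{eq:integergraph}, gives $\nu(\Qhat(H_{\mathcal S}))=1/\omega(H_{\mathcal S})=1/p^\star$. Inverting gives \eqref{eq:setpacking}.

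\textbf{Step 3: the threshold statement.} The map $t\mapsto 1/t$ is strictly decreasing on positive integers, so $p^\star\ge k$ if and only if $\nu\le 1/k$. For the recovery rule, \Cref{thm:graphcompiler}(i) says global minimizers are barycenters of maximum cliques, so their supports decode to maximum packings. The dimension count is immediate, since there is one vertex, and hence one simplex coordinate, per set.

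There is no real obstacle here. The only point needing care is the edge case of empty or duplicate sets in Step 1, which the index-based definition already handles.
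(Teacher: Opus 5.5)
Your proposal is correct and takes the same route as the paper. The paper identifies cliques of the disjointness graph $H_{\mathcal S}$ with pairwise disjoint subfamilies, then reads off $\nu(\Qhat(H_{\mathcal S}))=1/\omega(H_{\mathcal S})$ from \Cref{thm:graphcompiler} via the integer scaling \eqref{eq:integergraph}; your remarks on duplicate or empty sets and on the $m\ge1$ convention just make explicit what the paper leaves implicit.
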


A weighted set-packing problem is obtained by assigning a positive weight to each candidate set and applying \Cref{thm:weightedclique}.

\subsection{Three-dimensional matching}

Let $T\subseteq X\times Y\times Z$ be a collection of triples. Create one graph vertex per triple and connect two vertices when the triples are disjoint in all three coordinates. A clique is a three-dimensional matching.

\begin{proposition}[3D matching]
The maximum matching cardinality equals $1/\nu(\Qhat(H_T))$. If $|X|=|Y|=|Z|=q$, a perfect three-dimensional matching exists if and only if
\[
  \nu(\Qhat(H_T))=\frac1q.
\]
The formulation uses $|T|$ coordinates.
\end{proposition}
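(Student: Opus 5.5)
The plan is to reduce the statement to \Cref{thm:graphcompiler} applied to $H_T$, with $\tau=1/2$ and the integer scaling \eqref{eq:integergraph}. Then $\nu(\Qhat(H_T))=1/\omega(H_T)$ whenever $T$ is nonempty. The case $T=\varnothing$ is handled before compilation, as the preliminaries stipulate. So the whole proof rests on one combinatorial identity: $\omega(H_T)$ equals the maximum three-dimensional matching cardinality.

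To establish that identity, I first check that cliques of $H_T$ are exactly the sets of triples that are pairwise disjoint in every coordinate. By construction, two distinct vertices are adjacent precisely when their triples share no $X$-, $Y$-, or $Z$-coordinate. A set of triples is pairwise adjacent if and only if no two of them share any coordinate, and this is the definition of a three-dimensional matching. Hence the maximum matching cardinality is $\omega(H_T)$, and the first claim follows by inverting $\nu(\Qhat(H_T))=1/\omega(H_T)$.

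For the perfect-matching claim, I observe that each triple in a matching consumes a distinct element of $X$. So every matching has at most $q$ triples, giving $\omega(H_T)\le q$. A perfect matching is exactly a matching of size $q$, since it must cover all $q$ elements of $X$, and likewise of $Y$ and $Z$. Therefore a perfect matching exists if and only if $\omega(H_T)=q$, if and only if $\nu(\Qhat(H_T))=1/q$. The coordinate count $N_\Delta=|T|$ is immediate because there is one vertex per triple.

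There is no real obstacle. The only points needing care are the upper bound $\omega\le q$, which turns ``$\ge q$'' into ``$=q$'', and the degenerate empty-$T$ instance, which the paper's conventions dispose of before compilation. As a byproduct, \Cref{thm:graphcompiler}(i) shows that the support of any global minimizer is a maximum matching, with all positive coordinates equal to $1/q$ in the perfect case.
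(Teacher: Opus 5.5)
Your proof is correct and follows the paper's own (implicit) argument: cliques of $H_T$ are exactly the three-dimensional matchings, so \Cref{thm:graphcompiler} with the integer scaling \eqref{eq:integergraph} gives $\nu(\Qhat(H_T))=1/\omega(H_T)$. You also make explicit the bound $\omega(H_T)\le q$, which turns the threshold into an equality; the paper leaves this step unstated.
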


\subsection{Exact cover through weighted clique}

Let $U$ be a universe and $\mathcal S=\{S_1,\ldots,S_m\}$ a family of nonempty subsets. Form the disjointness graph used for set packing, but give vertex $i$ weight $w_i=|S_i|$. A clique is a disjoint subfamily, and its total weight equals the cardinality of its union. Therefore every clique has weight at most $|U|$.

\begin{proposition}[Exact cover]
\label{prop:exactcover}
The family $\mathcal S$ contains an exact cover of $U$ if and only if
\begin{equation}
  \omega_w(H_{\mathcal S})=|U|.
  \label{eq:exactcoverclique}
\end{equation}
Equivalently, with $w_{\min}=\min_i|S_i|$,
\begin{equation}
  \nu(Q_{\tau,w}(H_{\mathcal S}))
  =\frac{\tau w_{\min}}{|U|}.
  \label{eq:exactcoverstqp}
\end{equation}
A clique-supported minimizer decodes to the exact-cover subfamily.
\end{proposition}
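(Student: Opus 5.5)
The plan is to reduce everything to the weighted clique value $\omega_w(H_{\mathcal S})$ and then read off the StQP value from \Cref{thm:weightedclique}. All weights $w_i=|S_i|$ are positive because the sets are nonempty, so that theorem applies directly.

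First I would prove the upper bound $\omega_w(H_{\mathcal S})\le|U|$. A clique $C$ in the disjointness graph is a pairwise disjoint subfamily. Hence
\[
  w(C)=\sum_{i\in C}|S_i|=\Bigl|\bigcup_{i\in C}S_i\Bigr|\le|U|.
\]
Equality holds exactly when the union is all of $U$, that is, when $C$ is an exact cover.

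Both directions of \eqref{eq:exactcoverclique} follow from this characterization:
\begin{itemize}
\item If an exact cover exists, its index set is a clique of weight $|U|$. Combined with the upper bound, this gives $\omega_w=|U|$.
\item Conversely, if $\omega_w=|U|$, some clique attains weight $|U|$. By the equality case it is an exact cover.
\end{itemize}

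Next, \Cref{thm:weightedclique} gives $\nu(Q_{\tau,w}(H_{\mathcal S}))=\tau w_{\min}/\omega_w(H_{\mathcal S})$. Since $t\mapsto\tau w_{\min}/t$ is strictly decreasing and injective for $t>0$, the conditions $\omega_w=|U|$ and $\nu=\tau w_{\min}/|U|$ are equivalent. This yields \eqref{eq:exactcoverstqp}.

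For decoding, I would use two facts. \Cref{thm:weightedclique} shows every global minimizer has clique support. \Cref{thm:weightedlandscape}(iii) shows every global minimizer is the weighted barycenter of a maximum-weight clique. When the optimum equals $\tau w_{\min}/|U|$, that clique has weight $|U|$, so by the equality case its support indexes an exact cover.

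No step is a serious obstacle. The only point needing care is the equality case of the union bound, together with stating explicitly that nonemptiness of the $S_i$ gives positive weights, which the weighted theorems require. Empty $U$ is excluded by the preprocessing convention.
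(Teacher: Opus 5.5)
Your proposal is correct and follows essentially the paper's own argument. In a disjoint subfamily the weight is $w(C)=\bigl|\bigcup_{i\in C}S_i\bigr|\le|U|$, with equality exactly for exact covers, and the value and decoding statements then follow from \Cref{thm:weightedclique}, which you supplement with \Cref{thm:weightedlandscape}(iii) for the decoding.
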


The natural binary comparison is maximum-weight independent set on the conflict graph of intersecting sets. Let
\[
  \mathcal E_c:=\{\{i,j\}:i<j,\ S_i\cap S_j\ne\varnothing\},
  \qquad r_i:=\frac{w_i}{w_{\max}}\in(0,1].
\]
The standard independent-set penalty construction \citep[Section~4.2]{Lucas2014} gives
\begin{equation}
  H_{\rm EC}(z)=-\sum_{i=1}^{m}r_i z_i
       +2\sum_{\{i,j\}\in\mathcal E_c}z_i z_j,
  \qquad z\in\{0,1\}^{m}.
  \label{eq:exactcover-qubo}
\end{equation}
If a selected set $i$ has $d_i\ge1$ selected conflicts, deleting it changes $H_{\rm EC}$ by $r_i-2d_i<0$. Every minimizer is therefore conflict-free. On a disjoint subfamily the objective equals minus its covered cardinality divided by $w_{\max}$, so an exact cover exists precisely when
\[
  \min_z H_{\rm EC}(z)=-\frac{|U|}{w_{\max}}.
\]
Both formulations use $m$ variables or coordinates and one pair interaction per intersecting set pair. The QUBO has a fixed conflict coefficient after normalizing its unary weights; the StQP stores reciprocal weights on its diagonal and uses one global normalization. Thus neither a dimension saving nor a general penalty-scale advantage follows from this comparison. In the displayed QUBO polynomial convention, if a conflict exists, the nonzero coefficient ratio is $2w_{\max}/w_{\min}$, equal to the full StQP coefficient ratio at $\tau=1/2$. Converting the polynomial to a symmetric $z^{\mathsf T}Qz$ convention halves its off-diagonal coefficients, so coefficient comparisons must specify the convention.

\subsection{Hitting set and set cover}

Karp's \textsc{hitting set} asks for a set $W\subseteq S$ that meets every member of a family $U_1,\ldots,U_m\subseteq S$ in \emph{exactly} one element \citep{Karp1972}. It is exact cover on the transposed incidence structure, and therefore weighted-graph-native. Assume every $U_i$ is nonempty, and discard elements that lie in no $U_i$, since they never affect feasibility. Give each remaining element $s$ the weight $\deg(s):=|\set{i:s\in U_i}|$, and join $s\ne s'$ when no $U_i$ contains both.

\begin{proposition}[Karp's hitting set]
\label{prop:karphitting}
Let $H_U$ be this weighted graph. A set $W$ with $|W\cap U_i|=1$ for all $i$ exists if and only if $\omega_w(H_U)=m$, equivalently $\nu(Q_{\tau,w}(H_U))=\tau w_{\min}/m$ with $w_{\min}=\min_s\deg(s)$.
\end{proposition}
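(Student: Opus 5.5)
The plan is to mirror \Cref{prop:exactcover} on the transposed incidence structure. The first step is a cliques-to-disjoint-neighbourhoods identity. Every remaining element has $\deg(s)\ge1$, so all weights are positive and \Cref{thm:weightedclique} applies. For each element let $N(s):=\set{i:s\in U_i}$, so that $\deg(s)=|N(s)|$. Two distinct elements are adjacent in $H_U$ exactly when $N(s)\cap N(s')=\varnothing$. Hence a set $W$ of elements is a clique if and only if the index sets $N(s)$, $s\in W$, are pairwise disjoint. For such a clique,
\[
  w(W)=\sum_{s\in W}|N(s)|=\Bigl|\bigcup_{s\in W}N(s)\Bigr|\le m .
\]
The union is exactly the set of indices $i$ with $|W\cap U_i|\ge1$, and for a clique this is the same as $|W\cap U_i|=1$. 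Therefore $\omega_w(H_U)\le m$, with equality precisely when some clique meets every $U_i$.

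The second step proves the equivalence in both directions. For the forward direction, suppose $W$ satisfies $|W\cap U_i|=1$ for all $i$. Elements of $W$ lying in no $U_i$ are discarded, so we may assume $W$ lies in the retained ground set. Two distinct elements of $W$ cannot share a $U_i$, since that $U_i$ would then contain two elements of $W$. So $W$ is a clique, and its weight counts each $i$ exactly once, giving $w(W)=m$. Here $W$ is nonempty because $m\ge1$ and each $U_i$ is nonempty; the degenerate case $m=0$ is handled before compilation, as stipulated in \Cref{sec:prelim}. For the converse, a clique of weight $m$ has disjoint neighbourhoods whose union is all of $[m]$. Each $U_i$ then contains exactly one element of $W$, by disjointness together with coverage.

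The third step is the StQP restatement. It follows directly from \eqref{eq:weightedvalue}: $\nu(Q_{\tau,w}(H_U))=\tau w_{\min}/\omega_w(H_U)$. Because $t\mapsto\tau w_{\min}/t$ is strictly decreasing and $\omega_w(H_U)\le m$, the value equals $\tau w_{\min}/m$ if and only if $\omega_w(H_U)=m$. Decoding follows from \Cref{thm:weightedlandscape}(iii): the support of any global minimizer is a maximum-weight clique, and hence a valid $W$.

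The only delicate point is the bookkeeping in the first step. One has to confirm that the adjacency rule ``no $U_i$ contains both'' is exactly pairwise disjointness of the $N(s)$. One must also confirm that discarding the degree-zero elements both keeps the weights positive and loses no solutions. Everything else is inherited from \Cref{thm:weightedclique}.
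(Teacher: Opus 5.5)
Your proposal is correct and follows the paper's own argument. Both proofs use the same key fact: a clique meets each $U_i$ at most once, so its weight $\sum_{s\in W}\deg(s)=\sum_i|W\cap U_i|$ is at most $m$, with equality exactly when every $U_i$ is met once. Your disjoint-union identity is this same double count, and both proofs then read off the value statement from \Cref{thm:weightedclique}; your remarks on discarded zero-degree elements and on decoding via \Cref{thm:weightedlandscape}(iii) are sound additions.
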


\begin{proof}
A clique $W$ meets every $U_i$ at most once, so $w(W)=\sum_{s\in W}\deg(s)=\sum_i|W\cap U_i|\le m$, with equality exactly when every $U_i$ is met once. The value statement is \Cref{thm:weightedclique}.
\end{proof}

The Garey--Johnson variant \citep{GareyJohnson1979}, which asks for at most $k$ elements meeting every $U_i$ \emph{at least} once, is a covering problem and is treated next.

The Garey--Johnson hitting set and set cover do not generally reduce to pairwise compatibility as directly as packing. A feasible hitting set is not a family of pairwise compatible elements; its defining condition is collective coverage. Two routes are available:

\begin{enumerate}[label=(\roman*)]
\item reduce set cover to hitting set or vice versa and then to a binary linear system;
\item build a local-state factor graph in which each coverage constraint is a factor over the incident selection variables, encoding infeasible coverage patterns as high-cost states rather than omitting them (\Cref{rem:nopruning}).
\end{enumerate}

The second route can be compact when each set or element has bounded incidence, but a factor of degree $d$ has $2^d$ local states in the factor-state compiler, none of which may be deleted (\Cref{rem:nopruning}). The binary-linear compiler of \Cref{sec:bilp} remains polynomial in the ordinary bit model and is preferable for unbounded incidence.

\subsection{Dominant sets and weighted graph structure}

Continuous simplex quadratics are also used natively for weighted clustering. Maximizing $x^{\mathsf T}Ax$ over the simplex can identify coherent weighted vertex groups, with local solutions related to dominant sets \citep{PavanPelillo2007}, with edge-weighted and hypergraph extensions \citep{pavan-pelillo-edgeweighted-2003,rotabulo-pelillo-hypergraph-2009}. Here the simplex quadratic is the model itself rather than a reduction from a discrete NP problem, and it belongs in this survey because the same StQP hardware primitive can support both exact combinatorial encodings and native continuous graph models.

\section{A generic binary-linear feasibility compiler}
\label{sec:bilp}

Direct compatibility formulations are preferable when available. For completeness, we now give a generic polynomial route from binary linear feasibility to StQP. It covers every problem with a polynomial-size 0--1 integer programming formulation and supplies a uniform route for the remaining members of Karp's list.

\subsection{From inequalities to a zero-penalty QUBO}

Consider
\begin{equation}
  Az\le b,
  \qquad
  z\in\{0,1\}^n,
  \label{eq:bilp}
\end{equation}
with integer data. Rational data can be scaled to integers with polynomial bit-length growth. For row $r$, let
\begin{equation}
  L_r:=\sum_{i:a_{ri}<0}a_{ri},
  \qquad
  U_r:=b_r-L_r.
  \label{eq:slackrange}
\end{equation}
If $U_r<0$, the row is infeasible for every binary $z$ and the instance can be rejected immediately. Otherwise, if the row is feasible, its slack $s_r=b_r-a_r^{\mathsf T}z$ is an integer in $[0,U_r]$. Introduce
\begin{equation}
  \ell_r:=\left\lceil\log_2(U_r+1)\right\rceil
  \label{eq:slackbits}
\end{equation}
slack bits $y_{r0},\ldots,y_{r,\ell_r-1}$ and encode
\[
  s_r(y)=\sum_{t=0}^{\ell_r-1}2^t y_{rt}.
\]
Define
\begin{equation}
  P(z,y):=\sum_{r=1}^m
  \left(a_r^{\mathsf T}z+s_r(y)-b_r\right)^2.
  \label{eq:bilppenalty}
\end{equation}
Because binary variables satisfy $u^2=u$, expansion of \eqref{eq:bilppenalty} is a QUBO \citep{BorosHammer2002}; higher-degree residuals can be quadratized with few ancillas \citep{boros-crama-rodriguezheck-compact-quadratizations-2020}, and penalty weights admit alternatives to the dominant-coefficient rule \citep{montanez-barrera-unbalanced-penalization-2024}.

\begin{theorem}[Binary-linear feasibility compiler]
\label{thm:bilpcompiler}
The system \eqref{eq:bilp} is feasible if and only if
\begin{equation}
  \min_{z,y\in\{0,1\}}P(z,y)=0.
  \label{eq:bilpzero}
\end{equation}
Combining the expanded QUBO with \Cref{thm:factorcompiler} gives a polynomial-size StQP threshold encoding of binary-linear feasibility.
\end{theorem}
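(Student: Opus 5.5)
The plan has two parts: first prove the equivalence \eqref{eq:bilpzero}, then check that composing with the factor-state compiler stays polynomial.

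\textbf{The equivalence.} $P$ is a sum of squares, so $P\ge0$ on all binary $(z,y)$, and $P=0$ exactly when every residual $a_r^{\mathsf T}z+s_r(y)-b_r$ vanishes.
\begin{itemize}
\item For the forward direction, take a feasible $z$ and set $s_r=b_r-a_r^{\mathsf T}z$. Feasibility gives $s_r\ge0$. Since $a_r^{\mathsf T}z\ge L_r$, we also get $s_r\le b_r-L_r=U_r$.
\item By \eqref{eq:slackbits}, $2^{\ell_r}\ge U_r+1$, so every integer in $[0,U_r]$ has a binary expansion with $\ell_r$ bits. Choosing those bits as $y_r$ makes every residual zero, hence $P(z,y)=0$.
\item For the converse, $P(z,y)=0$ forces $a_r^{\mathsf T}z=b_r-s_r(y)\le b_r$ for every row, because $s_r(y)\ge0$. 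So $z$ is feasible.
\end{itemize}
The encoding may represent slack values larger than $U_r$; this is harmless, since only $s_r\ge0$ is used in the converse.

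\textbf{Polynomial size of the QUBO.}
\begin{itemize}
\item Each $\ell_r=O(\log(|b_r|+\sum_i|a_{ri}|))$ is polynomial in the bit length, so the number of variables is polynomial.
\item Expanding each squared residual, and using $u^2=u$ on binary variables, gives at most quadratically many pair terms. Their integer coefficients are bounded by products of $|a_{ri}|$, $2^t$ and $|b_r|$, so they have polynomial bit length.
\item This yields an integer QUBO $E(z,y)=c+\sum h_iu_i+\sum J_{ij}u_iu_j$ with $E=P$ on binary points.
\end{itemize}

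\textbf{Composing with the factor-state compiler.}
\begin{itemize}
\item Build one unary factor per nonzero $h_i$ and one pair factor per nonzero $J_{ij}$. Each factor carries its complete $2$- or $4$-entry table, so no states are pruned, as \Cref{rem:nopruning} requires.
\item Choose integer offsets $B_\alpha=\max\phi_\alpha+1$. By \Cref{cor:qubo}, $N_\Delta\le2n_u+4m$.
\item Variables that appear in no factor do not affect $E$ and can be fixed arbitrarily. If the expanded polynomial has no factors at all, handle the instance before compilation.
\end{itemize}
The threshold then comes from the decision-threshold corollary \eqref{eq:factorthreshold} with $T=-c$, which is the condition $E-c\le0$, i.e.\ $P\le0$. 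Because $P\ge0$, this is equivalent to $\min P=0$. Explicitly, feasibility is equivalent to
\[
\nu(Q_E)\le\frac{\tau w_{\min}}{B+c}.
\]
The required $T<B$ holds unless the threshold is trivial, in which case it is handled before compilation. All constants $B$, $c$, $w_{\min}$ and $\tau$ are rationals of polynomial size.

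\textbf{Main obstacle.} The step needing the most care is the constant bookkeeping. The factor compiler sees only $E=P-c$ once the constant $c$ has been removed, so the threshold must be shifted correctly. One also has to confirm that $B>T$ is automatic in the nontrivial case, and that the integer coefficients stay polynomially bounded after squaring. Everything else is routine.
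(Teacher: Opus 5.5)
Your argument is correct and follows the paper's proof: the forward direction uses that the slack lies in $[0,U_r]$ and fits in $\ell_r$ bits, the converse uses $s_r(y)\ge0$, and your composition step matches the paper's size accounting and the constant-restoring rule \eqref{eq:catalog-constant}. One small simplification: the condition $T=-c<B$ needs no triviality exception, since $B>\sum_\alpha\max_a\phi_\alpha(a)\ge\max E\ge -c$ because $P=c+E\ge0$.
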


\begin{proof}
If $z$ is feasible, every integer slack $b_r-a_r^{\mathsf T}z\in[0,U_r]$ has a binary representation using $\ell_r$ bits, and all squared residuals vanish. Conversely, if $P=0$, every residual vanishes and $a_r^{\mathsf T}z=b_r-s_r(y)\le b_r$.
\end{proof}

Equalities require no slack. A constraint $a^{\mathsf T}z\ge b$ is multiplied by $-1$. Bounded integer variables are encoded in binary together with range constraints if the binary box is larger than the intended domain.

\subsection{Size accounting}

Let
\begin{equation}
  p=n+\sum_{r=1}^m\ell_r
  \label{eq:totalbits}
\end{equation}
be the number of binary variables after adding slack bits. Let $t_r$ be the number of binary variables appearing in residual $r$. Expansion of $P$ contains at most
\begin{equation}
  M_2\le\sum_{r=1}^m\binom{t_r}{2}
  \label{eq:pairterms}
\end{equation}
nonzero pair terms before duplicate aggregation. The basic factor-state compiler therefore uses at most
\begin{equation}
  N_\Delta\le2p+4M_2.
  \label{eq:bilpsize}
\end{equation}
This is polynomial in the binary input length because each $\ell_r$ is logarithmic in an integer whose bit length is part of the input. The bound is intentionally simple rather than tight. Factor aggregation, quadratization reuse, and direct CSP constructions can reduce it substantially.

\subsection{Adding an objective or decision threshold}

A 0--1 optimization problem
\[
  \min\set{c^{\mathsf T}z:Az\le b,\ z\in\{0,1\}^n}
\]
can be converted to a sequence of feasibility questions $c^{\mathsf T}z\le T$, or to one QUBO using a sufficiently dominant feasibility penalty. In the second route, minimize $MP(z,y)+c^{\mathsf T}z$ with
\[
  M>\sum_{i=1}^n|c_i|,
\]
which suffices because $P\ge1$ at every infeasible point while the objective varies by at most $\sum_i|c_i|$. For integer objective data, the feasibility-sequence route avoids that coefficient and preserves exact decision thresholds, using $O(\log(2+\sum_i|c_i|))$ StQP solves after checking feasibility.

\subsection{Precision of the binary-linear route}
\label{subsec:bilpprecision}

The residual squares in \eqref{eq:bilppenalty} carry the input numbers into the factor values. The resulting matrix still has polynomial encoding length, but its coefficient ratios and value separations depend on the factorization, offsets, and attainable energy levels. We use the separation bounds and accuracy conditions of \Cref{subsec:objective-separation}.

For Partition with $n\ge2$ positive items, let $A=\sum_i a_i$. Expanding $(2\sum_i a_i z_i-A)^2$ gives the constant $A^2$, unary coefficients $4a_i(a_i-A)$, and pair coefficients $8a_i a_j$. Retaining these $F=n(n+1)/2$ factors gives, on a yes-instance,
\[
  D=5A^2-4\sum_i a_i^2.
\]
When $D/F\to\infty$, the offset rule \eqref{eq:offsetrule} yields the unit-step estimate $g(\eta^\star)\sim\tau/(4DF)$, whereas unit slack gives exactly $g(1)=\tau/[(D+F)(D+F-1)]$. These bounds can be inverse-polynomial in integer magnitudes, and hence exponentially small in their bit lengths. A tolerance of $2^{-O(b)}$ requires $O(b)$ binary precision bits; physical resolution and bit complexity are distinct costs.

A common scaling illustrates the distinction. Take $a_1=a_2=L$ and slack $\eta=4L^2$, where $L\ge1$ is an integer. Each unary block has weights $(4,8)L^2$, and the pair block has weights $(12,12,12,4)L^2$. Thus $w_{\min}=4L^2$ and $W=24L^2$. The two distinct local-minimum values are $\tau/6$ and $\tau/5$, so the actual gap is $\tau/30$, independent of $L$. The matrix is also independent of $L$, with full nonzero coefficient range $3/\tau$. Nevertheless, the unit-step bound is
\[
  g_W=\frac{\tau}{6(24L^2-1)}\longrightarrow0.
\]
Using the attainable spacing $\Delta_E=4L^2$ in \eqref{eq:spacinggap} recovers the exact gap. Integer magnitudes alone therefore do not determine the physical precision burden.

For a dominant feasibility penalty $M$, if $F$ and $\tau$ are fixed and $D(M)=\Theta(M)$, optimizing the common slack makes the unit-step bound scale as $\Theta(M^{-1})$; fixed slack instead gives $\Theta(M^{-2})$. These scalings apply to the bound defined in \eqref{eq:offsetgap}.

An alternative is to encode arithmetic in Boolean constraints and compose a polynomial reduction to 3-SAT with \Cref{con:satgraph}. For $m\ge2$ clauses this gives the fixed alphabet $\{0,1,2\}$ and a target decision-class separation of at least $1/[m(m-1)]$, with $m$ polynomial in input bit length. This offers an inverse-polynomial sufficient tolerance, but there is no universal dimension ordering between this route and the residual-square route. The comparison must report matrix size, coefficient ratios, and a valid separation bound for the chosen representation.

\subsection{Representative constructions}
\label{subsec:bilp-constructions}

\paragraph{Subset sum and Karp's knapsack variant.}
Given positive integers $a_1,\ldots,a_n$ and target $t$, define
\begin{equation}
  P(z)=\left(\sum_{i=1}^n a_i z_i-t\right)^2.
  \label{eq:subsetsum}
\end{equation}
Then $P^\star=0$ exactly when a target-sum subset exists. Expanding $P$ gives a QUBO, dense in the worst case, to which \Cref{cor:qubo} applies.

\paragraph{Partition.}
Let $A=\sum_i a_i$. A partition into equal sums exists exactly when
\begin{equation}
  \min_{z\in\{0,1\}^n}
  \left(2\sum_i a_i z_i-A\right)^2=0.
  \label{eq:partition}
\end{equation}

\paragraph{Set cover.}
Use one bit $z_j$ per candidate set. The decision whether at most $k$ sets cover $U$ is
\begin{equation}
  \sum_{j:u\in S_j}z_j\ge1\quad(u\in U),
  \qquad
  \sum_jz_j\le k.
  \label{eq:setcoverbilp}
\end{equation}

\paragraph{Hitting set (Garey--Johnson variant).}
Use one bit $z_u$ per universe element and impose
\begin{equation}
  \sum_{u\in S_j}z_u\ge1\quad(j=1,\ldots,m),
  \qquad
  \sum_{u\in U}z_u\le k.
  \label{eq:hittingbilp}
\end{equation}

\paragraph{Job sequencing (Karp).}
Karp's \textsc{job sequencing} has one machine and jobs $j\in[n]$ with integer execution times $T_j$, deadlines $D_j$ and penalties $P_j$; it asks for an order in which the total penalty of late jobs is at most $k$ \citep{Karp1972}. Let $y_j=1$ mean that job $j$ is on time. A set of on-time jobs can be scheduled before their deadlines exactly when earliest-deadline-first order does so, which gives
\begin{equation}
  \sum_{i:D_i\le D_j}T_i\,y_i\le D_j\quad(j\in[n]),
  \qquad
  \sum_{j=1}^nP_j(1-y_j)\le k.
  \label{eq:jobkarp}
\end{equation}
The late jobs are appended in any order. Using \eqref{eq:slackbits}, this uses $p=n+\sum_j\lceil\log_2(D_j+1)\rceil+\lceil\log_2(k+1)\rceil$ bits.

\paragraph{Parallel-machine job sequencing (Lucas variant).}
For the integer-length makespan version used in Ising formulation work \citep{Lucas2014}, let job $i$ have length $L_i$, let $a\in[M]$ index machines, and let $T$ be the makespan threshold. Binary assignment variables $z_{ia}$ satisfy
\begin{equation}
  \sum_{a=1}^{M} z_{ia}=1\quad(i\in[n]),
  \qquad
  \sum_{i=1}^nL_i z_{ia}\le T\quad(a\in[M]).
  \label{eq:jobbilp}
\end{equation}
This uses $nM$ assignment bits plus $\lceil\log_2(T+1)\rceil$ slack bits for each machine-load constraint.

\paragraph{Feedback arc set.}
For a directed graph, let $y_{uv}=1$ mean arc $(u,v)$ is deleted. Assign each vertex an integer order $\pi_v\in\{0,\ldots,n-1\}$, binary encoded. Impose
\begin{equation}
  \pi_u+1\le \pi_v+n y_{uv}\quad((u,v)\in E),
  \qquad
  \sum_{(u,v)\in E}y_{uv}\le k.
  \label{eq:feedbackarc}
\end{equation}
If at most $k$ arcs can be removed to make the graph acyclic, the remaining arcs admit such a topological order; the converse is immediate. Each arc row has slack range $O(n)$ and needs $O(\log n)$ slack bits, so $p=\Theta((n+|E|)\log n)$.

\paragraph{Feedback vertex set.}
Let $y_v=1$ mean vertex $v$ is removed. One valid order formulation is
\begin{equation}
  \pi_u+1\le \pi_v+n(y_u+y_v)
  \quad((u,v)\in E),
  \qquad
  \sum_vy_v\le k.
  \label{eq:feedbackvertex}
\end{equation}
Every remaining arc then respects a topological order. As for arc deletion, $p=\Theta((n+|E|)\log n)$.

\paragraph{Steiner tree.}
Let $R\subseteq V$ be terminals, let edge costs satisfy $c_e\ge0$, and choose a root $r\in R$. Use binary edge variables $y_e$ and directed flow variables $f_{uv}$, binary encoded, with
\begin{align}
  0\le f_{uv}&\le(|R|-1)y_{\{u,v\}},\label{eq:steinerlink}\\
  \sum_{v}f_{rv}-\sum_vf_{vr}&=|R|-1,\label{eq:steinerroot}\\
  \sum_vf_{vu}-\sum_vf_{uv}&=1\quad(u\in R\setminus\{r\}),\label{eq:steinerterm}\\
  \sum_vf_{vu}-\sum_vf_{uv}&=0\quad(u\notin R),\label{eq:steinernonterm}\\
  \sum_ec_ey_e&\le K.\label{eq:steinercost}
\end{align}
A feasible solution contains a connected subgraph spanning all terminals; cycles can be removed without increasing nonnegative cost. With $\lceil\log_2|R|\rceil$ bits per directed flow, $p=|E|+O(|E|\log|R|)+\lceil\log_2(K+1)\rceil$.

These formulations are standard integer-programming patterns rather than claims of novel polyhedral strength. Their role here is to close the compilation chain from Karp's problems to StQP.

\section{A catalog for Karp's 21 NP-complete problems}
\label{sec:karp}

Karp's 1972 paper established the polynomial equivalence of a broad class of satisfiability, covering, packing, routing, assignment, sequencing, and numerical problems \citep{Karp1972}. Lucas later provided explicit Ising constructions for all 21 problems and analyzed spin counts and coupling requirements \citep{Lucas2014}. \Cref{tab:karp21} summarizes the corresponding StQP routes developed in this paper. Rather than reproduce Karp's original reduction tree, we choose modern formulations when they are more direct. Linear-growth interreductions among subsets of the 21 problems may provide alternative compilers with lower overhead for particular applications \citep{FilarHaythorpeTaylor2018}; standard statements of the problems follow \citet{GareyJohnson1979}. The Hamiltonian rows use the position--vertex compatibility construction of \Cref{sec:csp}. Controlled-Markov-chain formulations provide a different continuous approach to Hamiltonicity \citep{borkar-ejov-filar-digraphs-2004}.

Among the 21 original numbered rows, ten use unweighted compatibility graphs and two use weighted graphs. Six use binary-linear feasibility followed by residual-square QUBO and factor-state compilation; the knapsack target-sum and Partition rows start directly from residual-square QUBOs, and Max-Cut uses a factor-state formulation directly. Rows $15'$ and $19'$ give additional variants and are not included in these counts. The catalog therefore establishes coverage through several routes, with different dimension and precision costs.

For the table, ``direct graph'' refers to \eqref{eq:integergraph}, ``weighted graph'' to \eqref{eq:weightedQ}, and ``BILP'' to \Cref{thm:bilpcompiler} followed by \Cref{thm:factorcompiler}. For every residual-square route write $P=c+E$, where $c$ is the omitted constant and $E$ is the compiled factor energy. The original penalty is recovered as
\begin{equation}
  P^\star=c+B-\frac{\tau w_{\min}}{\nu(Q_E)};
  \qquad P^\star=0\ \Longleftrightarrow\ E^\star=-c.
  \label{eq:catalog-constant}
\end{equation}
Trivial instances and out-of-range thresholds are handled before compilation.

The size column (``Basic $N_\Delta$'') reports the number of simplex coordinates before problem-specific simplification. For generic binary-linear routes, $p$ is the number of binary variables after slack and bounded-integer encoding, and $M_2$ is the number of nonzero pair terms in the resulting zero-penalty QUBO; the bound $2p+4M_2$ follows from \eqref{eq:bilpsize}. The column gives $p$ for rows 7, 8, 16, 19 and 19$'$, where the encoding fixes it; for rows 2, 6 and 15$'$ it depends on the instance's coefficient ranges through \eqref{eq:slackrange}. A row over $q$ variables contributes up to $\binom q2$ pair terms, so a single cardinality budget $\sum y\le k$ already makes $M_2=\Theta(q^2)$. These generic bounds are polynomial but are not necessarily optimal.

\begin{landscape}
\begingroup
\setlength{\tabcolsep}{3pt}
\renewcommand{\arraystretch}{1.18}
\setlength{\LTleft}{0pt}
\setlength{\LTright}{0pt plus 1fill}
\setlength{\LTcapwidth}{\linewidth}
\captionsetup{width=\linewidth}
\fontsize{9}{11}\selectfont
\begin{longtable}{@{}
  >{\raggedright\arraybackslash}p{0.035\linewidth}
  >{\raggedright\arraybackslash}p{0.133\linewidth}
  >{\raggedright\arraybackslash}p{0.194\linewidth}
  >{\raggedright\arraybackslash}p{0.152\linewidth}
  >{\raggedright\arraybackslash}p{0.272\linewidth}
  >{\raggedright\arraybackslash}p{0.164\linewidth}
@{}}
\caption{StQP compilation routes for Karp's 21 problems. Rows 15 and 19 use Karp's original definitions; primed rows give additional variants. Route labels, size conventions, and constant recovery are defined in the preceding text.}
\label{tab:karp21}\\
\toprule
\# & Problem & Preferred StQP route & Basic $N_\Delta$
& Exact decision or recovery rule & Main representation issue \\
\midrule
\endfirsthead
\multicolumn{6}{@{}l}{\textbf{Table \thetable\ -- continued}}\\[3pt]
\toprule
\# & Problem & Preferred StQP route & Basic $N_\Delta$
& Exact decision or recovery rule & Main representation issue \\
\midrule
\endhead
\midrule
\multicolumn{6}{r@{}}{Continued on next page}\\
\endfoot
\bottomrule
\endlastfoot
1 & Satisfiability & Clause--literal compatibility graph, Construction~\ref{con:satgraph} & $N_{\rm lit}$, the total number of literal occurrences & SAT iff $\nu(\widehat Q)=1/m$, where $m\ge1$ is the number of clauses. & At the target value, the support selects one consistent literal per clause; local supports may omit clauses. \\[4pt]

2 & 0--1 integer programming & BILP residual squares, \eqref{eq:bilppenalty}, then factor-state StQP & $\le 2p+4M_2$ & Feasible iff $P^\star=0$, equivalently $E^\star=-c$; restore the omitted constant using \eqref{eq:catalog-constant}. & Slack bit length and pair terms generated by squaring residuals. \\[4pt]

3 & Clique & Direct graph compiler on $G$ & $n=|V|$ & $\omega(G)=1/\nu(\widehat Q(G))$. A clique of size at least $k$ exists iff $\nu\le1/k$. & Every maximal clique is a strict local minimum. \\[4pt]

4 & Set packing & Disjointness graph on candidate sets & $m$ candidate sets & Maximum packing cardinality is $1/\nu(\widehat Q)$. & Pairwise intersection tests; positive packing weights require the weighted formulation. \\[4pt]

5 & Vertex cover & Independent set in $G$, equivalently clique in $\overline G$ & $n$ & A cover of size at most $k<n$ exists iff $\nu(\widehat Q(\overline G))\le1/(n-k)$. & Complementation changes the interaction pattern; the decoder returns the complement of an independent set. \\[4pt]

6 & Set covering & BILP constraints \eqref{eq:setcoverbilp} & $\le 2p+4M_2$ & A cover using at most $k$ candidate sets exists iff $P^\star=0$. & High-incidence coverage constraints and the budget row generate pair interactions. \\[4pt]

7 & Feedback node set & Vertex-deletion bits and a binary-encoded topological order, \eqref{eq:feedbackvertex} & $\le 2p+4M_2$; $p=O((n+|E|)\log(n+1))$ & At most $k$ vertices can be deleted to leave a directed acyclic graph iff $P^\star=0$. & Order and slack bits; order coefficients scale with $n$. The deletion budget can generate dense pairs. \\[4pt]

8 & Feedback arc set & Arc-deletion bits and a binary-encoded topological order, \eqref{eq:feedbackarc} & $\le 2p+4M_2$; $p=O((n+|E|)\log(n+1))$ & At most $k$ arcs can be deleted to leave a directed acyclic graph iff $P^\star=0$. & Order encoding and slack bits; the arc budget can couple every pair of deletion variables. \\[4pt]

9 & Directed Hamiltonian circuit & Position--vertex compatibility graph & $n^2$ & A Hamiltonian circuit exists iff $\nu(\widehat Q)=1/n$. & Conflicts enforce distinct positions, distinct vertices, and adjacency in the required direction. \\[4pt]

10 & Undirected Hamiltonian circuit & Position--vertex compatibility graph & $n^2$ & A Hamiltonian circuit exists iff $\nu(\widehat Q)=1/n$. & The same position lift, with undirected adjacency constraints. \\[4pt]

11 & 3-SAT & Clause--literal compatibility graph & $N_{\rm lit}\le3m$ & Satisfiable iff $\nu(\widehat Q)=1/m$. & For $m\ge2$, the adjacent target levels are separated by $1/[m(m-1)]$. \\[4pt]

12 & Chromatic number & Variable--color graph for the decision $\chi(G)\le k$ & $nk$; nontrivial $k<n$ & $k$-colorable iff $\nu(\widehat Q)=1/n$. & Color-label symmetry; finding $\chi(G)$ requires testing candidate values of $k$. \\[4pt]

13 & Clique cover & $k$-colorability of $\overline G$ & $nk$; nontrivial $k<n$ & A cover by at most $k$ cliques exists iff $\nu(\widehat Q)=1/n$. & Complement density and permutation symmetry among cover labels. \\[4pt]

14 & Exact cover & Weighted disjointness graph with $w_j=|S_j|$ & $m$ nonempty candidate sets & Exact cover iff $\omega_w=|U|$, equivalently $\nu(Q_{\tau,w})=\tau w_{\min}/|U|$. & Diagonal coefficients contain reciprocals of set sizes. \\[4pt]

15 & Hitting set: $|W\cap U_i|=1$ (Karp) & Weighted element graph; $w_s=|\{i:s\in U_i\}|$ & $\le |S|$ retained elements & Such a set $W$ exists iff $\omega_w=m$, equivalently $\nu(Q_{\tau,w})=\tau w_{\min}/m$. & Elements sharing a set conflict. Discard elements of zero incidence; weights are incidence degrees. \\[4pt]

$15'$ & Hitting set: at most $k$ elements (Garey--Johnson) & Coverage/budget BILP, \eqref{eq:hittingbilp} & $\le 2p+4M_2$ & A set of at most $k$ elements meeting every member at least once exists iff $P^\star=0$. & Large incidence rows. A full-table factor lift can be large when incidence is unbounded. \\[4pt]

16 & Steiner tree & Flow BILP, \eqref{eq:steinerlink}--\eqref{eq:steinercost} & $\le 2p+4M_2$; $p=O(|E|(1+\log|R|)+\log(K+1))$ & For nonnegative edge costs, a tree spanning $R$ with cost at most $K$ exists iff $P^\star=0$. & Flow and capacity-slack bits, conservation equations, and a cost-budget row. \\[4pt]

17 & 3-dimensional matching & Disjointness graph on triples & $|\mathcal T|$ & For $|X|=|Y|=|Z|=q$, a perfect matching exists iff $\nu(\widehat Q)=1/q$. & Overlapping triples generate conflicts; the support identifies the selected triples. \\[4pt]

18 & Knapsack: target-sum form (Karp) & Square residual \eqref{eq:subsetsum}, then QUBO compiler & $\le2n^2$ & Target sum $t$ is achievable iff $P^\star=0$. Here $c=t^2$, so $E^\star=-t^2$. & Coefficient magnitudes depend on item values and target; pair interactions can be dense. \\[4pt]

19 & Job sequencing: one machine, deadlines and penalties & Earliest-deadline BILP in \Cref{subsec:bilp-constructions} & $\le2p+4M_2$; $\begin{gathered}p=n+\sum_j\ell_j+\ell_k,\\ \ell_j=\lceil\log_2(D_j+1)\rceil,\\ \ell_k=\lceil\log_2(k+1)\rceil\end{gathered}$ & An order with total late-job penalty at most $k$ exists iff $P^\star=0$. & Processing times, deadlines, and late penalties affect bit counts and coefficients. \\[4pt]

$19'$ & Multiprocessor makespan (Lucas) & Assignment/load BILP in \Cref{subsec:bilp-constructions} & $\le2p+4M_2$; $\begin{gathered}p=nM+M\ell_T,\\ \ell_T=\lceil\log_2(T+1)\rceil\end{gathered}$ & A schedule on $M$ machines with makespan at most $T$ exists iff $P^\star=0$. & Machine-label symmetry, numerical job lengths, and load-row pair expansion. \\[4pt]

20 & Partition & Square residual \eqref{eq:partition}, then QUBO compiler & $\le2n^2$ & With $A=\sum_i a_i$, equal partition exists iff $P^\star=0$. Here $c=A^2$, so $E^\star=-A^2$. & Offset choices, weight ratios, and attainable energy spacings determine the numerical burden. \\[4pt]

21 & Max cut & Four local states per edge, \eqref{eq:maxcutweights} & $4|E|$ & $\operatorname{MaxCut}_c(G)=\tau L/\nu(Q)-L|E|$, with $L>0$. & Reciprocal recovery, offset choice, and repeated local copies of vertex labels. \\[4pt]

\end{longtable}
\endgroup
\end{landscape}

\subsection{Direct versus generic routes}

The table shows four distinct regimes.

\paragraph{Graph-native problems.}
Clique, independent set, vertex cover, set packing, three-dimensional matching, coloring, clique cover, Hamiltonian cycle, and satisfiability admit direct unweighted compatibility graphs. Their matrices use coefficients in $\{0,1,2\}$, and an optimal support encodes a maximum compatible selection. For feasibility problems, a satisfying assignment is obtained when that support reaches the specified target size.

\paragraph{Weighted graph-native problems.}
Exact cover, Karp's hitting set, and weighted packing are naturally maximum-weight clique problems. Their off-diagonal entries remain binary, while weights enter through reciprocal diagonal terms. Max-Cut reaches a weighted clique after lifting to local edge configurations; this is the factor-state construction, so \Cref{cor:allassignments} applies to row~21.

\paragraph{Globally constrained problems.}
Set cover, the Garey--Johnson hitting set, feedback sets, Steiner tree, and general 0--1 feasibility involve collective constraints not captured by pairwise compatibility among the original objects. The binary-linear route is polynomial in input bit length. A local-state route is polynomial in explicit table size but can expand succinct high-arity constraints exponentially. Either representation may be too large or numerically demanding for a particular device.

\paragraph{Numerical problems.}
Knapsack, Partition, and the job-sequencing variants carry numerical data into the residual-square coefficients and factor weights. Their precision costs depend on attainable energy spacings, common scaling, and offset choices, as illustrated in \Cref{subsec:bilpprecision}.

\paragraph{Exactness of the routes.}
The direct graph routes recover maximum compatible selections and, when the target threshold is attained, solutions of the original feasibility problem. Their local minima correspond exactly to maximal cliques of the compatibility graph; this need not preserve a different local-move neighborhood on the original variables. The factor-state routes are decision-, value-, and solution-exact for the compiled objective: \Cref{thm:factorcompiler} provides both reciprocal value recovery and an optimal-assignment decoder. On a residual-square feasibility route, restore the omitted constant and require $P^\star=0$ to obtain a feasible witness. These routes do not generally preserve the original discrete local-search landscape, because every assignment becomes a strict local minimum. Approximation guarantees also require a specified objective normalization and decoder; see \Cref{subsec:approximation-theory}.

\subsection{Threshold normalization}

For direct graph formulations, a decision problem typically takes the form $\omega(H)\ge K$. With $Q_\tau$,
\begin{equation}
  \omega(H)\ge K
  \quad\Longleftrightarrow\quad
  \nu(Q_\tau(H))\le\frac{\tau}{K}.
  \label{eq:genericthreshold}
\end{equation}
If the construction guarantees $\omega(H)\le K$, as in SAT, coloring, or Hamiltonian cycle, feasibility is equivalent to the equality $\nu=\tau/K$. Integer scaling at $\tau=1/2$ gives threshold $1/K$.

For weighted formulations, a target $\omega_w(H)\ge W$ is equivalent to
\begin{equation}
  \nu(Q_{\tau,w}(H))\le\frac{\tau w_{\min}}{W}.
  \label{eq:weightedthreshold}
\end{equation}
These formulas provide a direct recovery rule and illustrate the reciprocal compression of large combinatorial values.

\section{Representation cost and comparison with QUBO/Ising}
\label{sec:cost}

Both StQP and QUBO are universal under polynomial reductions, so the relevant comparison is how a particular representation uses the resources available to a solver or physical platform.

\subsection{State semantics}

A QUBO state is a vertex of a binary hypercube. An StQP state is a point of a simplex. In the direct graph compiler, the discrete solution is encoded by the support of $x$, while the normalized magnitudes are determined analytically. In the factor-state compiler, a selected local state receives mass proportional to its positive vertex weight. These representations produce different failure modes:

\begin{itemize}
\item a QUBO device can violate a logical constraint when penalty weights are insufficient;
\item a graph-coded StQP state is always normalized, but an approximate solution may assign small mass to incompatible coordinates;
\item QUBO readout is discrete but may require chain repair after embedding \citep{Choi2011};
\item StQP readout is continuous and requires feasibility and support tolerances.
\end{itemize}

\subsection{Coefficient structure}

The graph compiler has an unusually small coefficient alphabet:
\[
  \widehat q_{ij}\in\{0,1,2\}.
\]
There is no instance-dependent penalty coefficient. The magnitude two always marks incompatibility, zero marks compatibility, and one is the diagonal concentration term. This simplicity should not be confused with numerical robustness: the decision-value separation still decreases with the target support size.

The weighted compiler retains off-diagonal entries in $\{0,1\}$ but uses diagonal values $\tau w_{\min}/w_i$. Thus objective data are localized to onsite coefficients. The factor-state construction can therefore be viewed as a separation between:
\begin{equation}
  \begin{aligned}
  \text{factor-state consistency}&\longleftrightarrow\text{off-diagonal pattern},\\
  \text{objective and constraint penalties}&\longleftrightarrow\text{diagonal values}.
  \end{aligned}
  \label{eq:logicobjective}
\end{equation}

QUBO generally distributes both constraint and objective information across linear and pairwise coefficients. That can be more compact, especially when the original problem is already pairwise binary. Conversely, the direct graph compilers of \Cref{sec:graphcompiler,sec:csp} have no penalty scale to balance. Factor-compiled hard constraints do reintroduce a penalty scale (\Cref{rem:nopruning}); \Cref{rem:offsets} gives a sufficient choice. Fixed normalized conflict coefficients are also available for the exact-cover QUBO in \eqref{eq:exactcover-qubo} and the clique QUBO below.

\subsection{Connectivity and baseline removal}

Literal matrix density is not always the correct physical cost. As in \Cref{subsec:relative}, the dense baseline in $Q_\tau(G)$ may be removed, leaving the attractive gauge $\widetilde Q_\tau(G)$ of \eqref{eq:relativeQ}, which is sparse when $G$ is sparse; the repulsive gauge $Q_\tau(G)$ instead uses nonnegative interactions and highlights conflicts. A hardware report should state which gauge is implemented and whether the normalization circuitry implements a global all-to-all baseline at negligible marginal cost.

In factor-state graphs, nonzero off-diagonal blocks occur within each factor and between overlapping factors. This preserves bounded interaction degree when domain sizes, arity, and incidence are uniformly bounded. By contrast, a one-hot QUBO often introduces dense penalty blocks inside each assignment group and additional objective couplings. The one-hot baseline is itself a choice: domain-wall encoding uses $n(k-1)$ variables for $n$ variables with $k$ values, with a different penalty connectivity \citep{chancellor-berwald-domain-wall-2022,dominguez-encoding-independent-2023,codognet-comparing-qubo-models-2025}, so the counts in \Cref{tab:comparison} should be read against the QUBO encoding actually used. Neither representation dominates uniformly.

\subsection{Dimension}

\Cref{tab:comparison} summarizes typical state counts.

\begin{table}[t]
\centering
\small
\caption{Representative size behavior. Counts are formulation-level and exclude hardware embedding auxiliaries.}
\label{tab:comparison}
\begin{tabularx}{\linewidth}{>{\raggedright\arraybackslash}p{2.5cm}>{\raggedright\arraybackslash}p{3.1cm}>{\raggedright\arraybackslash}p{3.1cm}X}
\toprule
Problem class & Direct StQP coordinates & Typical binary variables before hardware embedding & Comment \\
\midrule
Clique / independent set & $n$ & $n$ & Same state count; different geometry; the fixed QUBO below has the same clique supports \\
3-SAT with $m$ clauses & at most $3m$ & original Boolean variables plus quadratization auxiliaries, when needed & Which is smaller depends on clause-to-variable ratio \\
$k$-coloring & $nk$ & $nk$ in one-hot QUBO & Similar lifted state count \\
Hamiltonian cycle & $n^2$ & $n^2$ in standard one-hot QUBO & Similar assignment lift; constraint encoding differs \\
Sparse QUBO with $n$ variables, $m$ pairs & at most $2n+4m$ & $n$ & Generic StQP compiler expands local configurations \\
Set packing with $m$ candidate sets & $m$ & $m$ & Direct compatibility is equally compact \\
Exact cover with $m$ candidate sets & $m$ & $m$ & Same candidate count and conflict graph; compare normalized unary weights with reciprocal diagonal weights; see \eqref{eq:exactcover-qubo} \\
Max-Cut & $4|E|$ by factor-state route & $|V|$ & QUBO/Ising is normally much more compact \\
\bottomrule
\end{tabularx}
\end{table}

The remaining metrics of \Cref{subsec:metrics} depend on both the route and the chosen gauge. For the literal integer graph matrix, $\mathcal A_Q\subseteq\{0,1,2\}$ and $\kappa_{\mathrm{coef}}\le2$, with equality in the latter when a nonedge exists. The nonzero count is $M_Q=N_\Delta+\#\{\text{nonedges}\}$; after baseline removal it is $N_\Delta+|E|$. Factor-state matrices satisfy $\mathcal A_Q\subseteq\{0,1\}\cup\{\tau w_{\min}/w_i\}$. Their diagonal range is $w_{\max}/w_{\min}$, and their full nonzero range is $w_{\max}/(\tau w_{\min})$ when a unit nonedge entry exists; otherwise it is $w_{\max}/w_{\min}$. Equation~\eqref{eq:factorcouplings} bounds $M_Q$, while $d_Q$ depends on local-state counts and factor overlaps.

StQP is therefore not a general compression of QUBO. The basic factor-state compiler expands local configurations, whereas clique, set packing, and exact cover admit direct simplex formulations with one coordinate per original candidate.

\subsection{Precision and dynamic range}

The integer graph compiler has a fixed coefficient alphabet, but distinguishing clique sizes $k$ and $k-1$ by their optimal values requires resolving a difference of order $k^{-2}$. For weighted matrices, the diagonal range is $w_{\max}/w_{\min}$; the full coefficient range also depends on the presence of unit nonedge entries. Reciprocal recovery compresses assignment values. The local-value and decision-class bounds of \Cref{subsec:objective-separation} quantify this compression, with refinements when energy spacings are known. A hardware comparison should report:
\begin{enumerate}[label=(\roman*)]
\item programmed coefficient bit depth and realized coefficient error;
\item objective-estimation error or sample variance;
\item feasibility error $|\one^{\mathsf T}x-1|+\norm{x_-}_1$;
\item support-recovery threshold and stability;
\item the relevant decision or local-value separation, identifying whether it is measured exactly or bounded conservatively.
\end{enumerate}
An accurate support can remain useful even when objective readout is insufficient for a threshold decision. Conversely, an accurate scalar objective does not guarantee clean support recovery. Both outputs should be retained.

\subsection{Landscape and local solvers}
\label{subsec:landscape-comparison}

The regularized graph compiler makes landscape complexity explicit: maximal cliques are strict local minima, and by \Cref{thm:weightedlandscape} the same holds with weights. In QUBO, penalty choices can create additional infeasible local minima whose interpretation is less direct \citep{verma-lewis-penalty-2022}. The comparison is not symmetric, however. On any factor-compiled instance every assignment is a strict local minimizer (\Cref{cor:allassignments}), whereas the local minima of the corresponding QUBO under single-bit flips are only the $1$-flip-stable assignments, a subset. On these routes the StQP landscape therefore filters less, not more. Together the two facts suggest a classification of traps:

\begin{description}[leftmargin=2.8cm,style=nextline]
\item[Feasibility traps] local solutions that violate the intended discrete constraints;
\item[Combinatorial traps] valid but suboptimal discrete solutions, such as maximal nonmaximum cliques;
\item[Continuous traps] local minima whose support does not represent a compatible discrete selection;
\item[Embedding traps] artifacts introduced by hardware connectivity or auxiliary chains.
\end{description}

For $0<\tau<1$, the graph construction excludes nonclique \emph{local minima}, but not nonclique stationary points. For the two-vertex edgeless graph, $Q=\left(\begin{smallmatrix}\tau&1\\1&\tau\end{smallmatrix}\right)$ and $x=(1/2,1/2)$ satisfy the simplex KKT conditions, yet $x$ is a strict maximum. Thus stationarity alone does not imply a clique-supported solution. On complete factor-state instances every local minimum represents an assignment, but infeasible and suboptimal assignments remain local minima by \Cref{cor:allassignments}.

\paragraph{A fixed-coefficient QUBO baseline.}
For unweighted clique, the standard conflict-penalty construction, equivalently independent set in the complement graph \citep[Section~4.2]{Lucas2014}, permits
\begin{equation}
  H(z)=-\sum_i z_i+2\sum_{\substack{i<j\\\{i,j\}\notin E}}z_i z_j,
  \qquad z\in\{0,1\}^n.
  \label{eq:cliquequbo-baseline}
\end{equation}
Deleting a selected vertex with $d\ge1$ selected nonneighbors changes $H$ by $1-2d<0$. A nonmaximal clique admits an addition of cost $-1$, whereas every single-bit move from a maximal clique increases $H$. Hence the global minima encode maximum cliques, and the strict single-bit local minima encode exactly maximal cliques. The polynomial coefficients are fixed: $-1$ on unary terms and $2$ on nonedge terms. Fixed coefficients and meaningful clique local minima therefore do not distinguish StQP from every QUBO formulation. The relevant differences include continuous geometry, coefficient signs, readout precision, and the implementation of normalization and interactions.

\subsection{Decoding and verification}

A direct graph-coded StQP candidate can be processed as follows:
\begin{enumerate}[label=(\arabic*)]
\item project or renormalize to the simplex if needed;
\item rank coordinates by mass;
\item extract a compatible support using the graph;
\item greedily extend it to a maximal clique;
\item evaluate the resulting discrete objective;
\item compare against a certified upper or lower bound when available.
\end{enumerate}
This ranking-based procedure is heuristic. The objective-nonincreasing purification in \Cref{app:decoding}, followed by face minimization and extension, is the decoder required for the guarantees in \Cref{subsec:approximation-distortion,subsec:approximation-theory}; first-order and active-set methods for StQP provide the continuous half \citep{bomze-rinaldi-zeffiro-activeset-fw-2020,beretta-torcinovich-pelillo-kkt-2023}. The continuous vector therefore also serves as a heuristic score over local choices. In weighted factor-state models, consistency repair combines local states that agree on shared variables and then locally reoptimizes conflicting factors.

\subsection{Summary of representation costs}

\begin{thesisbox}
The basic factor-state compiler expands an existing pairwise binary objective into local configurations; it can therefore be substantially larger than a direct QUBO, as in the Max-Cut example. For clique and set packing, direct StQP and binary formulations use the same number of coordinates or variables. Both can have fixed coefficients and maximal-clique local minima, as \eqref{eq:cliquequbo-baseline} shows. For exact cover, the weighted independent-set QUBO in \eqref{eq:exactcover-qubo} also avoids elementwise equality penalties and uses the same candidate count and conflict graph.

The case for StQP is its alignment with a normalized, nonnegative native state and its explicit support-based decoding. Whether that alignment offsets continuous precision, readout, and search costs depends on the device and instance family. Comparisons should use strong problem-specific binary formulations and report dimension, connectivity, coefficient range, decision separation, decoding overhead, and end-to-end time-to-target. The formulations establish these comparison points; they do not select a universally superior model.
\end{thesisbox}

\section{A hardware-neutral simplex quadratic optimizer/annealer}
\label{sec:hardware}

Analog and oscillator-based platforms for continuous optimization are advancing quickly \citep{kalinin-aoc-nature-2025,cai-oscillator-np-2025}. The formulations motivate an abstract hardware interface. We call a device satisfying this interface an StQA (\Cref{sec:intro}).

\subsection{Input contract}

The logical input is a symmetric matrix $Q\in\mathbb S^N$, represented by:
\begin{itemize}
\item diagonal onsite terms $Q_{ii}$;
\item programmable relative pair interactions $Q_{ij}$;
\item a declared global scaling and additive simplex gauge;
\item coefficient precision, range, and sign restrictions.
\end{itemize}
For graph-coded NP instances, a minimal device mode needs only diagonal values and binary compatibility or incompatibility couplings. Weighted factor instances require programmable positive diagonal ratios.

\subsection{State contract}

The physical state must map to
\begin{equation}
  x_i\ge0,
  \qquad
  \sum_{i=1}^N x_i=1.
  \label{eq:statecontract}
\end{equation}
Normalization may be enforced exactly by a conservation law or approximately by feedback. If only approximate, the device should report the unnormalized state so that feasibility correction is accounted for as postprocessing rather than hidden inside the solver.

A scaled simplex $\sum_i x_i=R$ is equivalent after $x=Ry$, but it rescales the quadratic energy by $R^2$. Hardware comparisons must distinguish logical normalization from physical amplitude.

\subsection{Energy contract}

The effective Lyapunov or energy function should be proportional, up to a simplex-constant shift, to
\begin{equation}
  E_Q(x)=x^{\mathsf T}Qx.
  \label{eq:energycontract}
\end{equation}
Pairwise physical interactions may realize $Q$ directly, realize a relative gauge $Q-\beta\one\one^{\mathsf T}$, or implement a factorized approximation. The realized matrix $Q^{\mathrm{real}}$ should be measured or calibrated, not merely inferred from requested settings.

\subsection{Search contract}

The local minima characterized in \Cref{subsec:landscape-comparison} motivate search mechanisms beyond monotone local descent. Candidate mechanisms include:
\begin{itemize}
\item controlled noise or temperature schedules;
\item nonlocal support moves;
\item multistart and restart policies;
\item parameter continuation in $\tau$ or in coupling strength, the continuous analogue of annealed replication \citep{bomze-annealed-replication-2002,baum-eagon-1967};
\item parallel replicas and exchange;
\item hybrid branch-and-bound or cutting-plane control;
\item a mechanism that produces certified lower bounds.
\end{itemize}
None of these mechanisms is analyzed here as a global solution algorithm; its success rate and certification capability require separate evaluation.

\subsection{Output contract}

At minimum, a run should return:
\begin{equation}
  x^{\mathrm{raw}},\quad
  t_{\mathrm{device}},\quad
  \text{programmed }Q,\quad
  \text{estimated realized }Q^{\mathrm{real}},
  \label{eq:outputcontract}
\end{equation}
plus all random seeds or control traces needed for reproducibility. Derived quantities should include
\begin{align}
  r_{\mathrm{feas}}&=|\one^{\mathsf T}x-1|+\norm{x_-}_1,\label{eq:feasres}\\
  r_{\mathrm{PG}}(x;\eta)&=
  \frac{\norm{x-\Pi_{\Delta_N}(x-\eta\nabla E_Q(x))}}{\eta},\label{eq:pgres}\\
  E_{\mathrm{logical}}&=x^{\mathsf T}Qx,\qquad
  E_{\mathrm{realized}}=x^{\mathsf T}Q^{\mathrm{real}}x.\label{eq:twoenergies}
\end{align}
Here $(x_-)_i=\max\{-x_i,0\}$, $\Pi_{\Delta_N}$ is Euclidean projection onto the simplex, and the residual uses a positive step size $\eta$ and the Euclidean norm. For feasible $x$, $r_{\mathrm{PG}}=0$ certifies first-order stationarity only, not local or global minimality (\Cref{subsec:landscape-comparison}). For direct graph instances, report decoded clique size and whether the support is maximal. For factor instances, report assignment consistency and original objective value.

\subsection{Timing and energy accounting}

Results should separate:
\begin{enumerate}[label=(\roman*)]
\item matrix compilation;
\item device programming and calibration;
\item physical evolution time;
\item repeated sampling or restarts;
\item readout;
\item feasibility and support repair;
\item classical verification or certification.
\end{enumerate}
Both native-device and end-to-end times are useful; neither should be reported in place of the other without clarification. Similar care is already necessary for Ising and QUBO platforms, where embedding and chain repair can dominate the logical solve \citep{koenz-embedding-overhead-2021,pelofske-three-generations-2023,tasseff-emerging-potential-qa-2024}.

\subsection{A minimal graph-coded device mode}

The integer-scaled graph family suggests a sharply defined first hardware target:
\begin{equation}
  \min_{x\in\Delta_N}
  \left(
  \sum_i x_i^2
  +4\sum_{\{i,j\}\notin E,\ i<j}x_ix_j
  \right).
  \label{eq:minimaldevice}
\end{equation}
The expression follows from $x^{\mathsf T}\Qhat x$ because each off-diagonal matrix entry contributes twice. A device needs one common diagonal strength and one common repulsive strength on programmable conflict pairs. Its global optimum is $1/\omega(G)$. This family is restrictive enough for clean physical experiments and expressive enough to remain NP-complete.

\begin{caution}[No automatic advantage]
A fixed-alphabet NP-complete input family is an attractive hardware target because it simplifies programming. It does not remove NP-hardness or guarantee efficient global search. Claims of advantage require end-to-end comparison on declared instance distributions, including precision and success probability.
\end{caution}

\subsection{Evaluation metrics}

For stochastic or analog solvers, report time-to-target at several target levels rather than only best-of-run objective. For direct graph instances, natural targets are:
\begin{equation}
  \text{find a clique of size at least }k,
  \qquad
  \text{find a maximum clique},
  \qquad
  \text{certify maximum size}.
  \label{eq:targets}
\end{equation}
These are increasingly demanding and should not be conflated. A solver may rapidly find large maximal cliques but rarely find or certify the maximum.

For independent runs of equal duration with success probability $0<p_s<1$, a time-to-solution estimate at confidence $1-\delta$, where $0<\delta<1$, is \citep{mohseni-mcmahon-byrnes-ising-machines-2022,tasseff-emerging-potential-qa-2024}:
\begin{equation}
  T_{1-\delta}
  =t_{\mathrm{run}}
  \max\set{1,\left\lceil\frac{\log\delta}{\log(1-p_s)}\right\rceil},
  \label{eq:tts}
\end{equation}
For $p_s=1$ one run suffices; for $p_s=0$ no finite number of runs attains positive success confidence. Programming, readout, and postprocessing costs should be reported separately.

\section{Limitations and open research questions}
\label{sec:limitations}

\subsection{Universality can hide poor formulations}

Composing a reduction to CLIQUE with the graph compiler proves universality but may destroy application structure. Even a polynomial expansion can be impractical. A useful formulation theory must therefore seek direct encodings, low-rank or sparse matrix representations, and problem-specific decoders rather than treating universal reducibility as the endpoint.

\subsection{Continuous precision is a computational resource}

A real-valued state should not be treated as carrying infinite information at zero cost. The exact graph compiler uses simple input coefficients, but its objective levels approach one another as solution size grows. Weighted compilers may introduce large diagonal dynamic ranges and reciprocal value recovery. Future work should characterize bit complexity, noise sensitivity, and finite-temperature behavior for each formulation family.

\subsection{Approximation theory}
\label{subsec:approximation-theory}

The direct clique formulations preserve multiplicative approximation
under objective-nonincreasing decoding. Indeed, let
$f(x)=x^{\mathsf T}Q_\tau(G)x$ and suppose a feasible point satisfies
$f(x)\le\rho\,\nu(Q_\tau(G))$ for some $\rho\ge1$.
Purification followed by minimization on the resulting clique face
produces a clique $C$ such that
\[
  \frac{\tau}{|C|}
  \le f(x)
  \le\rho\,\frac{\tau}{\omega(G)},
  \qquad\text{hence}\qquad
  |C|\ge\frac{\omega(G)}{\rho}.
\]
Conversely, the barycenter of a clique of size at least
$\omega(G)/\rho$ is a $\rho$-approximate StQP solution.
The weighted formulation gives the same statement with
$|C|$ and $\omega(G)$ replaced by $w(C)$ and $\omega_w(G)$.
Positive scaling, including the integer-scaled matrix $\widehat Q$,
preserves these ratios; adding a simplex-constant objective shift does
not generally preserve them.

Consequently, for every fixed $0<\varepsilon<1$, finding a feasible
$N^{1-\varepsilon}$-approximate solution of the restricted $N$-coordinate
StQP family with diagonal entries one and off-diagonal entries in
$\{0,2\}$ is NP-hard, by the corresponding maximum-clique
inapproximability result~\cite{zuckerman-extractors-2006}.
A polynomial-time algorithm with this guarantee would imply
$\mathrm P=\mathrm{NP}$. This statement concerns multiplicative error
relative to the positive minimum value in the specified normalization.

A different guarantee is available for general StQP. Bomze and de
Klerk~\cite{BomzeDeKlerk2002} establish a polynomial-time approximation
scheme in objective-range-normalized additive error. For rational input data and every fixed
$\varepsilon>0$, a feasible point can be found in polynomial time with
\[
  f(x)-f_{\min}
  \le\varepsilon\bigl(f_{\max}-f_{\min}\bigr),
  \qquad
  f_{\min}:=\min_{y\in\Delta_N}f(y),\quad
  f_{\max}:=\max_{y\in\Delta_N}f(y).
\]
Here the dimension is variable, not fixed. The guarantee is invariant
under positive affine changes of the objective, but it does not bound
$f(x)/f_{\min}$ when the minimum is small compared with the objective
range. There is therefore no conflict with the clique hardness result.

For the direct graph formulation, an additive guarantee
$f(x)\le\tau/\omega(G)+\delta$, with $\delta\ge0$, gives
\[
  |C|\ge
  \frac{\omega(G)}{1+\delta\,\omega(G)/\tau}
\]
after the same decoding. Fixed additive error thus gives a progressively
weaker relative guarantee as the clique number grows. For factor-state
models, \Cref{subsec:approximation-distortion} instead bounds the decoded energy error through
$B-E^\star$; no uniform multiplicative guarantee for the original energy
follows without additional assumptions. Likewise, the binary-linear
feasibility route preserves exact decisions, but an approximate residual
minimum need not satisfy the original constraints. Approximation claims
must therefore specify the original objective, its normalization, and
the decoder; exact representability alone does not determine them.

\subsection{Local landscape beyond cliques}

The local-minimum correspondences in \Cref{thm:weightedlandscape,cor:allassignments} specify the admissible endpoints of local descent but do not determine their basins of attraction. The unresolved quantities include basin volumes under a declared initialization distribution, convergence times, the quality of decoded assignments, and their dependence on offsets and search dynamics. The existence of a local minimum for every assignment does not imply that all assignments are reached with equal probability.

\subsection{Alternative simplex lifts}
\label{subsec:alternative-lifts}

Relaxation labeling assigns a probability vector to each object and optimizes compatibility between labels \citep{Pelillo1997Relaxation}. With variables indexed by $i$ and labels $a\in D_i$, its state space has the form
\[
  \mathcal P=\left\{(p_{ia}):p_{ia}\ge0,\quad
       \sum_{a\in D_i}p_{ia}=1\ \text{for every }i\right\}
       =\prod_i\Delta_{|D_i|}.
\]
Quadratic optimization over products of simplexes is a related established class \citep{bomze-schachinger-multistqp-2010}. Each block has a prescribed mass, whereas the single-simplex constructions here allow mass to move between blocks. Rescaling $x_{ia}=p_{ia}/n$ embeds $\mathcal P$ into a single simplex only with the additional constraints $\sum_a x_{ia}=1/n$. Dropping those constraints changes the feasible set. Consequently the maximal-clique and every-assignment local-minimum results do not automatically transfer to product-simplex formulations.

Other representations use marginal polytopes, low-rank factorizations, or entropy regularization. A useful comparison must account for both the number of local states and the cost of enforcing multiple normalization constraints.

\subsection{Global certification}

StQP has exact copositive and completely positive reformulations \citep{BomzeDurEtAl2000,BomzeDeKlerk2002,Burer2009}, summarized in \Cref{app:copositive}. This suggests hybrid architectures in which an unconventional device supplies primal candidates while classical conic, semidefinite, linear, or branch-and-bound methods supply lower bounds. For graph-coded instances, combinatorial bounds on clique number offer another certification path \citep{PardalosXue1994,WuHao2015,DeKlerkPasechnik2002}, and exact StQP solvers based on mixed-integer and conic reformulations give a second \citep{yildirim-milp-stqp-2021,bomze-locatelli-tardella-bounds-2008,duer-rendl-conic-survey-2021,scozzari-tardella-clique-algorithm-2008}. The division of labor between candidate generation and certification may be more realistic than expecting a physical relaxer to certify globality internally.

\subsection{Formulation search as an optimization problem}

An instance may admit many equivalent matrices related by scaling, gauge shifts, variable duplication, factor grouping, and alternative reduction routes. Selecting a formulation for a given device can itself be posed as a multiobjective optimization problem over:
\[
  N_\Delta,
  \quad M_Q,
  \quad d_Q,
  \quad \kappa_{\mathrm{coef}},
  \quad \text{decision gap},
  \quad \text{decoder robustness}.
\]
Automated formulation search is a promising counterpart to compiler optimization in conventional computing.

\subsection{Empirical question}
\label{subsec:agenda}

This paper provides formulations, proofs, and representation-cost analysis. It does not report solver-performance experiments, basin measurements, or hardware benchmarks. A numerical study could compare projected-gradient or replicator dynamics with random restarts on direct graph and factor-state instances, reporting decoded objective values, success probabilities, and sensitivity to offsets. Comparisons with problem-specific binary solvers would then test whether the continuous representation offers a practical benefit for the chosen instance family.

\section{Conclusion}
\label{sec:conclusion}

The catalog expresses discrete problems as quadratic optimization over one normalized, nonnegative state, using established clique formulations and compatibility graphs. It records explicit matrices, dimensions, thresholds, and decoding rules for direct graph routes and for generic factor-state routes. Twelve of Karp's 21 problems use direct graph formulations; the other nine use factor-state compilation, including six through binary-linear feasibility.

The representation analysis identifies the cost of these choices. Direct graph routes have simple coefficients and support-based decoding. Generic factor-state compilation can expand dimension and creates a strict local minimum for every complete assignment. Reciprocal objective recovery introduces a dependence on weights, offsets, and attainable energy spacings. The resulting separation bounds and decoding guarantees specify accuracy conditions under the stated assumptions.

Strong binary formulations can match the candidate count and use fixed normalized conflict coefficients for clique, packing, and exact cover. The practical comparison therefore depends on the implementation of normalization and interactions, continuous precision, search dynamics, and readout. Establishing performance for any device requires measurements beyond the exact formulations developed here.

\appendix

\section{Compilation procedures}
\label{app:pseudocode}

\begin{algorithm}[H]
\caption{Three-level graph-to-StQP compiler}
\label{alg:graphcompiler}
\noindent\textbf{Input:} A nonempty simple undirected graph $G=(V,E)$ and $0<\tau<1$.\\
\textbf{Output:} A symmetric matrix $Q\in\mathbb S^{|V|}$.
\begin{enumerate}[label=\arabic*.,leftmargin=2em]
\item Set $Q_{ii}=\tau$ for every $i\in V$.
\item For each unordered pair $\{i,j\}$, set $Q_{ij}=Q_{ji}=0$ if it is an edge, and set $Q_{ij}=Q_{ji}=1$ otherwise.
\item Return $Q$.
\end{enumerate}
\end{algorithm}

\begin{algorithm}[H]
\caption{Clause--literal compatibility graph}
\label{alg:satcompiler}
\noindent\textbf{Input:} Nonempty CNF clauses $C_1,\ldots,C_m$, with $m\ge1$.\\
\textbf{Output:} A clause--literal compatibility graph $H$.
\begin{enumerate}[label=\arabic*.,leftmargin=2em]
\item Create one vertex $v_{r,\ell}$ for every literal occurrence $\ell\in C_r$.
\item Join $v_{r,\ell}$ to $v_{s,\ell'}$ exactly when $r<s$ and $\ell'$ is not the complement of $\ell$.
\item Return $H$.
\end{enumerate}
\end{algorithm}

\begin{algorithm}[H]
\caption{Factor-state weighted StQP compiler}
\label{alg:factorcompiler}
\noindent\textbf{Input:} Nonempty finite domains $D_i$, a nonempty collection of complete factor tables $\phi_\alpha$ on scopes $S_\alpha$, offsets $B_\alpha>\max_a\phi_\alpha(a)$, and $0<\tau<1$.\\
\textbf{Output:} A matrix $Q$ and recovery constants $(B,w_{\min},\tau)$.
\begin{enumerate}[label=\arabic*.,leftmargin=2em]
\item For every factor $\alpha$ and assignment $a\in\prod_{i\in S_\alpha}D_i$, create a vertex $v=(\alpha,a)$ with weight $w_v=B_\alpha-\phi_\alpha(a)$.
\item Mark distinct vertices $(\alpha,a)$ and $(\beta,b)$ compatible exactly when $\alpha\ne\beta$ and $a,b$ agree on $S_\alpha\cap S_\beta$.
\item Compute $w_{\min}=\min_v w_v$ and set $Q_{vv}=\tau w_{\min}/w_v$.
\item For $u\ne v$, set $Q_{uv}=0$ on compatible pairs and $Q_{uv}=1$ otherwise.
\item Return $Q$, $B=\sum_\alpha B_\alpha$, $w_{\min}$, and $\tau$.
\end{enumerate}
\end{algorithm}

\section{Detailed decoding and purification}
\label{app:decoding}

\subsection{Purifying an approximate graph solution}

Given $x\in\Delta_n$ and graph $G$, a simple decoder orders vertices by decreasing $x_i$ and greedily builds a clique. A more objective-aware purification repeatedly identifies a nonedge $\{i,j\}$ in the positive support and moves all mass from one endpoint to the other, choosing the endpoint move with smaller StQP objective. By the proof of \Cref{thm:graphcompiler}, this never increases the objective and terminates at clique support in at most $n-1$ eliminations.

Once a clique $C$ is obtained, replace its masses by the exact face minimizer:
\begin{equation}
  x_i\leftarrow\frac1{|C|}\quad(i\in C)
  \label{eq:unweightedrepair}
\end{equation}
for the unweighted compiler, or
\begin{equation}
  x_i\leftarrow\frac{w_i}{\sum_{j\in C}w_j}\quad(i\in C)
  \label{eq:weightedrepair}
\end{equation}
for the weighted compiler. A greedy extension can then make $C$ maximal. None of these operations certifies maximum weight.

\subsection{Factor-state decoding}

At any clique-supported local minimizer, each factor contributes exactly one local state (\Cref{cor:allassignments}), so the selected states agree on overlaps and merge into a complete assignment; only its value distinguishes a global minimizer from the rest. For an approximate nonclique support, one may:
\begin{enumerate}[label=(\roman*)]
\item select the largest-mass state of each factor;
\item resolve variable disagreements by weighted voting;
\item perform local coordinate or factor updates in the original objective.
\end{enumerate}
This repair is heuristic and should be timed and reported separately.

\section{Explicit formulations for selected catalog entries}
\label{app:formulations}

\subsection{Independent set and vertex cover}

This is \Cref{subsec:cliqueis} in matrix form. For a graph $G$, let $A_{\overline G}$ be the adjacency matrix of its complement. Define
\[
  \Qhat_{\mathrm{IS}}=2\one\one^{\mathsf T}-2A_{\overline G}-I.
\]
Then
\[
  \min_{x\in\Delta_n}x^{\mathsf T}\Qhat_{\mathrm{IS}}x=\frac1{\alpha(G)}.
\]
If $x^\star$ has support $S$, then $S$ is a maximum independent set and $V\setminus S$ is a minimum vertex cover.

\subsection{Coloring matrix blocks}

Index StQP coordinates by $(v,c)$. The integer-scaled matrix has diagonal one. For distinct coordinates, the coefficient is zero exactly when they are compatible:
\begin{equation}
  \widehat q_{(v,c),(u,d)}=0
  \quad\Longleftrightarrow\quad
  v\ne u\ \text{and}\
  \bigl(\{u,v\}\notin E\ \text{or}\ c\ne d\bigr).
  \label{eq:colorblocks}
\end{equation}
All other off-diagonal coefficients equal two. The same-vertex $k\times k$ blocks are conflict cliques in the matrix, and an edge $\{u,v\}$ contributes conflicts only between equal-color pairs.

\subsection{Hamiltonian matrix blocks}

Coordinates are $(p,v)$. Same-position and same-vertex pairs receive coefficient two. For consecutive positions, a pair receives zero only if the corresponding graph edge exists. For nonconsecutive positions with different vertices, the coefficient is zero. This block structure permits procedural matrix evaluation without materializing all $n^4$ potential assignment pairs.

\subsection{Max-Cut local states}

For edge $e=\{i,j\}$, the four local states are $00,01,10,11$. Their weights are $L,L+c_e,L+c_e,L$. States on disjoint edges are always compatible. States on incident edges are incompatible exactly when they assign different labels to the shared endpoint. Thus each pair of incident edge factors contributes a sparse $4\times4$ conflict block.

\section{A note on copositive certification}
\label{app:copositive}

For any $Q\in\mathbb S^N$,
\begin{equation}
  \nu(Q)=\max\set{\lambda:Q-\lambda\one\one^{\mathsf T}\in\mathcal{COP}_N},
  \label{eq:copositive}
\end{equation}
where $\mathcal{COP}_N$ is the copositive cone \citep{BomzeDurEtAl2000,BomzeDeKlerk2002}. Indeed, $Q-\lambda\one\one^{\mathsf T}$ is copositive if and only if
\[
  y^{\mathsf T}Qy\ge\lambda(\one^{\mathsf T}y)^2
  \qquad\forall y\ge0.
\]
Normalizing any nonzero $y\ge0$ to $x=y/(\one^{\mathsf T}y)\in\Delta_N$ gives $x^{\mathsf T}Qx\ge\lambda$. This identity provides an exact global lower-bound certificate, although optimizing over the copositive cone is itself computationally hard. Tractable inner approximations, semidefinite hierarchies, and combinatorial bounds can nevertheless certify many instances.

For a graph-coded matrix $\Qhat(G)$, proving $\omega(G)\le k$ is equivalent to proving
\begin{equation}
  \Qhat(G)-\frac1k\one\one^{\mathsf T}\in\mathcal{COP}_n.
  \label{eq:cliquecopositive}
\end{equation}
Thus the formulation connects the physical energy minimum directly to a classical conic certificate.

\section{Notation summary}
\label{app:notation}

\begin{center}
\small
\begin{tabularx}{0.92\linewidth}{>{\raggedright\arraybackslash}p{3.0cm}X}
\toprule
symbol & meaning \\
\midrule
$\Delta_N$ & probability simplex $\{x\ge0:\one^{\mathsf T}x=1\}$ \\
$\nu(Q)$ & optimum of $\min_{x\in\Delta_N}x^{\mathsf T}Qx$ \\
$A_G$ & adjacency matrix of graph $G$ \\
$\omega(G)$ & maximum clique cardinality \\
$\omega_w(G)$ & maximum total vertex weight of a clique \\
$Q_\tau(G)$ & three-level graph StQP matrix with diagonal $\tau$ \\
$\Qhat(G)$ & integer-scaled graph matrix $2Q_{1/2}(G)$ \\
$x(C)$ & uniform barycenter of clique $C$ \\
$N_\Delta$ & number of simplex coordinates in an encoding \\
$M_Q$ & number of nonzero upper-triangular matrix entries \\
$E(z)$ & original finite-domain factor objective \\
$B_\alpha$ & factor offset satisfying \eqref{eq:Balpha} \\
$w_{\alpha,a}$ & factor-state vertex weight $B_\alpha-\phi_\alpha(a)$ \\
$P(z,y)$ & zero-penalty QUBO for binary-linear feasibility \\
\bottomrule
\end{tabularx}
\end{center}


\section*{Competing interests}
The authors are affiliated with Quantum Computing Inc. (QCi).

\renewcommand{\bibfont}{\small}
\setlength{\bibsep}{4pt plus 1pt minus 1pt}
\setlength{\emergencystretch}{1em}
\bibliographystyle{unsrtnat}
\bibliography{references}

\end{document}